\theoremstyle{thmstyleone}%
\newtheorem{theorem}{Theorem}
\theoremstyle{thmstyletwo}%
\newtheorem{remark}{Remark}%
\newtheorem{lemma}{Lemma}
\theoremstyle{thmstylethree}%
\newcommand{\bs}{\boldsymbol}
\newcommand{\ve}[1]{\mbox{\boldmath ${#1}$}}
\newcommand{\vesub}[2]{\mbox{{\boldmath ${#1}$}$_{#2}$}}
\newcommand{\vesup}[2]{\mbox{{\boldmath ${#1}$}$^{#2}$}}
\newcommand{\vess}[3]{\mbox{{\boldmath ${#1}$}$_{#2}^{#3}$}}
\newcommand{\tve}[1]{\tilde{\ve{#1}}}
\newcommand{\tvesub}[2]{\tilde{\ve{#1}}_{#2}}
\newcommand{\tvesup}[2]{\tilde{\ve{#1}}^{#2}}
\newcommand{\tvess}[3]{\tilde{\ve{#1}}_{#2}^{#3}}
\newcommand{\bve}[1]{\underline{\ve{#1}}}
\newcommand{\bvesup}[2]{\underline{\ve{#1}}^{#2}}
\newcommand{\hve}[1]{\hat{\ve{#1}}}
\newcommand{\hvesub}[2]{\hat{\ve{#1}}_{#2}}
\newcommand{\hvesup}[2]{\hat{\ve{#1}}^{#2}}
\newcommand{\hvess}[3]{\hat{\ve{#1}}_{#2}^{#3}}
\newcommand{\diag}{\mbox{diag}}
\newcommand{\tr}{\mbox{tr}}
\newcommand{\Var}{\mbox{Var}}
\newcommand{\Cov}{\mbox{Cov}}
\newcommand{\E}{\mbox{E}}
\newcommand{\argmax}{\arg\!\max}
\newcommand{\fall}{\mathbf{f}}
\newcommand{\gall}{\mathbf{g}}
\begin{document}

\title[Analyzing Functional Data with a Mixture of Covariance Structures Using a CBSS]{Analyzing Functional Data with a Mixture of Covariance Structures Using a Curve-Based Sampling Scheme}


\author[1]{\fnm{Yian} \sur{Yu}}\email{yianyu@umich.edu}

\author*[2]{\fnm{Bo} \sur{Wang}}\email{bo.wang@le.ac.uk}

\author*[1,3]{\fnm{Jian Qing} \sur{Shi}}\email{shijq@sustech.edu.cn}

\affil[1]{\orgdiv{Department of Statistics and Data Science, College of Science}, \orgname{Southern University of Science and Technology}, \orgaddress{\city{Shenzhen}, \country{China}}}

\affil[2]{\orgdiv{School of Computing and Mathematical Sciences}, \orgname{University of Leicester}, \orgaddress{\city{Leicester}, \postcode{LE1 7RH}, \country{UK}}}

\affil[3]{\orgname{National Center for Applied Mathematics}, \orgaddress{\city{Shenzhen}, \country{China}}}


\abstract{Motivated by distinct walking patterns in real-world free-living gait data, this paper proposes an innovative curve-based sampling scheme for the analysis of functional data characterized by a mixture of covariance structures. Traditional approaches often fail to adequately capture inherent complexities arising from heterogeneous covariance patterns across distinct subsets of the data. We introduce a unified Bayesian framework that integrates a nonlinear regression function with a continuous-time hidden Markov model, enabling the identification and utilization of varying covariance structures. One of the key contributions is the development of a computationally efficient curve-based sampling scheme for hidden state estimation, addressing the sampling complexities associated with high-dimensional, conditionally dependent data. This paper details the Bayesian inference procedure, examines the asymptotic properties to ensure the structural consistency of the model, and demonstrates its effectiveness through simulated and real-world examples.}

\keywords{Conditionally dependent data, Continuous-time hidden Markov model, Gaussian process prior, Nonlinear functional regression model, Varying Covariance structures}



\maketitle

\section{Introduction}
\label{sec:Intro}
Functional data analysis has emerged as a vibrant area of research over the past several decades, primarily due to the increasing prevalence of high-resolution data collected over a continuum; see the detailed discussions in \citet{ramsay2005functional} and among others. In functional regression models, we mainly concern mean function \citep{CaiOptimal2011,chamidah2024estimation} and covariance function \citep{rice1991estimating,Wang2020low}. Within the context of functional data, structural changes in the curves often exist. However, most existing studies focus mainly on differences in mean functions, and treat these changes primarily as abrupt ``shocks" \citep{aue2009estimation,horvath2022change}. In contrast, as illustrated in Figure \ref{fig:FoG-real-intro} and discussed in Section \ref{sec:App}, our analysis of real accelerator signal data indicates that the mean curve remains nearly constant around zero, but the covariance structures associated with different walking patterns vary. This requires identifying specific time periods during which participants experience abnormal episodes.

To address this problem, we propose a Bayesian nonlinear regression model with a Markov chain that robustly models the evolution of underlying hidden state processes. Specifically, continuous-time hidden Markov models (HMMs) are used to define underlying state processes \citep{Rabiner1986AnIntro,cappe2005inference}. Recent developments can be found in, e.g., \citet{liu2015efficient}, \citet{Zhou2020Continuous}. However, almost all of the previous works assume that the response variables are independent given the hidden states. The dependencies among response variables conditioned on the hidden states have not been well studied, e.g., \citet{Hamada2016Modeling}, \citet{jung2020HMMGP}, and \citet{Li2023HMMGPFR}, a property that is natural under the functional data framework and in the context of long-term, continuously monitored real data.

Estimating hidden state realizations is very challenging when accounting for these conditional dependencies. The conventional forward-backward algorithm \citep{Rabiner1989Atutorial} used in HMMs is not directly applicable here due to violations of the conditional independence assumption required in standard HMMs. When conditional dependencies exist, the possible values of the state sequence\textemdash the vector consisting of the realized values of the hidden state process at each recorded time point\textemdash increase exponentially with the number of data points (sample size) $n$, specifically as $M^n$, where $M$ denotes the number of hidden states. Sample size $n$ is usually very large for functional data, for instance, in the accelerometer data discussed in Section \ref{sec:App}, datasets collected at the rate of $100$ Hz resulting in a sample size exceeding hundreds even for a walking period lasting just a few seconds. To tackle these computational challenges, we propose an innovative curve-based sampling scheme. This approach transforms the challenge of enumerating various state sequence combinations into estimating the underlying probabilities of the curve-based process, thus circumventing the computational burdens arising from combinatorial explosion.

The nonlinear functional data we modeled is defined by a Gaussian process (GP) prior. As shown, for example, in \citet{bernardo1998regression}, \citet{williams2005gaussian}, and \citet{shi2011gaussian}, the GP regression framework naturally accommodates regression problems involving multidimensional functional covariates, allowing the modeling of nonlinear regression model as well as the covariance structures of functional data. By specifying a prior of the covariance kernel, this Bayesian nonparametric regression technique addresses a wide class of nonlinear effects while avoiding the issues of the curse of dimensionality due to multidimensional covariates. Recent advancements, such as \citet{datta2016hierarchical} and \citet{wilson2021pathwise}, have underscored computational efficiency as an important research hotspot in GP prior studies, particularly because conventional methods become computationally intractable when applied to complex dependent functional data. In our specific problem, these computational challenges are even more acute, rendering standard approaches impractical and necessitating the development of innovative algorithmic solutions. In this paper, we make the following contributions: (1) We propose a new Bayesian nonlinear functional regression model that employs a continuous-time HMM with a GP prior, specifically designed for dependent functional data with a mixture of covariance structures. (2) We develop an innovative curve-based sampling scheme to address computational challenges, an approach that can also be applied to similar issues in functional data analysis. (3) We provide proofs for the associated theoretical framework.

The article is organized as follows. Section \ref{sec:The Model} introduces the model with a continuous-time HMM, emphasizing conditional dependencies and variations in the covariance structure. Further details of the parameter estimation and prediction through Bayesian inference are also provided in this section. The curve-based sampling scheme for estimating the hidden states is presented in Section \ref{sec:CBSS}. Section \ref{sec:Theory} examines the identifiability and asymptotic properties that ensure the structural consistency of the model. Simulation results are reported in Section \ref{sec:Simu}, and the motivating examples with multi-output modeling are discussed in Section \ref{sec:App}. Section \ref{sec:Discussion} is the discussion. Some technical details and theoretical proofs are provided in the appendices.

\section{Nonlinear Functional Regression Model with Varying Covariance Structures}\label{sec:The Model}
\subsection{The Model}
A general functional regression model can be expressed as $y(t)=\mu (\ve{x}(t) )+f(\ve{x}(t)\big)+\epsilon(t),$
where $\left\{y(t),\, t\in \mathcal{T} \subset \Re \right\}$ represents a functional or longitudinal response variable and its value is dependent at different points in time $t$, $\ve{x}(t)$ denotes a $d$-dimensional vector of functional or scalar covariates, and $\epsilon(t)\stackrel{i.i.d.}{\sim} N(0\, ,\,\sigma^2)$ is the measurement error. The nonlinear relationship between the functional response $y(t)$ and the covariates $\ve{x}(t)$ is modeled by a mean structure $\mu(\ve{x}(t))$ and a nonlinear random effects term $f(\ve{x}(t))$. The covariance structure of $y(t)$ is usually defined via $f(\ve{x}(t))$. The detailed discussion can be found in, e.g., \citet{shi2007gaussian}, \citet{Yi2011Penalized}, and \citet{wang2014generalized}. However, as illustrated in Figure \ref{fig:FoG-real-intro}, the functional response can exhibit continuously alterable covariance structures. We therefore need a new model to fit this type of data.
\begin{figure}
\centerline{\includegraphics[width=1\linewidth]{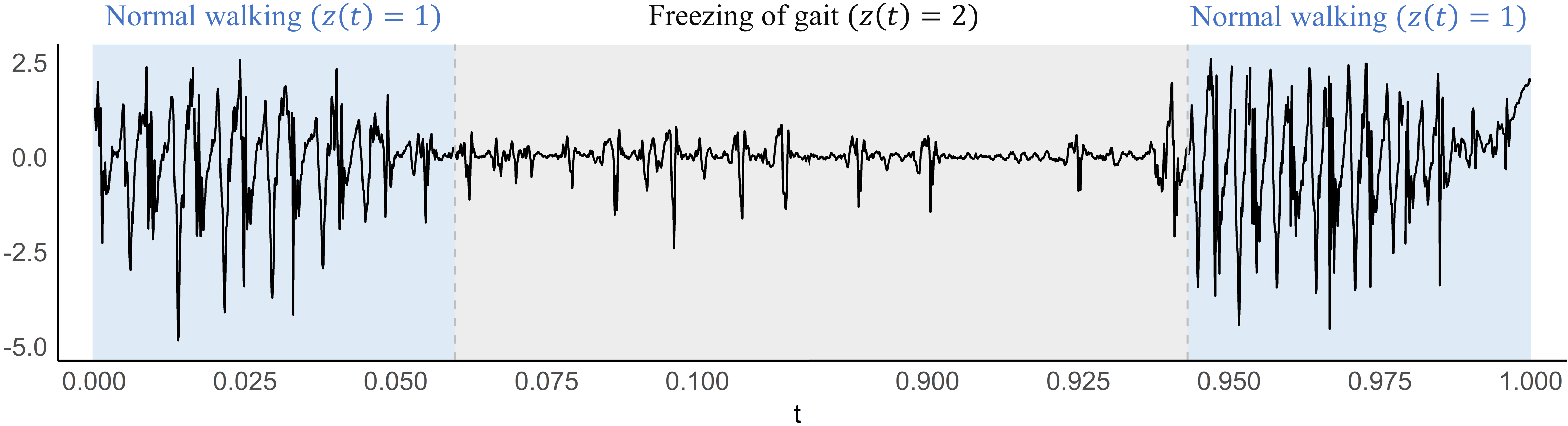} }
	\caption{The vertical acceleration data with variations in the covariance structure associated with different walking patterns (normal walking and freezing of gait) collected from patient of Parkinson's disease. }
	\label{fig:FoG-real-intro}
\end{figure}

The key idea is to introduce a latent process with finite states, where each state is associated with a distinct covariance structure. We propose the following nonlinear functional regression model with continuous-time hidden states: \begin{equation}
	\begin{aligned}
		y(t)=&\mu_{z(t)}(\ve{x}(t))+f_{z(t) }(\ve{x}(t))+\epsilon(t),\ \mu_{z(t)}(\ve{x}(t))=\mu_{m}(\ve{x}(t)),\\
		f_{z(t)}(\ve{x}(t) )=&f_m(\ve{x}(t)), \text{ when } z(t)=m,\ m \in \{1,\cdots, M\}.
	\end{aligned}\label{eq:Model-part1}
\end{equation} Here, $z(t)$ is the hidden state process, where $m$ specifies one of the $M$ (potentially unknown) distinct states. Referring to Figure \ref{fig:FoG-real-intro}, we have two states, i.e., $M=2$, where $z(t)=1$ indicates normal walking and $z(t)=2$ is the freezing of gait. We assume that the hidden state process follows a continuous-time Markov model (CTMM), which will be detailed later. The CTMM model accounts for the dependency of each state not only on the preceding state but also on the time duration spent in that state \citep{ross1996stochastic}.

In contrast to conventional HMMs that impose the output independent assumption\textemdash where the observation $y(t_i)$ depends solely on the current state $z(t_i)$ and is conditionally independent of other observations $\{y(t),t\neq t_i\}$ given $z(t)$, our framework allows responses to exhibit continuous dependency given the hidden states. This is critical for modeling gradual state transitions, as depicted in Figure \ref{fig:FoG-real-intro}, where covariance structures evolve smoothly rather than abruptly. 

The covariance of $y(t)$ is governed by: \begin{equation}
\Cov\left(y(t_i),y(t_j)\right)=\Cov\left(f_{z(t_i)}(\ve{x}(t_i)),f_{z(t_j)}(\ve{x}(t_j))\right)+\sigma^2\mathbb{I}(i=j),\label{eq:cov-cross}
\end{equation} where $\mathbb{I}\left(\cdot\right)$ is an indicator variable that equals $1$ if the condition in parentheses is true and $0$ otherwise. The dependencies among response variables are induced by the covariance structure among functions $\{f_m(t)\}_{m=1}^M$. The correlations for data exist not only within the same state but also across different states. We apply a multivariate GP prior for the combined functions $\ve{f}(\ve{x}(t))$, where $\ve{f}(\ve{x}(t))=\left(f_1(\ve{x}(t)),\cdots, f_M(\ve{x}(t))\right)^T$.  {To ensure that the covariance matrix for the multivariate GP is at least non-negative definite, a tractable way is to use a convolution process \citep{shi2020regression,shi2011gaussian}. With the specified smooth kernel as detailed in Appendix \ref{SM-sec:CPGP}, we construct the squared exponential covariance function for $\ve{f}(\ve{x}(t))$ with positive parameters $\vesub{\theta}{f}=\{ \upsilon_{m0},\upsilon_{m1}, \vesub{A}{m0}, \vesub{A}{m1}\}_{m=1}^M$, where $\upsilon$ is the amplitude and $\ve{A}$ characterizes the lengthscale or the spread of the covariance.} For $\vesub{x}{i}=\ve{x}(t_i)$, $\vesub{x}{j}=\ve{x}(t_j)$, the covariance and cross-covariance between these functions are defined as $\Cov \left(f_{m}(\vesub{x}{i}), f_{m}(\vesub{x}{j} )\right)=k_{mm}(\vesub{x}{i},\vesub{x}{j}),\, \Cov \left(f_{m}(\vesub{x}{i}), f_{h}(\vesub{x}{j} )\right)=k_{mh}(\vesub{x}{i},\vesub{x}{j})=k_{hm}(\vesub{x}{j},\vesub{x}{i})$ for $m,h \in \{1,\cdots, M\}$, $h\neq m$, and calculated referring to 
Equation \eqref{SM-eq:cov-cross} in Appendix \ref{SM-sec:CPGP}.

In this study, we primarily focus on cases where different states are associated with different covariance structures. Hence, we consider the scenario in which the mean functions are zero, i.e. $\mu_{z(t)}(\ve{x}(t))=0$. Other mean structures, including concurrent forms of the covariates $\ve{x}(t)$, can also be employed; see e.g., \citet{ramsay2005functional} and \citet{shi2007gaussian}. Moreover, for notational simplicity we assume a univariate covariate $\ve{x}(t)$ and consider it as a rescaled time variable, denoted by $t$, i.e., $\ve{x}(t)=t$. This setting is particularly useful in our analysis of accelerometer signal data as it allows the model to capture systematic time-related structure in the response trajectory. 

The hidden state process $z(t)$ is assumed to follow a CTMM$(\ve{Q},\ve{\pi})$:
\begin{equation}
	z(t) \sim \text{CTMM}(\ve{Q},\ve{\pi}),\ z(t) \in \{1,\cdots, M\}, \label{eq:Model-CTMM_z}
\end{equation} where $\ve{\pi}=(\pi_1,\cdots, \pi_M)^T$ is the initial state probability vector, with $\pi_m=p(z(0)=m)$ and $\sum_{m=1}^M\pi_m=1$. The matrix $\ve{Q}=\{q_{mh}\}_{m,h=1}^{M}\in \Re^{M\times M}$ is the infinitesimal generator of the process, or the transition rate matrix, that satisfies $q_{m h } >0,$ for $h\neq m$, and $	q_{m m}=-\sum_{h \neq m} q_{m h}$. For any state $m,h \in \{1,\cdots, M\}$ and for any time point $t_i$ and time duration $ (t_j-t_i)$, the transition probability from state $m$ at time point $t_i$ to state $h$ at time point $t_j$ is denoted as $p\left(z(t_j)=h\mid z(t_i)=m\right)$. 

Assuming that the Markov process is time-homogeneous and the transition probabilities are differentiable functions at any time point $t$, these probabilities satisfy the following Chapman-Kolmogorov forward equation \citep{karlin1981second}: $\ve{P}^\prime(t)=\ve{P}(t)\ve{Q}.$ Here, $\ve{P}(t)$ represents the matrix with components $\ve{P}_{m\, h}(t)=p\left(z(t)=h\mid z(0)=m\right)$. The off-diagonal components $q_{mh}$ of $\ve{Q}$ are interpreted as the rate of change in transition probabilities over an infinitesimal time interval, leading to 
$q_{m h}=\lim _{t \rightarrow 0} \frac{p(z(t)=h \mid z(0)=m)}{t},$ $h \neq m.$ Thus, from time $0$ to $t$, the time-homogeneous process $z(t)$ has a transition probability matrix given by $\ve{P}(t)=\exp\big[ \ve{Q} t \big]$, which is solved via the matrix exponential of the above forward equation. Then the state transition probability can be expressed as $\ve{P}_{m\, h}(t)=\exp\big[ \ve{Q} t \big]_{m\,h}$, where $\exp[\ \cdot\ ]_{m\,h}$ denotes the $(m,h)$-th entry of the matrix $\exp[\ \cdot\ ]$.

We use Equations \eqref{eq:Model-part1}, \eqref{eq:cov-cross}, and \eqref{eq:Model-CTMM_z} to formally define the functional regression model with varying structures and continuous-time hidden states (FRVS).

\subsection{Bayesian Inference}\label{sect:bayesian_inf}

Suppose that we have $n$ observations summarized as $\mathcal{D}=\{ y_i,\, t_i \}_{i=1}^n$, where $y_i$ denotes the observation at time point $t_i$, such that $y_i=y(t_i)$. The vector of observations is denoted as $\ve{y}=\left(y_1,\cdots, y_n\right)^T$. Accordingly, we denote $z_i=z(t_i)$ and the hidden state sequence as $\ve{z}=\left(z_1,\cdots, z_n\right)^T$, and $\vesub{f}{m}=\left( f_{m}(t_1),\cdots, f_{m}(t_n)\right)^T$. The $nM$-dimensional vector $\fall=\left(\ve{f}(t_1)^{T} ,\cdots, \ve{f}(t_n)^{T} \right)^T$ with $\ve{f}(t)=\left(f_1(t),\cdots, f_M(t)\right)^T$ denotes the collections of the functions $\{f_m(t)\}_{m=1}^M$ at all time points, which follows a multivariate Gaussian distribution with zero mean and covariance matrix $\ve{K}=\{\ve{\mathrm{K}}(t_i,t_j)\}_{i,j=1}^n$. The $(m,h)$-th entry of the block matrix $\ve{\mathrm{K}}(t_i,t_j)$ is $k_{mh}(t_i,t_j)$, $m,h=1,\cdots,M$, calculated using Equations \eqref{eq:cov-cross} based on the parameters $\vesub{\theta}{f}$. We denote $\vesub{f}{\bs{z}}=\left(f_{z_1}(t_1),\cdots, f_{z_n}(t_n)\right)^T$, a realization of $f_{z(t)}(t)$ associated with $\ve{z}$, and $\vesub{f}{\text{-}\bs z}$ as the remaining elements of $\fall$ after $\vesub{f}{\bs z}$ is taken away. This subvector of $\fall$ also follows a multivariate Gaussian distribution with zero mean and covariance matrix $\vesub{K}{\bs{z}}=\{k_{z_i z_j}(t_i,t_j)\}_{i,j=1}^n$, which is the crucial part that captures the underlying curve of the observations while incorporating information from the hidden state process.

The model for observed data can be represented as follows:\begin{equation*}
	\begin{aligned}
		y_i&=f_{z_i}(t_i)+\epsilon_i,\, \epsilon_i\stackrel{\text { i.i.d. }}{\sim} N(0\, ,\, \sigma^2),\, i=1,\cdots, n,\\
		f_{z_i }(t_i )&=f_m(t_i), \text{ when } z_i=m,\ m \in \{1,\cdots, M\},\
		\vesub{f}{\bs{z}}  \sim N\left(\vesub{0}{n} \, ,\, \vesub{K}{\bs{z}}\right).
	\end{aligned}
\end{equation*}
Here, $\vesub{0}{n}$ is the $n$-dimensional zero vector. Given the hidden state sequence $\ve{z}$, the marginal conditional distribution of the observations $\ve{y}$ is also a multivariate Gaussian distribution with zero mean and covariance matrix $\left(\vesub{K}{\bs{z}}+\sigma^2 \vesub{I}{n} \right)$, where $ \vesub{I}{n}$ is the identity matrix of dimension $n$.

Given data $\mathcal{D}$, we assume that the number of hidden states $M$ is known and that the initial state is $z_0=1$ with $\pi_{1}=1$. The primary unknowns of interest for inference in this model include the combined functions under different states across the entire time points $\fall$, the hidden state sequence $\ve{z}$, and the parameters ${\Theta}=\{\sigma^2, \vesub{\theta}{z}, \vesub{\theta}{f} \}$. Here, $\vesub{\theta}{z}=\{q_{m h}\}^M_{m,h=1, h\neq m}$ are the transition rates of the CTMM. The covariance function of the nonlinear functions $\{f_m(\cdot)\}^M_{m=1}$ is governed by the hyper-parameters $\vesub{\theta}{f}$. These hyper-parameters can be predetermined based on prior knowledge or estimated by an empirical Bayesian approach.

In this paper, we adopt a Bayesian paradigm for inference. The joint posterior density of all unknowns can be expressed as: 
\begin{equation}
	p\left(\fall, \ve{z},\Theta \mid\ve{y}\right)\propto 	p\left(\ve{y}\mid\fall,\ve{z},\Theta \right) p\left(\fall\mid\ve{z}, \Theta \right)p\left(\ve{z}\mid\Theta \right)p\left(\Theta\right).\label{eq:joint-posterior-distri}
\end{equation} 

Bayesian inference is performed using the Gibbs sampler \citep{robert1999gibbs,gelfand2000gibbs}, a standard technique for complex hierarchical models. The $r$-th iteration comprises the following steps: Step 1: Draw $\fall^{(r)} \sim p\left (\fall\mid\vesup{z}{(r-1)},\Theta^{(r-1)}, \ve{y}\right)$; Step 2: Draw $\Theta^{(r)}\sim p\left (\Theta \mid\fall^{(r)},\vesup{z}{(r-1)}, \ve{y}\right)$; Step 3: Draw $\vesup{z}{(r)}\sim p\left (\ve{z}\mid\fall^{(r)},\Theta^{(r)},\ve{y} \right)$. After $R_0$ iterations (referred to as ``burn-in"), the random samples $\{\fall^{(r)}, \vesup{z}{(r)},\Theta^{(r)}\}_{r=R_0+1}^R$ are treated as samples approximately distributed according to the joint posterior distribution $p\left(\fall, \ve{z},\Theta \mid\ve{y}\right)$ under mild regularity conditions \citep{robert1999gibbs,meyn2012markov}. For the estimation of the unknowns, the posterior means are calculated from the post-burn-in samples. The conditional probabilities $\hat{p}\left(z_i \mid \fall, \Theta, \ve{y}\right)$, for $i=1,\cdots,n$, are averaged from the posterior samples $\{{z}_i^{(r)}\}_{r=R_0+1}^R$. The estimated state $z_i$ is then determined as $\hat{z}_i=\argmax_m \hat{p}\left(z_i=m \mid \fall, \Theta, \ve{y}\right)$. Other quantities can be determined similarly.
\subsubsection{Conditional Distribution $p\left (\fall\mid\ve{z},\Theta, \ve{y}\right)$}
In our sampling procedure, Step 1 and Step 2 can be easily tracked with analytical conditional distributions. The conditional distribution of $\fall$ given the others can be expressed as: $$p\left(\fall\mid \ve{z},\Theta,\ve{y}\right)\propto p\left(\ve{y}|\fall,\ve{z},\sigma^2\right)p(\fall\mid\ve{z},\vesub{\theta}{f})=p\left(\ve{y}|\vesub{f}{\bs{z}},\ve{z},\sigma^2\right)p(\vesub{f}{\text{-}\bs{z}}\mid\vesub{f}{\bs{z}},\ve{z},\vesub{\theta}{f})p(\vesub{f}{\bs{z}}\mid\ve{z},\vesub{\theta}{f}),$$ where $\fall=\left(\vess{f}{\bs{z}}{T}, \vess{f}{\text{-}\bs{z}}{T}\right)^T$ as defined in the previous section. We can therefore obtain random samples of $\fall$ by sampling $\vesub{f}{\bs{z}}$ and $\vesub{f}{\text{-}\bs{z}}$ separately.

Since the observations $\ve{y}$ conditioned on $\fall$ and $\ve{z}$ follows a multivariate Gaussian distribution, it can be easily shown that the conditional distribution of $\vesub{f}{z}$ is also Gaussian: \begin{equation}
	\left(\vesub{f}{\bs{z}}\mid \ve{z},\Theta,\ve{y}\right) \sim N\left( \vesub{K}{\bs{z}}\left(\vesub{K}{\bs{z}}+\sigma^2\vesub{I}{n}\right)^{-1}\ve{y} \, ,\, \sigma^2 \vesub{K}{\bs{z}} \left(\vesub{K}{\bs{z}}+\sigma^2\vesub{I}{n}\right)^{-1} \right).\label{eq:combine_fz_y}
\end{equation}

After sampling $\vesub{f}{\bs{z}}$, we subsequently draw the remaining part $\vesub{f}{\text{-}\bs{z}}$ from the conditional distribution of $ p(\vesub{f}{\text{-}\bs{z}} \mid\vesub{f}{\bs{z}}, \ve{z},\Theta,\ve{y} )$. Denote the covariance matrix between $\vesub{f}{\bs{z}}$ and $\vesub{f}{\text{-}\bs{z}}$ as $\vess{K}{\bs{z}}{\ast} \in \Re^{n \times (nM-n)}$, and the covariance matrix for $\vesub{f}{\text{-}\bs{z}}$ itself as $\vesub{K}{\text{-}\bs{z}} \in \Re^{(nM-n)\times (nM-n)}$. With a similar derivation procedure, we have $$(\vesub{f}{\text{-}\bs{z}} \mid\vesub{f}{\bs{z}}, \ve{z},\Theta,\ve{y} )=(\vesub{f}{\text{-}\bs{z}} \mid\vesub{f}{\bs{z}}, \ve{z},\Theta )\sim N( \vess{K}{\bs{z}}{\ast T}\vess{K}{\bs{z}}{-1}\vesub{f}{\bs{z}} \, , \vesub{K}{\text{-}\bs{z}}-\vess{K}{\bs{z}}{\ast T} \vess{K}{\bs{z}}{-1}\vess{K}{\bs{z}}{\ast} ).$$ 

\begin{remark}\label{remark_NNGP}
Despite the existence of these analytical forms, directly working with or evaluating the density is computationally expensive for large sample size $n$ as the inverse and determinant of the covariance matrix necessitates $n^3$ floating point operations and requires storage in the order of $n^2$. To mitigate these computational costs, the nearest neighbor GP (NNGP) method is employed; see     Appendix \ref{SM-sec:NNGP} for the details.  
\end{remark}

\subsubsection{Conditional Distribution $p\left (\Theta\mid\fall,\ve{z}, \ve{y}\right)$}
In Step 2, for the noise variance $\sigma^2$, the prior distribution is chosen as a usual conjugate prior, an inverse-Gamma distribution, proportional to $
\left (1/\sigma^2\right )^{\alpha+1}\exp\left(-\beta/\sigma^2\right),$ with $\sigma^2>0$ and two hyper-parameters $\alpha,\ \beta>0$. Consequently, the full conditional distribution of $\sigma^2$ remains an inverse-Gamma distribution with parameters $\left(\alpha +n/2\right)$ and $(\beta +\frac{1}{2}\left\|\ve{y}-\vesub{f}{\bs{z}}\right\|^2)$.

The conditional distributions of $\vesub{\theta}{z}$, along with the hyper-parameters $\vesub{\theta}{f}$, are not analytically tractable. We utilize a random walk Metropolis-Hastings (MH) algorithm \citep{metropolis1953equation,Hastings1970Monte} for these two sets of hyper-parameters, respectively. As an alternative, an empirical Bayes Gibbs sampler can be implemented, where the values of the hyper-parameters are obtained by maximizing the marginal conditional likelihood \citep{casella2001empirical}.
\subsubsection{Conditional Distribution $p\left (\ve{z}\mid\fall,\Theta,\ve{y}\right)$} \label{sec:cond_pz}
Step 3, which involves sampling the state sequence $\ve{z}$, poses challenges. At each time point, the state can take any integer value from 1 to $M$, leading to potentially $M^n$ combinations for $\ve{z}$. Calculating the probabilities for all possible values of the state sequence becomes infeasible when $n$ is large. Moreover, the traditional forward-backward algorithm \citep{Rabiner1986AnIntro,bishop2006pattern} used in the HMMs, cannot be directly applied in this context due to the violation of the conditional independence assumption among the observations $\big(y_1,y_2,\cdots, y_{i-1}\big)$ and $\big( y_{i+1},y_{i+2}$, $\cdots, y_n\big)$ given the current state $z_i$.

Another approach for estimating the hidden state sequence is to sample each state variable $z_i$ based on the previously sampled state sequence $\vesub{z}{\text{-}i}=\ve{z}\backslash z_i$, updating $z_i \sim p\left(z_i \mid \fall, \vesub{z}{\text{-}i}, \Theta,\ve{y} \right)$ from a categorical distribution. While this approach is an effective method for handling high-dimensional sampling, it may converge very slowly when $n$ is large and often yield extreme scenarios where all state variables become identical. For our FRVS model, it is usually stuck at the initial values. To address this challenge, we introduce an innovative curve-based sampling scheme, which will be detailed in Section \ref{sec:CBSS}.  

\section{Curve-Based Sampling Scheme for Updating Hidden States} \label{sec:CBSS}

{The key idea of our novel approach to address the sampling problem discussed in Section \ref{sec:cond_pz} is to introduce a set of augmented continuous variables $\{g_m(t) \}_{m=1}^{M-1}$, which are defined as $g_m(t)= \log p\left(z(t)=m \mid \fall,\Theta,\ve{y} \right)- \log p\left(z(t)=M \mid\fall,\Theta,\ve{y} \right).$

Instead of sampling $z_i$ at each discrete time points (thus leading to intractable computational problems due to complexity of $M^n$), we sample continuous random functional variables $\{g_m(t)\}_{m=1}^{M-1}$. We can then sample the states using the following categorical distribution for $i=1,\cdots, n$ with $\alpha_{im}=\exp\left(g_m(t_i)\right)\left/\left({\sum_{h=1}^{M-1} \exp\left( g_h(t_i)\right)+1 }\right) \right.$: \begin{equation}
		\left( z_i \mid \fall, \Theta, \ve{y}\right)=\left( z_i \mid g_1(t_i),\cdots, g_{M-1}(t_i) \right)\sim \text{Cat}\left(\alpha_{i1},\cdots, \alpha_{i M-1},1-\sum_{m=1}^{M-1} \alpha_{im}\right). \label{eq:z- categorical}
\end{equation}

With these auxiliary functions, we can generate multiple possible state sequences in a curve-based manner, circumventing the direct calculation of probabilities while still obtaining samples that follow the target distribution. The implementation of this strategy is illustrated in Figure \ref{fig:Allsteps}. Hence, the problem of enumerating the state sequence combinations is transformed into estimating their underlying curves of probability. 
\begin{figure}
	\centering
\includegraphics[width=1\linewidth]{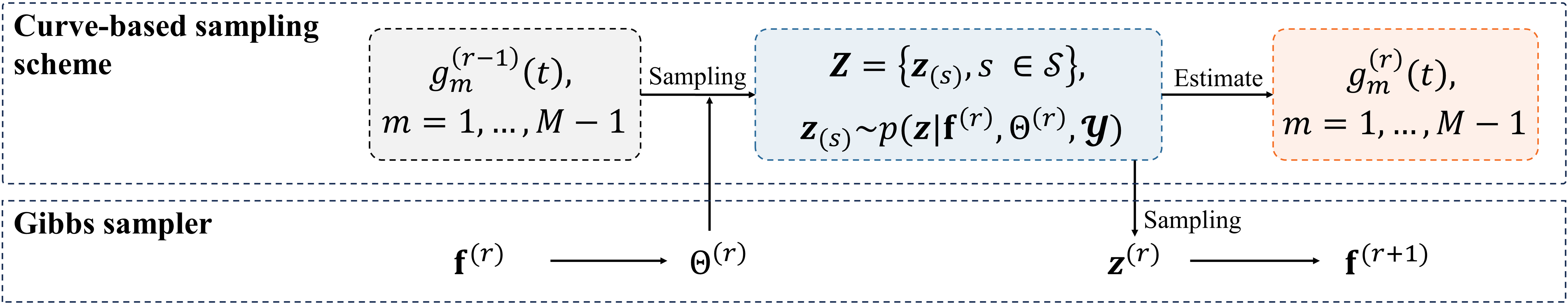}
	\caption{Curve-based sampling scheme for sampling the hidden state sequence. State sequence $\ve{z}$ is updated via augmented curves of probability $\{g_m(t)\}_{m=1}^{M-1}$. }
	\label{fig:Allsteps}
\end{figure}

\subsection{Sampling Procedure for $\ve{z}$ }\label{sec:detailed for Cpd-z} 
The auxiliary continuous functions $\left\{g_m(t)\right\}_{m=1}^{M-1}$ are the key component of our curve-based sampling scheme. For implementation, each of these functions is modeled via a GP prior characterized by a zero mean function and a squared exponential covariance kernel $c_m(\cdot, \cdot)$. Specifically, for $m=1,\cdots, M-1$:\begin{equation}
	\begin{aligned}
		g_m(t) \sim \text{GP}\left( 0, c_m(\cdot,\cdot)\right),\ 	c_m(t_i,t_j)=\gamma_m\exp\left(-\frac{1}{2}\omega_m\left(t_i-t_j\right)^2\right),
	\end{aligned} \label{eq:Cov_g}
\end{equation} where the hyper-parameters for each $g_m(t)$ are denoted as $\vesub{\theta}{gm}=\{\gamma_m,\omega_m\}$.

Concatenating all realizations into one vector, we define the joint vector $\gall=\\ \left(\ve{g}{(t_1)}^{T},\cdots,\ve{g}{(t_n)}^{T} \right)^T \in \Re^{(M-1)n}$ with $\ve{g}(t)=\left(g_1(t),\cdots, g_{M-1}(t)\right)^T$. The vector $\gall$ follows a multivariate Gaussian distribution with a zero mean vector and a block-diagonal covariance matrix $\ve{C}=\left\{\ve{\mathrm{C}}(t_i,t_j)\right\}_{i,j=1}^n$ with each block $\ve{\mathrm{C}}(t_i,t_j)=\diag \left( c_1(t_i,t_j),\cdots, c_{M-1}(t_i,t_j)\right)$. The collective hyper-parameters are denoted as $\vesub{\theta}{g}=\{\vesub{\theta}{gm}\}_{m=1}^{M-1}$.

In our curve-based sampling scheme, both $\gall$ and $\ve{z}$ are updated in Step 3. After the burn-in period, the sample of $\gall$ produced in the previous iteration closely resembles its posterior distribution (in the left panel of     Figure \ref{fig:curve-based}). Similarly, the state sequence $\ve{z}$ generated from this sample of $\ve{g}(t)$ is also very close to its posterior distribution (in the middle panel of     Figure \ref{fig:curve-based}), making it easier to accept in the MH algorithm. We will discuss MH algorithm below, and have the discussion on how to update $\gall$ and the related hyper-parameters to the next subsection. 

The target distribution $p\left(\ve{z}\mid\fall,\Theta,\ve{y} \right)$ is proportional to $\mathcal{P}(\ve{z})= p\left(\ve{y}\mid\vesub{f}{\bs{z}},\ve{z},\sigma^{2}\right)p\left(\ve{z}\mid\vesub{\theta}{z}\right)$. Thus an independent MH algorithm \citep{robert1999gibbs,liu2001monte} can be applied. An easy-to-simulate proposal density is $\mathcal{Q}(\ve{z})=p\big(\ve{z}\mid \hat{\gall}\big)$, as given in Equation \eqref{eq:z- categorical}, where $\hat{\gall}$ is the estimated value of $\gall$ from the previous iteration.

Therefore, for $s=1,\cdots, S$, the probability of accepting a new sample $\vesup{z}{\ast}\sim \mathcal{Q}(\ve{z})$ given the previous sample $\vesub{z}{(s-1)}$ is $\min\left\{ 1, \frac{\mathcal{P}\big(\vesup{z}{\ast}\big)\mathcal{Q}\big(\vesub{z}{(s-1)}\big)}{\mathcal{P}\big(\vesub{z}{(s-1)}\big)\mathcal{Q}\big(\vesup{z}{\ast}\big)} \right\}.$ The chain converges to the target distribution as $S\rightarrow +\infty$. To reduce the correlations among the samples in $\ve{Z}$, we could select a subset from $\{1,\cdots, S\}$, which is denoted as $\mathcal{S}$ with size $\mathbb{S} \ll S$. In practice, parallel processing can be used. 
\begin{remark}
 {The proposal set size $\mathbb{S}$ determines the number of candidate state sequences generated in the curve-based sampling scheme and thus affects the trade-off between computational efficiency and sampling performance. In practice, we find that even relatively small values of $\mathbb{S}$ are sufficient to ensure stable performance. As demonstrated in the simulation studies, the results obtained with $\mathbb{S}=10$ and $\mathbb{S}=20$ are very similar in terms of state classification accuracy and model fitting. Therefore, selecting $\mathbb{S}$ in a moderate range (e.g., $\mathbb{S}=10–20$) typically provides a good balance between computational efficiency and sampling effectiveness.}
\end{remark}
\begin{remark}
	The ensemble MCMC (EnMCMC) algorithm is a method related to the multiple try Metropolis to provide a better exploration of the sample space. The main idea involves considering various potential candidates at each iteration and larger jumps are facilitated. When the ensemble size is large enough, the probability of accepting a new sample approaches $1$ and the correlation among the generated state vanishes \citep{liu2000multiple}. The details of the algorithm and the simulation comparisons are presented in     Appendix \ref{SM-sec:EnMCMC}. Our curve-based sampling scheme yields nearly identical results to EnMCMC requiring less computational time. This is probably because we updated both $\ve{g}(t)$ and $\ve{z}$ in our algorithm.  
\end{remark}
\subsection{Update $\gall$ and $\vesub{\theta}{g}$}\label{sec:update thetag and g}

We assume that $S$ samples of state sequences $\ve{Z}=\{\vesub{z}{(s)}\}_{s=1}^S$ are obtained. The hyper-parameters $\vesub{\theta}{g}$ are estimated by maximizing the marginal density function $	p\left(\ve{Z} \mid\vesub{\theta}{g} \right)$, where
\begin{align*}
	p\left(\ve{Z} \mid\vesub{\theta}{g} \right)&=\int p\left(\ve{Z} \mid\gall \right)p\left(\gall |\vesub{\theta}{g} \right) d \gall=\int \prod_{s\in\mathcal{S}} p\left(\vesub{z}{(s)} \mid\gall \right)p\left(\gall |\vesub{\theta}{g} \right) d\gall\\
	&=\int \prod_{s\in\mathcal{S}} \left(\prod_{i=1}^{n}p\left(z_{(s)_i} \mid\ve{g}(t_i) \right) \right)\left(2\pi\right)^{-\frac{(M-1)n}{2}}|\ve{C}|^{-\frac{1}{2}}\exp\left\{ -\frac{1}{2}\gall^{T} \vesup{C}{-1}\gall\right\} d \gall
\end{align*} with $$
	p\left(z_{(s)_i} \mid\ve{g}(t_i) \right)=\frac{\exp\left( g_{z_{(s)_i}}(t_i)\mathbb{I}\left(z_{(s)_i} \neq M\right) \right)}{\sum_{h=1}^{M-1}\exp\left( g_h(t_i)\right)+1}.$$

However, the integral involved in the above marginal density is analytically intractable. An alternative, as described in \citet{wang2014generalized}, is to approximate $p\left(\ve{Z} \mid\vesub{\theta}{g} \right)$ by $\tilde{p}\left(\ve{Z} \mid\vesub{\theta}{g} \right)\triangleq \left.\frac{p\left(\ve{Z},\gall \mid\vesub{\theta}{g} \right)}{	\tilde{p}_G\left(\gall \mid\ve{Z},\vesub{\theta}{g} \right)}\right|_{\gall=\tilde{\gall}},$ where $\tilde{p}_G\left(\gall \mid\ve{Z},\vesub{\theta}{g} \right)$ is the Gaussian approximation to the conditional density $p\left(\gall \mid\ve{Z},\vesub{\theta}{g}\right)$, and $\tilde{\gall}$ is the mode of the conditional density of $\gall$, also being the estimation for $\gall$. Here, $$ p\left(\ve{Z},\gall\mid \vesub{\theta}{g}\right)=p\left(\ve{Z} \mid\gall \right)p\left(\gall |\vesub{\theta}{g} \right)
=\exp\left\{ \sum_{s\in\mathcal{S}} \sum_{i=1}^n \log p\left(z_{(s)_i} \mid\ve{g}(t_i) \right) + \log p\left(\gall |\vesub{\theta}{g} \right) \right\}, $$ and $\log p\left(z_{(s)_i} \mid\ve{g}(t_i) \right)$ is approximated by a Taylor expansion to the second order around $\tve{g}(t_i)=\big(\tilde{g}_{1}(t_i),\cdots, \tilde{g}_{M-1}(t_i)\big)^T$. Denoting $\ve{d}$ and $\ve{D}$ as the collection of first and second order derivative of $\ell_{z_{(s)_i}}\left(\ve{g}(t_i)\right)$ with respect to $\ve{g}(t_i)=\tve{g}(t_i)$, for $i=1,\cdots,n$,
the probability density function $p\left( \gall\mid \ve{Z}, \vesub{\theta}{g}\right)$ is proportional to $p\left(\ve{Z},\gall\mid \vesub{\theta}{g}\right)$, given by \begin{align*}
	p\left( \gall\mid \ve{Z}, \vesub{\theta}{g}\right)	\propto p\left(\ve{Z},\gall\mid \vesub{\theta}{g}\right)&\propto \exp\left\{\ve{d}\left(\gall-\tilde{\gall}\right) -\frac{1}{2}\left(\gall-\tilde{\gall}\right)^T \ve{D}\left(\gall-\tilde{\gall}\right) -\frac{1}{2}\gall^{T} \vesup{C}{-1} \gall\right\}
	\\ &\propto \exp\left\{ -\frac{1}{2}\gall^{T} \left(\vesup{C}{-1}+ \ve{D} \right)\gall+\left(\ve{d}+\tilde{\gall}^{T}\ve{D} \right) \gall \right\}. 
\end{align*} The detailed calculation is presented in     Appendix \ref{SM-sec:predict_new_z_g}. 

Given $\vesub{\theta}{g}$, the mode $\tilde{\gall}$ is obtained using the Fisher scoring algorithm (detailed in     Algorithm \ref{alg::thetag_g-estimation} in Appendix \ref{SM-sec:predict_new_z_g}). The Gaussian approximation $\tilde{p}_G\left( \gall \mid\ve{Z},\vesub{\theta}{g} \right)$ is the density function of the multivariate Gaussian distribution $N\left(\tilde{\gall}\, ,\, \left(\vesup{C}{-1}+\ve{D} \right)^{-1}\right)$. Then $\vesub{\theta}{g}$ is updated by maximizing the approximated marginal likelihood $\tilde{p}\left(\ve{Z}\mid \vesub{\theta}{g}\right)$, where $$
\tilde{p}\left(\ve{Z}\mid \vesub{\theta}{g}\right) \propto \left|\ve{C}\right|^{-\frac{1}{2}} \left|\vesup{C}{-1}+\ve{D}\right|^{-\frac{1}{2}}\exp\left\{- \frac{1}{2}\tilde{\gall}^{T} \vesup{C}{-1}\tilde{\gall} \right\}.
$$

In practice, estimating $\gall$ becomes computationally challenging when the sample size $n$ is large, due to the time complexity of $\mathcal{O}(n^3)$ in Algorithm \ref{alg::thetag_g-estimation}. To address the computational demand associated with this large dataset, we propose selecting a considerably smaller subset of time points $\bve{t}$ from the full set $\{t_i\}_{i=1}^n$ with $\underline{n}$ ($\underline{n}\ll n$) time points. The sampled set of state sequences $\ve{Z}$ at these tunable time points, denoted as $\bve{Z}$, will be utilized to estimate hyper-parameters $\vesub{\theta}{g}$ and the corresponding $\underline{\gall}$ at the selected time points. Subsequently, the powerful predictive capabilities of GP modeling will be leveraged to predict the functions at the remaining time points. For any time point $t^\ast$ that are not included in $\bve{t}$, $p\left(z(t^\ast) \mid\bve{Z}, \vesub{\theta}{g}\right)$ or the corresponding $\ve{g}(t^\ast)$ can be estimated. The details are given in Appendix \ref{SM-sec:predict_new_z_g}.

\section{Theoretical Results}\label{sec:Theory}

In this section, we first show that the model defined in Section \ref{sec:The Model} is identifiable under some conditions. Following that, we consider the consistency of the model structure which involves three crucial aspects: the estimated parameters $\sigma^2$ and $\vesub{\theta}{ z}$, the predicted responses $\hve{y}$, and the estimated hidden state sequence $\hve{z}$ by the curve-based sampling scheme. All the proofs are provided in the appendices.
\subsection{Model Identifiability}

Identifiability is an important problem for most HMMs. \citet{gassiat2016nonpara} considers the identifiability of nonparametric finite translation HMMs; \citet{gassiat2016inference} and \citet{alexandrovich2016nonparametric} prove the identifiability of finite state space nonparametric HMMs; \citet{gassiat2020identifiability} studies the identifiability of nonparametric translation HMMs with general state space. 
Below we show that the model defined in Section \ref{sec:The Model} is identifiable under some conditions. To the best of our knowledge, the identifiability of nonparametric models with continuous-time HMM has not been considered in the literature.
\begin{theorem}
	Assume that the number of states $M$ is known and the transition rate matrix $\ve{Q}$ is irreducible and aperiodic. If the functions $\{f_m(\cdot)\}_{m=1}^M$ are differentiable and with the support $\mathcal{T} \subset \Re$, satisfying the following condition: 
	for any two functions $f_m(t)$ and $f_h(t)$, where $m,h=1,\cdots,M$ and $h\neq m$, 
	\begin{equation}
		\Vert f_m(t)-f_h(t)\Vert^2 + \Vert f'_m(t)-f'_h(t)\Vert^2 \not= 0, \text{ for any } t\in\mathcal{T}, \label{Con_transversal}
	\end{equation}
	then the FRVS model defined in Section \ref{sec:The Model} is identifiable, up to label swapping of the hidden states. 
\end{theorem}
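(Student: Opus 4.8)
The plan is to read identifiability off the finite-dimensional laws of the observed process, treating $\{f_m\}$ as fixed differentiable signals so that, conditional on the state path, the data form a continuous-time, time-inhomogeneous finite-state HMM with Gaussian emissions $y(t)\mid\{z(t)=m\}\sim N(f_m(t),\sigma^2)$ (this is the natural frame, the GP prior being an inference device whose smooth sample paths meet \eqref{Con_transversal} generically). Suppose two parameter sets $(\ve{Q},\ve{\pi},\{f_m\},\sigma^2)$ and $(\tilde{\ve{Q}},\tilde{\ve{\pi}},\{\tilde f_m\},\tilde\sigma^2)$ induce the same law for $\{y(t):t\in\mathcal{T}\}$; I will recover each component up to a single relabeling of $\{1,\dots,M\}$. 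First I would record two consequences of $\ve{Q}$ being irreducible and aperiodic: the transition matrices $\ve{P}(t)=\exp[\ve{Q}t]$ are entrywise positive for every $t>0$, so every state has strictly positive marginal probability $w_m(t)=p(z(t)=m)>0$; and the family $\{\ve{P}(t)\}_{t>0}$ admits a \emph{unique} generator, namely $\ve{Q}=\lim_{\delta\downarrow0}(\ve{P}(\delta)-\ve{I})/\delta$, so recovering the whole family of transition matrices from continuous observation sidesteps the embedding non-uniqueness that plagues identification from a single $\ve{P}(\delta)$.

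The second step extracts the emissions and the noise level from one-dimensional marginals. The key structural input is condition \eqref{Con_transversal}: since the $f_m$ are differentiable, the condition forbids tangential contact, so every coincidence $f_m(t_0)=f_h(t_0)$ forces $f'_m(t_0)\neq f'_h(t_0)$ and is therefore a transversal, hence isolated, crossing. Consequently the coincidence set $\{t:f_m(t)=f_h(t)\}$ has no accumulation point, no two signals agree on an interval, and for all $t$ outside a countable set the means $f_1(t),\dots,f_M(t)$ are pairwise distinct. At such a generic $t$ the marginal of $y(t)$ is the location mixture $\sum_{m}w_m(t)\,N(f_m(t),\sigma^2)$ with positive weights and distinct means; by the identifiability of finite Gaussian location mixtures with common variance this determines $\sigma^2$ together with the unordered set $\{(w_m(t),f_m(t))\}_{m=1}^M$. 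Using the continuity of each $f_m$ and the isolated-crossing structure I would then stitch these per-time unordered sets into continuously labeled curves, which fixes one global permutation of the states and yields $\{f_m\}$ and $\{w_m(\cdot)\}$ up to that single relabeling.

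The third step recovers the dynamics. For generic $t<s$ the bivariate marginal of $(y(t),y(s))$ equals $\sum_{m,m'}w_m(t)\,\ve{P}_{mm'}(s-t)\,\phi(\cdot;f_m(t))\,\phi(\cdot;f_{m'}(s))$; because the two Gaussian families $\{\phi(\cdot;f_m(t))\}_m$ and $\{\phi(\cdot;f_{m'}(s))\}_{m'}$ are each linearly independent (distinct means), this bilinear form determines the coefficient matrix $[w_m(t)\,\ve{P}_{mm'}(s-t)]$, and dividing by the already-identified $w_m(t)>0$ yields $\ve{P}(s-t)$ for every lag. Passing to the limit gives $\ve{Q}$ by the uniqueness noted above, and $\ve{\pi}$ follows from $w_m(t)$ as $t\downarrow0$ (consistently with the assumption $\pi_1=1$). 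All recoveries inherit the same global permutation, establishing identifiability up to label swapping.

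The main obstacle I anticipate is the third component of Step~2: turning the marginal-by-marginal identification, which is only up to a $t$-dependent permutation, into a single globally consistent labeling. This is exactly where differentiability and \eqref{Con_transversal} earn their place — they guarantee isolated transversal crossings so that continuity dictates how each curve is tracked through a coincidence, and they rule out positive-measure overlaps that would leave two states permanently indistinguishable. A secondary technical point is making the bilinear-form and linear-independence argument uniform enough in the lag $s-t$ to justify the $\delta\downarrow0$ limit recovering $\ve{Q}$, together with the measurability checks ensuring the countable exceptional set of crossing times is harmless.
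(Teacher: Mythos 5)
Your proposal is essentially correct, but it takes a genuinely different route from the paper. The paper's proof works with the \emph{trivariate} law at three generic time points $t_1<t_2<t_3$ outside the (isolated, hence countable) coincidence set, factors it as a three-way array with components $\big\{\exp[\ve{Q}t_1]_{1z_1}\exp[\ve{Q}(t_2-t_1)]_{z_1 z_2}\phi_{z_1}\big\}$, $\{\phi_{z_2}\}$, $\big\{\exp[\ve{Q}(t_3-t_2)]_{z_2 z_3}\phi_{z_3}\big\}$, and invokes Theorem~9 of Allman, Matias and Rhodes (a Kruskal-type uniqueness result), using linear independence of the distinct-mean Gaussians and full rank of $\exp[\ve{Q}t]$ (from irreducibility and aperiodicity) to obtain a permutation $\tau$ matching emissions and transition entries simultaneously; it then cites Huang et al.\ to make $\tau$ independent of $t$ and reads off $\tilde{\ve{Q}}_{mh}=\ve{Q}_{\tau(m)\tau(h)}$. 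You instead recover $\sigma^2$ and the unordered sets $\{(w_m(t),f_m(t))\}$ from \emph{univariate} marginals via classical identifiability of common-variance Gaussian location mixtures, stitch the labels by continuity and transversality, and then extract $[w_m(t)\,\ve{P}_{mm'}(s-t)]$ from \emph{bivariate} marginals by the same linear-independence argument. Both proofs use the transversality condition \eqref{Con_transversal} for exactly the purpose you identify (isolated crossings, so a global labeling exists), and both ultimately recover $\ve{Q}$ from knowing $\ve{P}(t)$ at all lags, which sidesteps the embedding problem. Your route is more elementary (no tensor-decomposition theorem) but leans on the parametric Gaussian mixture structure; the paper's Kruskal-based argument needs only linear independence of the emission densities and so would survive a relaxation to non-Gaussian or nonparametric emissions. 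The two technical points you flag at the end (globalizing the $t$-dependent permutation, and passing to the $\delta\downarrow 0$ limit for $\ve{Q}$) are real, and the paper handles the first by citing Huang et al.\ and the second only implicitly; neither is a gap in your argument so much as a step that would need the same care the paper itself delegates to references.
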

The proof of the theorem is given in Appendix \ref{SM-sec:Identifiability}.

\subsection{Consistency of Parameters $\sigma^2$ and $\vesub{\theta}{z}$}

The prior distribution for the noise variance $\sigma^2$ is specified as an inverse-Gamma distribution with hyper-parameters $\alpha$ and $\beta$. As the sample size $n$ increases, the posterior mean of $\sigma^2$ converges in probability to the true value $\sigma_0^2$, demonstrating asymptotic consistency. 
\begin{theorem}
	Under the inverse-Gamma prior for $\sigma^2$ with hyper-parameters $\alpha$ and $\beta$, the posterior mean $\E\left(\sigma^2 \mid \ve{y}\right)$ converges in probability to the true value $\sigma_0^2$ as $n\rightarrow +\infty$. Specifically, $\E\left(\sigma^2 \mid \ve{y}\right)\stackrel{p}{\rightarrow} \frac{n \sigma_0^2}{n+2\alpha-2} + \frac{2\beta}{n+2\alpha-2}\rightarrow \sigma_0^2$ as $n\rightarrow +\infty$. \end{theorem}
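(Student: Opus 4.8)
The strategy is to reduce the marginal posterior mean to the full conditional one and then show that the residual sum of squares it depends on behaves like $n\sigma_0^2$. Starting from the full conditional law recorded in the excerpt, $\sigma^2 \mid \vesub{f}{\bs{z}},\ve{z},\ve{y}$ is inverse-Gamma with shape $\alpha+n/2$ and scale $\beta+\tfrac{1}{2}\|\ve{y}-\vesub{f}{\bs{z}}\|^2$, so using that an inverse-Gamma$(a,b)$ variable has mean $b/(a-1)$,
\begin{equation*}
\E\left(\sigma^2 \big|\vesub{f}{\bs{z}},\ve{z},\ve{y}\right)=\frac{2\beta+\|\ve{y}-\vesub{f}{\bs{z}}\|^2}{n+2\alpha-2}.
\end{equation*}
Taking the conditional expectation over $(\vesub{f}{\bs{z}},\ve{z})$ given $\ve{y}$ via the tower property then yields
\begin{equation*}
\E\left(\sigma^2 \big|\ve{y}\right)=\frac{2\beta+\E\left[\|\ve{y}-\vesub{f}{\bs{z}}\|^2\big|\ve{y}\right]}{n+2\alpha-2},
\end{equation*}
so everything reduces to proving $\E[\|\ve{y}-\vesub{f}{\bs{z}}\|^2\mid\ve{y}]=n\sigma_0^2+o_p(n)$.

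Next I would pass to the true data-generating model $y_i=f^0_{z^0_i}(t_i)+\epsilon_i$ with $\epsilon_i\stackrel{\text{i.i.d.}}{\sim}N(0,\sigma_0^2)$ and write $y_i-f_{z_i}(t_i)=\epsilon_i+\delta_i$, where $\delta_i=f^0_{z^0_i}(t_i)-f_{z_i}(t_i)$, giving
\begin{equation*}
\|\ve{y}-\vesub{f}{\bs{z}}\|^2=\sum_{i=1}^n\epsilon_i^2+2\sum_{i=1}^n\epsilon_i\delta_i+\sum_{i=1}^n\delta_i^2.
\end{equation*}
The first term is fixed given $\ve{y}$ and satisfies $\tfrac{1}{n}\sum_i\epsilon_i^2\stackrel{p}{\rightarrow}\sigma_0^2$ by the weak law of large numbers (the $\epsilon_i^2$ are i.i.d.\ with mean $\sigma_0^2$ and finite variance). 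For the third term I would invoke posterior concentration of the fitted curve, $\tfrac{1}{n}\E[\sum_i\delta_i^2\mid\ve{y}]\stackrel{p}{\rightarrow}0$; the cross term is then squeezed by Cauchy--Schwarz, since $\E[|\sum_i\epsilon_i\delta_i|\mid\ve{y}]\le(\sum_i\epsilon_i^2)^{1/2}(\E[\sum_i\delta_i^2\mid\ve{y}])^{1/2}$, so it too is $o_p(n)$. Combining these, $\E[\|\ve{y}-\vesub{f}{\bs{z}}\|^2\mid\ve{y}]=n\sigma_0^2+o_p(n)$, whence
\begin{equation*}
\E\left(\sigma^2\big|\ve{y}\right)-\left(\frac{n\sigma_0^2}{n+2\alpha-2}+\frac{2\beta}{n+2\alpha-2}\right)=\frac{o_p(n)}{n+2\alpha-2}\stackrel{p}{\rightarrow}0,
\end{equation*}
and the deterministic sequence in parentheses tends to $\sigma_0^2$ because $n/(n+2\alpha-2)\to1$ and $2\beta/(n+2\alpha-2)\to0$.

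The hard part will be the posterior concentration statement $\tfrac{1}{n}\E[\sum_i\delta_i^2\mid\ve{y}]\stackrel{p}{\rightarrow}0$, i.e.\ empirical-$L^2$ contraction of the fitted signal $\vesub{f}{\bs{z}}$ toward the truth. Here I would lean on the companion consistency results of this section for the predicted response $\hve{y}$ and the estimated state sequence $\hve{z}$: consistency of $\hve{z}$ guarantees that the correct covariance block, and hence the correct target $f^0_{z^0_i}$, is selected at all but $o(n)$ time points, while Gaussian-process posterior contraction controls the fitted values there. A secondary technical point is to confirm that the posterior second moment $\E[\sum_i\delta_i^2\mid\ve{y}]$ (which incorporates the posterior variance of the fit, not only its squared bias) is finite and vanishes; boundedness of $\{f_m\}$ on the compact support $\mathcal{T}$ together with the shrinkage form of the GP posterior covariance should suffice to bound it.
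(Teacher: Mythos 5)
Your opening reduction coincides with the paper's: the full conditional of $\sigma^2$ is inverse-Gamma with shape $\alpha+n/2$ and scale $\beta+\tfrac12\|\ve{y}-\vesub{f}{\bs z}\|^2$, and the tower property reduces everything to showing $\E\bigl[\|\ve{y}-\vesub{f}{\bs z}\|^2\,\big|\,\ve{y}\bigr]=n\sigma_0^2+o_p(n)$. From that point on, however, your argument has a genuine gap. The entire content of the theorem is concentrated in your posterior-contraction claim $\tfrac1n\E[\sum_i\delta_i^2\mid\ve{y}]\stackrel{p}{\to}0$, which you assert but do not prove. It does not follow from the companion results you cite: Theorems \ref{consist_y} and \ref{consist_z} control the averaged Kullback--Leibler divergence between predictive distributions (an $o(n)$ regret bound), not the empirical-$L^2$ contraction of the latent vector $\vesub{f}{\bs z}$ toward a true $\vesub{f}{\bs z 0}$; extracting the latter from the former requires additional work (and additional RKHS/regret hypotheses that the theorem as stated does not impose). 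Your decomposition also introduces a ``true'' state sequence $z_i^0$ and target $f^0_{z_i^0}$ that the statement never references and that the posterior over $\ve{z}$ need not concentrate on; nothing in your sketch controls the contribution of time points where the sampled $z_i$ disagrees with $z_i^0$.

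The paper closes the argument without any contraction statement: it expands $\E[\vess{f}{\bs z}{T}\vesub{f}{\bs z}+\vesup{y}{T}\ve{y}-2\vesup{y}{T}\vesub{f}{\bs z}\mid\ve{y}]$ exactly, using the closed-form Gaussian-mixture posterior $(\vesub{f}{\bs z}\mid\ve{y})\sim\sum_{\bs z}p(\ve{z})\,N\bigl(\vesub{K}{\bs z}(\vesub{K}{\bs z}+\sigma_0^2\vesub{I}{n})^{-1}\ve{y},\ \sigma_0^2\vesub{K}{\bs z}(\vesub{K}{\bs z}+\sigma_0^2\vesub{I}{n})^{-1}\bigr)$. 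The cross terms collapse into $\sigma_0^2\sum_{\bs z}p(\ve{z})\{\tr(\vesub{K}{\bs z}(\vesub{K}{\bs z}+\sigma_0^2\vesub{I}{n})^{-1})+\tr(\sigma_0^2(\vesub{K}{\bs z}+\sigma_0^2\vesub{I}{n})^{-2}\ve{y}\vesup{y}{T})\}$, and the only probabilistic input is the quadratic-form law of large numbers $\tfrac1n\sum_{\bs z}p(\ve{z})\vesup{y}{T}(\vesub{K}{\bs z}+\sigma_0^2\vesub{I}{n})^{-1}\ve{y}\stackrel{p}{\to}1$ under the marginal mixture law of $\ve{y}$, after which the two traces sum to $\tr(\vesub{I}{n})=n$. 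If you want to salvage your route, you must actually prove the contraction lemma (for instance by bounding the posterior variance trace and the smoothing bias $\sigma_0^4\|(\vesub{K}{\bs z}+\sigma_0^2\vesub{I}{n})^{-1}\vesub{f}{\bs z0}\|^2$ under explicit eigenvalue and RKHS-norm conditions), at which point your proof would need stronger hypotheses than the theorem states; the paper's direct computation avoids this entirely.
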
 The detailed derivation is provided in Appendix \ref{SM-sec:Consistency-sigma}.

For the parameter $\vesub{\theta}{z}=\{q_{mh}\}_{m,h=1,m\neq h}^M$, no informative prior distribution is specified. Hence we consider the maximum likelihood estimation based on the logarithm of the marginal distribution $p\left(\ve{y}\mid\vesub{\theta}{z} \right).$ Given the number of states $M$ and the initial state $z_0=1$ with an initial state probability $\pi_1=1$ at time point $t_0=0$, we have that $$p\left(\ve{y}\mid \vesub{\theta}{z}\right)=\sum_{\ve{z}} \left\{ p\left(\ve{y}\mid \ve{z}\right)\prod_{i=1}^n \exp\big[\ve{Q} (t_i-t_{i-1}) \big]_{z_{i-1}\, z_i} \right\}. $$ For the summation of each row in the transition probability matrix $\exp\big[\ve{Q} t \big]$ equals 1, we can express the marginal likelihood as \begin{align*}
	p\left(\ve{y}\mid \vesub{\theta}{z}\right)=\sum_{\ve{z}} \left\{ p\left(\ve{y}\mid \ve{z}\right)\left(\prod_{i=1}^n \vess{e}{z_{i-1}}{T} \exp\big[\ve{Q} (t_i-t_{i-1}) \big]\vesub{e}{z_i} \right) \underbrace{\vess{e}{z_n}{T} \exp\big[\ve{Q} (1-t_{n}) \big]\vesub{1}{M} }_{= 1} \right\},
\end{align*} where $\vesub{e}{m}$ is an $M$-dimensional column vector with the $m$-th component equal to one and all the others zero, and $\vesub{1}{M}$ is an $M$-dimensional column vector of ones.
\begin{theorem}
	Let $\vesub{\theta}{z_0}$ denote the true transition parameters. Under conditions (C1) and (C2) in     Appendix \ref{SM-sec:Consistency-theta_z}, the marginal likelihood estimator $\vesub{\hat\theta}{z}$ satisfies $\vesub{\hat\theta}{z} \stackrel{p}{\rightarrow}\vesub{\theta}{z_0}$ as $n \rightarrow +\infty$. 
\end{theorem}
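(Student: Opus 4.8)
The plan is to prove consistency by a Wald-type argument for $M$-estimators: I would show that the normalized log-likelihood converges uniformly over the parameter space to a deterministic limit whose unique maximizer is $\vesub{\theta}{z_0}$, and then invoke the standard argmax-continuity theorem. Write $\ell_n(\vesub{\theta}{z})=\log p\left(\ve{y}\big|\vesub{\theta}{z}\right)$ for the logarithm of the marginal likelihood displayed above, so that $\vesub{\hat\theta}{z}=\argmax_{\vesub{\theta}{z}}\ell_n(\vesub{\theta}{z})$. Condition (C1) should provide compactness of the parameter space with $\vesub{\theta}{z_0}$ in its interior, while each factor $\exp\big[\ve{Q}(t_i-t_{i-1})\big]_{z_{i-1}z_i}$ is a smooth function of the entries of $\ve{Q}$; this smoothness supplies the continuity and domination needed to upgrade pointwise convergence to uniform convergence at the end.

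The central step is to establish that $n^{-1}\ell_n(\vesub{\theta}{z})$ converges, for each fixed $\vesub{\theta}{z}$, to a deterministic limit $L(\vesub{\theta}{z})$. Because $\ve{Q}$ is irreducible and aperiodic, the hidden chain $z(t)$ possesses a unique stationary distribution and is geometrically ergodic, so I would first pass to the stationary version of the process and decompose $\ell_n$ into a telescoping sum of conditional log-contributions $\log p\left(y_i\big|y_1,\dots,y_{i-1},\vesub{\theta}{z}\right)$. For a classical HMM these conditional contributions form a stationary, ergodic sequence after the prediction filter forgets its initialization, and a law of large numbers (or Kingman's subadditive ergodic theorem applied to $\log p(\ve{y}|\vesub{\theta}{z})$) then delivers the limit. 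The obstacle specific to the FRVS model is that the observations are \emph{not} conditionally independent given the states, so the emission structure is a full multivariate Gaussian density rather than a product $\prod_i p(y_i|z_i)$. I would resolve this using condition (C2), which I expect to encode decay of the convolution-process covariance kernel $k_{mh}(t_i,t_j)$ with $|t_i-t_j|$: such decay induces weak dependence, ensures the prediction filter forgets its distant past, and restores the approximate additivity that makes the ergodic theorem applicable.

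Having obtained the limit, I would identify it as a relative-entropy rate: $L(\vesub{\theta}{z})-L(\vesub{\theta}{z_0})$ equals the negative of a Kullback--Leibler divergence rate $D_{\mathrm{KL}}$ between the stationary law of $\ve{y}$ under $\vesub{\theta}{z}$ and under $\vesub{\theta}{z_0}$, hence is nonpositive and vanishes only when the two laws coincide. The model identifiability established earlier then forces $\vesub{\theta}{z}=\vesub{\theta}{z_0}$ (up to the inevitable label swapping), so that $L$ attains its unique maximum at the truth. Combining the pointwise limit with stochastic equicontinuity over the compact parameter space, which follows from the continuity of $\exp[\ve{Q}t]$ in $\ve{Q}$ together with a dominating integrable envelope, yields a uniform law of large numbers for $n^{-1}\ell_n$; the argmax theorem then gives $\vesub{\hat\theta}{z}\stackrel{p}{\rightarrow}\vesub{\theta}{z_0}$.

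The hardest part will be the ergodic-limit step, precisely because the conditional dependence among the $y_i$ removes the product-form emission densities that drive the classical HMM consistency proofs; everything hinges on translating the kernel-decay content of (C2) into a quantitative forgetting property of the filter so that the dependent conditional contributions still obey a law of large numbers. A secondary point worth verifying is that the observation scheme accumulates information as $n\to+\infty$, i.e. that the Fisher information for $\vesub{\theta}{z}$ grows, which together with the unique maximizer of $L$ guarantees that the limiting objective genuinely separates $\vesub{\theta}{z_0}$ from competing parameter values.
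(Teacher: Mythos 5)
There is a genuine gap, and it starts with the hypotheses. Conditions (C1) and (C2) in the paper are not what you take them to be: (C1) asserts continuity of $\frac{\partial^2 \log p(\ve{y}|\vesub{\theta}{z})}{\partial \vesub{\theta}{z}\partial \vess{\theta}{z}{T}}$ in a neighborhood of $\vesub{\theta}{z0}$ together with positive definiteness of the Fisher information matrix $\vesub{B}{n}$, and (C2) is a quantitative lower bound on a normalized quadratic form in that Hessian evaluated at an intermediate point. Neither condition provides compactness of the parameter space, nor any decay of the convolution kernel $k_{mh}(t_i,t_j)$ in $|t_i-t_j|$. Consequently the two pillars of your Wald-type argument --- a uniform law of large numbers over a compact parameter set, and a forgetting property of the prediction filter derived from kernel decay --- have no support from the stated assumptions. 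The step you yourself flag as the hardest, namely the ergodic limit of $n^{-1}\log p(\ve{y}|\vesub{\theta}{z})$ when the $y_i$ are conditionally dependent given $\ve{z}$, is precisely the step you do not carry out; it is deferred to a conjectured content of (C2) that is not there. As written, the argument does not close.

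The paper takes an entirely different, local route that sidesteps the ergodic-limit problem. It works with the score of the marginal likelihood rather than the likelihood itself: a key lemma shows that $\vesup{\pi}{T}\vesub{P}{1}\,\frac{\partial \exp[\ve{Q}t]}{\partial q_{mh}}\,\vesub{P}{2}\vesub{1}{M}=0$, a purely algebraic consequence of the integral representation of $\partial\exp[\ve{Q}t]/\partial q_{mh}$ and the fact that transition matrices have unit row sums. This yields the Bartlett identities $\E[\partial \log p(\ve{y}|\vesub{\theta}{z})/\partial q_{(a)}]=0$ and the information identity (the paper's condition (C3)) \emph{without any appeal to conditional independence or mixing}, because the dependence of $\ve{y}$ on $\ve{z}$ enters only through $\int p(\ve{y}|\ve{z})\,d\ve{y}=1$. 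A Taylor expansion of the score around $\vesub{\theta}{z0}$, combined with (C1), (C2), and an equivalence of Brouwer's fixed point theorem as in Aitchison and Silvey, then gives the existence of a root $\vesub{\hat\theta}{z}$ of the likelihood equation within $\delta$ of $\vesub{\theta}{z0}$ with probability tending to one. Note this is a Cram\'{e}r-type conclusion (a weakly consistent root of the likelihood equation), which is weaker than the global argmax consistency your plan aims at; if you wish to pursue the Wald route you would need to supply compactness, a uniform LLN under the conditional dependence, and an identification argument for the limiting objective, none of which are furnished by the paper's conditions.
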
 The proof is given in Appendix \ref{SM-sec:Consistency-theta_z}.

\subsection{Information Consistency of the Prediction $\hve{y}$}
We examine the information consistency of the prediction $\hve{y}=\left(\hat{y}_1,\cdots, \hat{y}_n\right)^T$, in relation to the actual observations $\ve{y}$. Given $f_z(t)$, the observations $y_i$ ($i=1,\cdots,n$) are conditionally independent and follow a Gaussian distribution with mean $f_{z_i}(t_i)$. The combined function $\ve{f}(t)=(f_1(t),\cdots, f_M(t))^T$ adheres to the convolution GP framework as discussed in Section \ref{sec:The Model}. Given the variables $\fall$ and the hidden state sequence $\ve{z}$, the expected value of the observations is given by $\E\left( \ve{y}\mid \fall,\ve{z}\right)=\vesub{f}{\bs z}$. 

Let $\hvesub{\theta}{f}$ denote the empirical Bayesian estimator of the hyper-parameters $\vesub{\theta}{f}$ pertaining to the covariance structure. Let $\vesub{f}{0}(\cdot)=\left(f_{10}(\cdot),\cdots, f_{M0}(\cdot)\right)^T$ be the true underlying function and $\vesub{f}{\bs{z} 0}=\left(f_{z_1 0}(t_1),\cdots, f_{z_n 0}(t_n) \right)^T$ represent its realizations for the given $\{t_i\}_{i=1}^n$. Then the true mean of each observation $y_i$ is given by $f_{z_i 0}(t_i)$. Define $p_{cgp}\left(\ve{y}\mid \ve{z}\right)=\int{p\left(\ve{y}\mid \vesub{f}{\bs z},\ve{z}\right)p(\vesub{f}{\bs z} \mid \ve{z} )d\vesub{f}{\bs z} } $ and $p_0\left(\ve{y}\mid \ve{z}\right)=p\left(\ve{y}\mid\vesub{f}{\bs z 0}, \ve{z}\right) $, where $p_{cgp}\left(\ve{y}\mid \ve{z}\right)$ is the Bayesian predictive distribution of $\ve{y}$ based on the convolution GP model. We say that $p_{cgp}\left(\ve{y}\mid \ve{z}\right)$ achieves information consistency if $\frac{1}{n}\E_{\bs{t}}\left(\text{KL}\left[p_0\left(\ve{y}\mid \ve{z}\right), p_{cgp}\left(\ve{y}\mid \ve{z}\right)\right]\right)\rightarrow 0 \text{ as } n\rightarrow +\infty,$ where $\E_{\bs{t}}$ denotes the expectation under the distribution of $\{t_i\}_{i=1}^n$ and $\text{KL}\left[p_0\left(\ve{y}\mid \ve{z}\right), p_{cgp}\left(\ve{y}\mid \ve{z}\right)\right]$ is the Kullback-Leibler divergence between $p_0\left(\cdot\mid \ve{z}\right)$ and $ p_{cgp}\left(\cdot\mid \ve{z}\right)$, defined as follows:
\begin{equation*}\text{KL}\left[p_0\left(\ve{y}\mid \ve{z}\right), p_{cgp}\left(\ve{y}\mid \ve{z}\right)\right]=\int{p_0\left(\ve{y}\mid \ve{z}\right)\log\frac{p_0\left(\ve{y}\mid \ve{z}\right)}{ p_{cgp}\left(\ve{y}\mid \ve{z}\right)} }d \ve{y}.\end{equation*}
\begin{theorem}\label{consist_y}
	Under the model defined in Equation \eqref{eq:Model-part1} with specified $\mu_{z(t)}(\ve{x}(t))=0$ and the covariate $\ve{x}(t)$ being the time point $t$, and assuming that $\hvesub{\theta}{f}\rightarrow \vesub{\theta}{f}$ almost surely as $n \rightarrow +\infty$, the predictions $\hve{y}$ are information consistent if the RKHS norm $\|\vesub{f}{0} \|_{\bs{K}}^2$ is bounded and the expected regret term $\E_{\bs{t}}\left( \log|(\delta+ \frac{1}{2 {\sigma}^2}) \vesub{K}{\bs{z}}+ \vesub{I}{n} | \right)=o(n)$. 
\end{theorem}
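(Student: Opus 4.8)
The plan is to convert the Kullback--Leibler divergence into a closed form, exploiting that both of its arguments are multivariate Gaussian, and then to show that each resulting term is controlled by exactly one of the two hypotheses. As established in Section~\ref{sect:bayesian_inf}, marginalizing the convolution-GP prior $\vesub{f}{\bs z}\sim N(\vesub{0}{n},\vesub{K}{\bs z})$ against the Gaussian likelihood makes $p_{cgp}(\ve{y}\mid\ve{z})$ the density of $N(\vesub{0}{n},\vesub{K}{\bs z}+\sigma^2\vesub{I}{n})$, while $p_0(\ve{y}\mid\ve{z})=p(\ve{y}\mid\vesub{f}{\bs z 0},\ve{z})$ is the density of $N(\vesub{f}{\bs z 0},\sigma^2\vesub{I}{n})$. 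Writing $\ve{A}=\vesub{K}{\bs z}+\sigma^2\vesub{I}{n}$ and applying the Gaussian--Gaussian KL formula, I would obtain
\begin{equation*}
\text{KL}\!\left[p_0,p_{cgp}\right]=\tfrac{1}{2}\,\vess{f}{\bs z 0}{T}\vesup{A}{-1}\vesub{f}{\bs z 0}+\tfrac{1}{2}\big(\sigma^2\tr(\vesup{A}{-1})-n\big)+\tfrac{1}{2}\log\big|\vesub{I}{n}+\sigma^{-2}\vesub{K}{\bs z}\big|.
\end{equation*}

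Next I would bound the three pieces. The middle piece is nonpositive, since every eigenvalue of $\sigma^2\vesup{A}{-1}$ lies in $(0,1]$, and may be dropped. For the quadratic piece, $\ve{A}\succeq\vesub{K}{\bs z}$ yields $\vesup{A}{-1}\preceq\vess{K}{\bs z}{-1}$, and a representer-theorem argument identifies $\vess{f}{\bs z 0}{T}\vess{K}{\bs z}{-1}\vesub{f}{\bs z 0}$ with the squared norm of the minimum-norm interpolant of the values $\{f_{z_i 0}(t_i)\}$ in the vector-valued RKHS of the convolution kernel; as $\vesub{f}{0}$ itself interpolates these values, this is at most $\|\vesub{f}{0}\|_{\bs K}^2$. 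Hence
\begin{equation*}
\text{KL}\!\left[p_0,p_{cgp}\right]\le\tfrac{1}{2}\,\|\vesub{f}{0}\|_{\bs K}^2+\tfrac{1}{2}\log\big|\vesub{I}{n}+\sigma^{-2}\vesub{K}{\bs z}\big|.
\end{equation*}
The log-determinant here matches the stated regret term up to a bounded multiplicative factor: for any positive constants $a,b$, the quantities $\log|\vesub{I}{n}+a\vesub{K}{\bs z}|$ and $\log|\vesub{I}{n}+b\vesub{K}{\bs z}|$ agree up to such a factor, so the $o(n)$ condition with $a=\delta+\tfrac{1}{2\sigma^2}$ implies it for $a=\sigma^{-2}$; to reproduce the precise form in the statement I would instead lower-bound $p_{cgp}$ by an evidence lower bound built from an auxiliary Gaussian whose precision is indexed by $\delta$.

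Dividing by $n$ and taking $\E_{\bs t}$ then finishes the core argument: the first term gives $\tfrac{1}{2n}\|\vesub{f}{0}\|_{\bs K}^2=O(1/n)\to 0$ by boundedness of the RKHS norm, and the second gives $\tfrac{1}{2n}\E_{\bs t}(\log|\cdots|)=o(n)/n\to 0$ by the regret hypothesis, so $\tfrac{1}{n}\E_{\bs t}(\text{KL}[p_0,p_{cgp}])\to 0$. It remains to account for the fact that the deployed predictive uses the empirical-Bayes plug-in $\hvesub{\theta}{f}$ rather than the true $\vesub{\theta}{f}$; I would absorb this using the assumed $\hvesub{\theta}{f}\to\vesub{\theta}{f}$ almost surely together with the smooth dependence of $\vesub{K}{\bs z}$ on its hyperparameters.

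The main obstacle is precisely this last transfer. It is not a pointwise continuity statement, because the quantities are $n$-dimensional and $\hvesub{\theta}{f}$ is itself formed from the same growing sample; I would need a spectral-perturbation bound for $\log|\vesub{I}{n}+\sigma^{-2}\vesub{K}{\bs z}(\hvesub{\theta}{f})|$ that is uniform enough in $n$ to keep the discrepancy $o(n)$ after dividing by $n$, most plausibly by controlling $\vesub{K}{\bs z}(\hvesub{\theta}{f})-\vesub{K}{\bs z}(\vesub{\theta}{f})$ in operator norm through the differentiability of the convolution kernel in $\vesub{\theta}{f}$. A secondary, more routine difficulty is making the RKHS-interpolation inequality fully rigorous for the state-mixed Gram matrix $\vesub{K}{\bs z}=\{k_{z_i z_j}(t_i,t_j)\}$, whose blocks couple the $M$ coordinate functions.
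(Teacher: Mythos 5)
Your argument is correct in its essentials but takes a genuinely different route from the paper's. The paper never computes $\text{KL}[p_0,p_{cgp}]$ in closed form; it instead invokes the Fenchel--Legendre duality $\E_{\bar P}[\phi]\le \log \E_P[\exp(\phi)]+\text{KL}[\bar P,P]$ with $\phi=\log p(\ve{y}\mid \vesub{f}{\bs z},\ve{z})$, taking $P$ to be the (empirical-Bayes) convolution-GP prior and $\bar P$ a hand-built Gaussian posterior centred at $\vesub{K}{\bs z}\vesub{\rho}{\bs z}$, generated by fictitious observations $\hve{y}=(\vesub{K}{\bs z}+\hat\sigma^2\vesub{I}{n})\vesub{\rho}{\bs z}$ with a free variance parameter $\hat\sigma^2$; optimizing $\hat\sigma^2=2\sigma^2/(2\sigma^2\delta+1)$ at the end is exactly what produces the constant $\delta+\tfrac{1}{2\sigma^2}$ in the stated regret term. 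Your direct Gaussian--Gaussian KL computation is more elementary and exploits a feature the general Seeger-style machinery does not need, namely that both $p_0(\cdot\mid\ve{z})$ and $p_{cgp}(\cdot\mid\ve{z})$ are exactly Gaussian here; it yields the cleaner bound $\tfrac12\|\vesub{f}{0}\|_{\bs K}^2+\tfrac12\log|\vesub{I}{n}+\sigma^{-2}\vesub{K}{\bs z}|$ with constant $1$ on the RKHS norm, whereas the paper only obtains a constant $c$ via Lemma \ref{Lemma:rhokrho} bounding $\vess{\rho}{\bs z}{T}\vesub{K}{\bs z}\vesub{\rho}{\bs z}\le c\|\vesub{f}{0}\|_{\bs K}^2$. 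Your observation that $\log|\vesub{I}{n}+a\vesub{K}{\bs z}|$ and $\log|\vesub{I}{n}+b\vesub{K}{\bs z}|$ are equivalent up to the factor $b/a$ correctly reconciles the two regret constants, and your minimum-norm-interpolant treatment of $\vess{f}{\bs z 0}{T}\vess{K}{\bs z}{-1}\vesub{f}{\bs z 0}$ is legitimate because $\vesub{K}{\bs z}$ is precisely the Gram matrix of the representers $\{\ve{\mathrm{K}}(\cdot,t_i)\vesub{e}{z_i}\}$ in the vector-valued RKHS, so the worry you raise about the state-mixed blocks is resolvable exactly as you suggest. On the one point you flag as the main obstacle --- transferring the bound from $\vesub{\theta}{f}$ to the plug-in $\hvesub{\theta}{f}$ --- the paper is no more careful than you are: it folds $\hvesub{K}{\bs z}$ into the prior measure $P$ and simply asserts, from $\hvesub{\theta}{f}\to\vesub{\theta}{f}$ and continuity of the kernel, the existence of constants $\lambda$ and $\delta$ bounding $-\log|\hvess{K}{\bs z}{-1}\vesub{K}{\bs z}|$ and the associated trace and quadratic discrepancies; your proposed operator-norm perturbation of $\vesub{K}{\bs z}(\hvesub{\theta}{f})-\vesub{K}{\bs z}(\vesub{\theta}{f})$ is, if anything, a more honest account of what that step actually requires.
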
 
The proof is given in Appendix \ref{SM-sec:Consistency-y}. The error bound is specified in Equation \eqref{eq:bound} there.
\begin{remark}
 {Theorem \ref{consist_y} assumes the consistency of the empirical Bayesian estimator of the hyper-parameters. This assumption is commonly adopted in Gaussian process regression when the hyper-parameters are estimated via marginal likelihood maximization. Such consistency results typically require that the covariance kernel is correctly specified and that the true underlying function belongs to the RKHS induced by the kernel. In addition, the observation points should become sufficiently dense in the domain as the sample size increases. Under these conditions, the marginal likelihood estimator of the kernel hyper-parameters can be shown to be consistent; see, for example, \citet{Yi2011Penalized} and \citet{wang2014generalized}. }
\end{remark}

\subsection{Information Consistency of the Estimated Hidden State Sequence $\hve{z}$}
We now show that the estimated hidden state sequence $\hve{z}=\left( \hat{z}_1,\cdots, \hat{z}_n\right)^T$ by the curve-based sampling scheme is information consistent as $n\rightarrow\infty$. This is equivalent to demonstrating the consistency of $\hve{g}(t)=\left( \hat{g}_1(t),\cdots, \hat{g}_{M-1}(t)\right)^T$.

Let $\hvesub{\theta}{g}$ denote the empirical Bayesian estimator of the hyper-parameters $\vesub{\theta}{g}$ related to the covariance functions defined in Equation \eqref{eq:Cov_g}. Let $\vesub{g}{ 0}(\cdot)=\left( {g}_{10}(\cdot),\cdots, g_{M\text{-}1, 0}(\cdot)\right)^T$ be the true functions that capture the underlying probabilities for the hidden state process and $\gall_{0}$ represent their realization values at $\{t_i\}_{i=1}^n$, such that $\gall_{0}=\left( \vesub{g}{0}(t_1)^T,\cdots, \vesub{g}{0}(t_n)^T\right)^T$, where $\vesub{g}{0}(t_i)=\left( g_{10}(t_i),\cdots, g_{M\text{-}1, 0}(t_i)\right)^T$. The true probability for the hidden state $z_i$ is given by $\frac{\exp\left( g_{z_i 0}(t_i)\mathbb{I}\left(z_i \neq M\right) \right)}{\sum_{m=1}^{M-1}\exp\left( g_{m0}(t_i)\right)+1}$. Define ${p}_{gp}(\ve{z})=\int p\left( \ve{z}\mid \gall\right){p}(\gall) d \gall$ and $p_0\left( \ve{z}\right)=p\left( \ve{z}\mid \gall_{0}\right)$, where ${p}_{gp}(\ve{z})$ is the Bayesian predictive distribution of $\ve{z}$ based on the GP model. Then ${p}_{gp}(\ve{z})$ achieves information consistency if $\frac{1}{n}\E_{\bs{t}}\left( \text{KL}\left[ p_{0}(\ve{z}), p_{gp}(\ve{z})\right]\right) \rightarrow 0$ as $n\rightarrow +\infty$.
\begin{theorem} \label{consist_z}
	Under the GP model assumption for $\ve{g}(t)$, assuming that $\hvesub{\theta}{g}\rightarrow \vesub{\theta}{g}$ almost surely as $n \rightarrow +\infty$ and the proposal set size $\mathbb{S}$ is sufficiently large, the prediction $\hve{z}$ obtained by the curve-based sampling scheme is information consistent if the RKHS norm $\|\vesub{g}{0}\|_{\bs C}^2$ is bounded and the expected regret term $\E_{\ve{t}}\left( \log\left|\vesub{I}{n(M\text{-}1)}+\mathbb{S}\ve{C} \right| \right) =o(n)$. 
\end{theorem}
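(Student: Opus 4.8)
The plan is to mirror the proof of Theorem~\ref{consist_y}, replacing the Gaussian observation model for $\ve{y}$ by the categorical (softmax) link that connects the state sequence $\ve{z}$ to the latent curves $\gall$. The goal is to show that the normalized divergence $\frac{1}{n}\E_{\bs{t}}\left(\text{KL}\left[p_0(\ve{z}), p_{gp}(\ve{z})\right]\right)$ vanishes as $n\rightarrow+\infty$. First I would establish an information consistency inequality decomposing this divergence into a complexity penalty governed by the RKHS norm $\|\vesub{g}{0}\|_{\bs{C}}^2$ and a regret (log-determinant) term governed by $\log\left|\vesub{I}{n(M\text{-}1)}+\mathbb{S}\ve{C}\right|$. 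Since $\hvesub{\theta}{g}\rightarrow\vesub{\theta}{g}$ almost surely, the covariance $\ve{C}$ entering $p_{gp}$ is asymptotically the correct one, so it suffices to argue with the limiting kernel.

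Concretely, writing $\ell_{\ve{z}}(\gall)=\log p(\ve{z}\mid\gall)$ for the categorical log-likelihood and $p_{gp}(\ve{z})=\int p(\ve{z}\mid\gall)\,p(\gall)\,d\gall$, I would invoke the generalized information consistency bound for GP models with a non-Gaussian likelihood, in the spirit of \citet{wang2014generalized}. Combining the variational representation of the marginal likelihood with the Gaussian (Laplace) approximation already developed in Section~\ref{sec:update thetag and g} yields
\begin{equation}
	\text{KL}\left[p_0(\ve{z}), p_{gp}(\ve{z})\right] \leq \tfrac{1}{2}\|\vesub{g}{0}\|_{\bs{C}}^2 + \tfrac{1}{2}\log\left|\vesub{I}{n(M\text{-}1)}+\mathbb{S}\ve{C}\right|. \label{eq:KLbound-z}
\end{equation}
Here the proposal set size $\mathbb{S}$ plays the role that the inverse noise variance $\sigma^{-2}$ plays in Theorem~\ref{consist_y}: the curvature matrix $\ve{D}$ of the categorical log-likelihood accumulated over the $\mathbb{S}$ proposal sequences is bounded above by $\mathbb{S}$ times the single-sample Fisher information, which is itself uniformly bounded because every softmax probability lies in $[0,1]$. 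This uniform control lets the general curvature term $\log|\vesub{I}{n(M\text{-}1)}+\ve{C}^{1/2}\ve{D}\ve{C}^{1/2}|$ be dominated by $\log|\vesub{I}{n(M\text{-}1)}+\mathbb{S}\ve{C}|$.

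Given~\eqref{eq:KLbound-z}, the conclusion is immediate. Dividing by $n$, the first term $\frac{1}{2n}\|\vesub{g}{0}\|_{\bs{C}}^2\rightarrow0$ because the RKHS norm is assumed bounded, while, taking expectation over $\{t_i\}_{i=1}^n$, the second term $\frac{1}{2n}\E_{\bs{t}}\left(\log|\vesub{I}{n(M\text{-}1)}+\mathbb{S}\ve{C}|\right)\rightarrow0$ by the assumed $o(n)$ rate of the expected regret term. Hence $\frac{1}{n}\E_{\bs{t}}\left(\text{KL}\left[p_0(\ve{z}), p_{gp}(\ve{z})\right]\right)\rightarrow0$, which is exactly information consistency of $\hve{z}$, equivalently of $\hve{g}(t)$.

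The main obstacle is establishing~\eqref{eq:KLbound-z} rigorously, since, unlike the Gaussian case of Theorem~\ref{consist_y}, the marginal $p_{gp}$ has no closed form and the $\mathbb{S}$-fold Fisher information does not arise automatically. Two points need care. First, the Laplace approximation to $p_{gp}$ must be shown to incur error negligible relative to the leading terms; this is where ``$\mathbb{S}$ sufficiently large'' enters, as the aggregated log-likelihood becomes increasingly quadratic by a central-limit effect across the $\mathbb{S}$ proposals, and the MH sampler of Section~\ref{sec:detailed for Cpd-z} is assumed converged so the proposals genuinely follow the target. Second, one must supply matching upper and lower bounds on the curvature $\ve{D}$: the upper bound gives~\eqref{eq:KLbound-z}, while a lower bound of order $\mathbb{S}$ ensures the regret term is genuinely of the stated magnitude rather than degenerating, so that the $o(n)$ hypothesis is the binding condition.
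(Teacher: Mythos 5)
Your high-level plan matches the paper's proof: both arguments bound $-\log p_{gp}(\ve{z})+\log p_0(\ve{z})$ by an RKHS-norm term plus a log-determinant regret term, and both exploit that the curvature matrix $\ve{D}$ of the aggregated categorical log-likelihood satisfies two-sided bounds of order $\mathbb{S}$ (upper bound $\ve{D}\leq\mathbb{S}\,\vesub{I}{n(M-1)}$ to dominate the regret by $\log|\vesub{I}{n(M\text{-}1)}+\mathbb{S}\ve{C}|$, lower bound $\ve{D}\geq c\,\mathbb{S}\,\vesub{I}{n(M-1)}$ to make ``$\mathbb{S}$ sufficiently large'' do real work). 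You have correctly identified all the moving parts.

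However, the central inequality is asserted rather than derived, and your stated plan for closing that gap misidentifies where the difficulty lies. You propose to control the error of the Laplace approximation to $p_{gp}$, invoking a central-limit effect across the $\mathbb{S}$ proposals and convergence of the MH sampler. Neither is needed. The paper's argument applies the Fenchel--Legendre duality $\E_{\bar P}[\phi(\ve{g})]\leq\log\E_P[\exp(\phi(\ve{g}))]+\text{KL}[\bar P,P]$ with $\phi=\log p(\ve{z}\mid\ve{g})$, $P$ the GP prior (with estimated hyper-parameters), and $\bar P$ a \emph{freely chosen} Gaussian comparison measure $N\big(\gall_0,(\vesup{C}{-1}+\ve{D})^{-1}\big)$ centered at the true $\gall_0$; this inequality is exact for any $\bar P$, so there is no approximation error to control. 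What remains is (i) an explicit Gaussian--Gaussian KL computation, (ii) a lower bound on $\E_{\bar P}[\log p(z_i\mid\ve{g}(t_i))]$ via a second-order Taylor expansion at $\vesub{g}{0}(t_i)$ using the uniform derivative bounds on the softmax log-likelihood (gradient in $[-1,1]^{M-1}$, negative Hessian strictly between $\vesub{0}{M-1}$ and $\vesub{I}{M-1}$, which is the content of a separate positivity lemma in the paper), and (iii) the observation that for $\mathbb{S}\geq(\delta+1)/c$ one has $\ve{D}\geq(\delta+1)\vesub{I}{n(M-1)}$, which forces $\tr\big[(\vesub{I}{n(M\text{-}1)}+(\delta+1)\ve{C})(\vesub{I}{n(M\text{-}1)}+\ve{DC})^{-1}\big]-n(M-1)\leq 0$. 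That last step, not a Laplace accuracy argument, is the actual role of the ``$\mathbb{S}$ sufficiently large'' hypothesis. Without these three pieces your inequality \eqref{eq:KLbound-z} remains a citation to a result you have not verified applies to this multinomial, $\mathbb{S}$-replicated setting.
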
 The proof is given in Appendix \ref{SM-sec:Consistency-z}. The error bound is specified in Equation \eqref{eq:bound-z} there.
\begin{remark}
	The regret terms $\log|(\delta+ \frac{1}{2 {\sigma}^2}) \vesub{K}{\bs{z}}+ \vesub{I}{n} | $
	and $\log\left|\vesub{I}{n(M\text{-}1)}+\mathbb{S}\ve{C} \right|$
	in Theorem \ref{consist_y} and Theorem \ref{consist_z}
	depend on the covariance functions specified in the GP priors and the distribution of $\ve{t}$. It can be shown that for some widely used covariance functions, such as linear, squared exponential and Mat\'{e}rn class, the expected regret terms are of order $o(n)$; see \citet{seeger2008information} for the detailed discussion. 
\end{remark}

\section{Simulation Studies} \label{sec:Simu}

In this section, we present a comprehensive evaluation of the proposed method through simulation studies. Each scenario is designed to assess different aspects of the method's performance, including the hidden state sequence estimation, the recovery of the underlying functional data and the fitting of the observations.  {Several additional scenarios, including the cases of larger noise variance (Appendix \ref{SM-sec:Scenario-largersigma20.02}), multiple state transitions (Appendix \ref{SM-sec:Scenario1-J4}), state space misspecification (Appendix \ref{SM-sec:Scenario-misspec})  and larger sample size (Appendix \ref{SM-sec:Scenario2-n150}), are presented in Appendix \ref{SM-sec:Additional-simus}.} For all methods, we used $1,000$ posterior draws after discarding $4,000$ burn-in samples. We employed parallel processing across $\mathbb{S}$ cores to generate the underlying probability curves $\ve{g}(\cdot)$. The Gelman-Rubin statistics for the parameters were calculated across five chains and were observed to be less than $1.1$, indicating that the MCMC process converged \citep{gelman2007data}.

For the hidden state sequence estimation, we compare the proposed approach with the naive Gibbs sampling (NGS) across different data fluctuation conditions in Scenarios 1 and 2. This comparison assesses the performance of our method under varying levels of data volatility. The NGS refers to the approach where each $z_i$ is sampled based on the previously sampled state sequence $\vesub{z}{\text{-}i}=\ve{z}\backslash z_i$, i.e., $z_i \sim p\left(z_i \mid \fall, \vesub{z}{\text{-}i}, \Theta,\ve{y} \right)$, as discussed in Section \ref{sec:cond_pz}. In Scenario 3, we address the challenge of handling a very large sample size by incorporating the NNGP approach to evaluate the method's scalability and efficiency.

\subsection{Scenario 1}
In Scenario 1, we consider two states, $M=2$. The hidden state process is set as $z(t)=1$ when $t \in[0,0.3) \cup[0.7,1)$, and $z(t)=2$ when $t \in[0.3,0.7).$ The time points $\{t_i\}_{i=1}^n$ are equally spaced in the range $(0,1)$. The realizations for the underlying functions $\{f_m(t),m=1,2\}$ at these time points were generated from a convolution GP with hyper-parameters $\left\{ \upsilon_{10}=0.1,\upsilon_{11}=0.1, \vesub{A}{10}=1, \vesub{A}{11}=0.1\right\}$ and $\left\{ \upsilon_{20}=0.1,\upsilon_{21}=0.5, \vesub{A}{20}=1, \vesub{A}{21}=1\right\}$, respectively. With these parameter settings, $f_1(t)$ exhibits a lower intensity of fluctuations with stronger serial correlations compared to $f_2(t)$, while the curves also exhibit some between correlation. The noise terms $\{\epsilon_i\}_{i=1}^n$ were independently drawn from a Gaussian distribution $N(0,\, 0.01)$, yielding the final observations $\ve{y}$. One realization of this process is depicted in Figure \ref{fig:Scenario1-fz-obs}. By using two sample sizes, $30$ and $60$ for $n$, we tested the effectiveness of the proposed curve-based sampling scheme (CBSS) method employing the independent MH algorithm against the NGS approach. The number of state sequences in $\ve{Z}$, i.e., the dimension of $\mathcal{S}$, denoted as $\mathbb{S}$, was set to $10$ and $20$. 
The accuracy of the unknown variables ($\ve{z}$ and $\vesub{f}{\bs{z}}$) was evaluated and compared.
\begin{figure}
	\centering
	\subfloat[Scenario 1.]{	
		\includegraphics[width=1\linewidth]{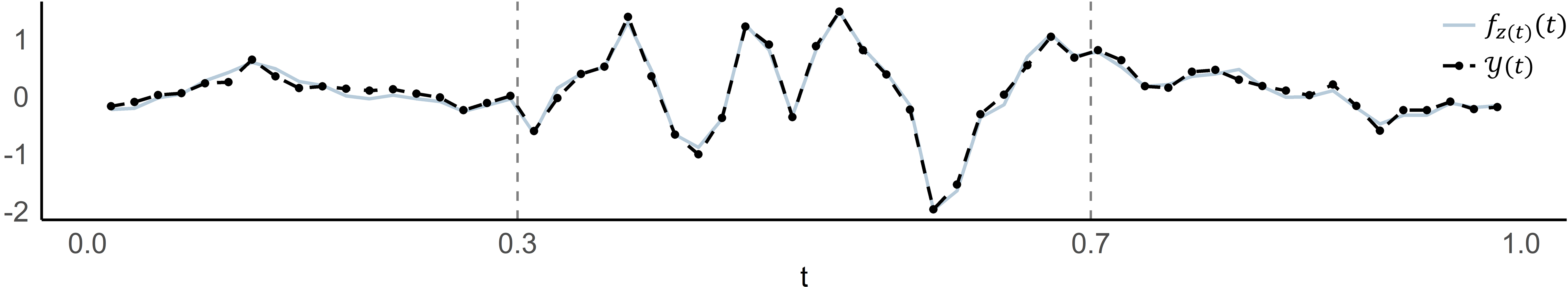}
		\label{fig:Scenario1-fz-obs}
	}\\
	\subfloat[Scenario 2.]{
		\includegraphics[width=1\linewidth]{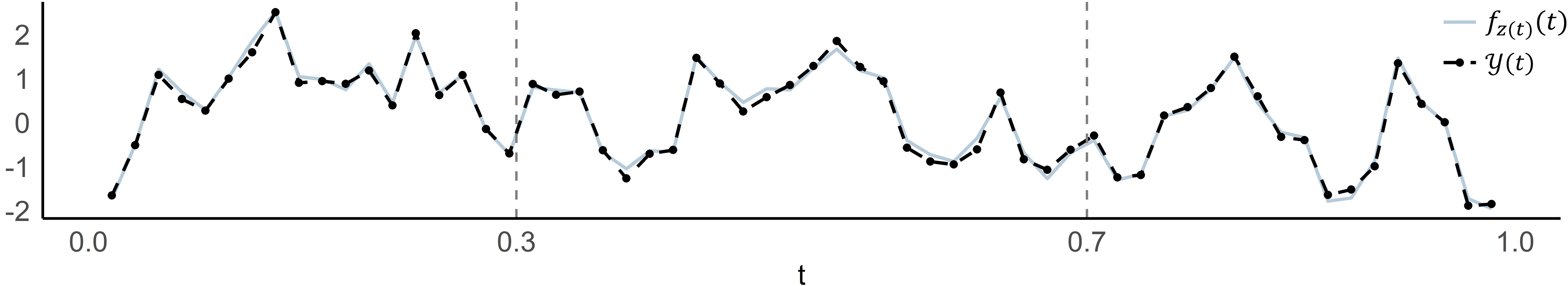}
			\label{fig:scenario2-obs2}}
\caption{The plot for one example of the observations and the underlying functional data, where the hidden state is is $2$ for the data in the range of $[0.3,0.7)$ and $1$ otherwise. (a) Scenario 1.  The variation in the area of $[0.3,0.7)$ is greater than in the other areas. (b) Scenario 2. The amplitudes of the variations are the same, only the dependency between the observations in the range of $[0.3,0.7)$ is greater than in other areas.}
\end{figure}

Table \ref{table:Scenario12-sim} presents the classification performance for estimating the state sequence, goodness of fit measures and computational cost based on fifty replications.  {The average values of the standard classification metrics (accuracy, kappa, precision, specificity, and f1 score) were calculated.} The proposed FRVS model and CBSS method overall outperforms the NGS approach. The difference between the estimated $\ve{f}_{\bs{z}}$ and the observations $\ve{y}$ was assessed using the root mean square error (rmse) and mean absolute error (mae) as measures of goodness-of-fit. The average rmse and mae values based on fifty replications are displayed. It is evident that the fitting performance of CBSS method surpasses that of NGS, exhibiting smaller average rmse and mae values. As expected, performance improves as the sample size increases. 
\begin{table}
	\caption{Comparative analysis of state sequence estimation performance and goodness-of-fit measures and computational cost between BinSeg, CBSS and NGS methods for Scenario 1 and 2. Average metrics are reported based on fifty replications, with varying sample sizes ($n=30, 60$) and proposal set sizes ($\mathbb{S}=10, 20$).}
	\label{table:Scenario12-sim}
	\centering 
    \footnotesize
	\begin{tabular}{llcccccccc}
		\toprule
		& &\multicolumn{4}{c}{$n=30$} &\multicolumn{4}{c}{$n=60$}\\ 
        \cmidrule(r){3-6}\cmidrule(l){7-10}
		& & \multirow{2}{*}{ {BinSeg}} &\multirow{2}{*}{NGS} &CBSS  & CBSS & \multirow{2}{*}{ {BinSeg}}  & \multirow{2}{*}{NGS} &CBSS  & CBSS  \\    
        & & & &($\mathbb{S}=10$) &  ($\mathbb{S}=20$) & & & ($\mathbb{S}=10$) & ($\mathbb{S}=20$) \\   
        \midrule
        \multicolumn{2}{l}{Scenario 1} & \\ \cmidrule{1-2}
		&\multicolumn{9}{l}{\textit{State sequence estimation performance}} \\
		&Accuracy & $0.7400$ & $0.7467$ & $0.9487$ & $0.9420$ & $0.7813$ & $0.8180$ & $0.9717$ & $0.9770$ \\
		&Kappa & $0.4247$ & $0.4816$ & $0.8936$ & $0.8789$ &$0.5132$ & $0.5885$ & $0.9412$ & $0.9522$ \\
		&Precision & $0.8022$ & $0.8546$ & $0.9704$ & $0.9566$ & $0.8185$ & $0.8268$ & $0.9828$ & $0.9839$ \\
		&Specificity & $0.5917$ & $0.7267$ & $0.9517$ & $0.9300$ & $0.6358$ & $0.6400$ & $0.9733$ & $0.9750$ \\
	   &F1 & $0.7951$ & $0.7556$ & $0.9762$ & $0.9517$ & $0.8322$ &$0.8616$ & $0.9782$ & $0.9807$ \\ 
	&	\multicolumn{9}{l}{\textit{Goodness-of-fit}} \\
	&	Rmse & - & $0.1231$ & $0.0700$ & $0.0788$ & -& $0.1792$ &$0.0739$ & $0.0728$ \\
	&	Mae  & - & $0.0850$ & $0.0530$ & $0.0591$ & - & $0.0971$ &$0.0546$ & $0.0541$ \\ 
    &	\multicolumn{9}{l}{\textit{ {Computational cost (min)}}} \\
	&	& $7.1$e-6 & $0.1030$ & $0.1406$ & $0.1992$ &$8.2$e-6 & $0.2855$ &$0.3602$ & $0.4690$ \\
    \cmidrule{1-10}
		\multicolumn{2}{l}{Scenario 2} & \\ \cmidrule{1-2}
	&	\multicolumn{9}{l}{\textit{State sequence estimation performance}} \\
	&Accuracy &$0.6160$ & $0.6480$ & $0.9380$ & $0.9473$ & $0.6180$ & $0.7070$ & $0.9730$ & $0.9743$ \\
	&Kappa & $0.1097$ & $0.2851$ & $0.8728$ & $0.8919$ & $0.1199$ & $0.3706$ & $0.9440$ & $0.9469$ \\
	&Precision & $0.6681$ & $0.8304$ & $0.9692$ & $0.9746$ & $0.6915$& $0.8354$ & $0.9818$ & $0.9860$ \\
	&Specificity& $0.2717$ & $0.6617$ & $0.9517$ & $0.9600$ & $0.3050$& $0.6183$ & $0.9717$ & $0.9783$ \\
	&F1 &$0.7258$ & $0.6669$ & $0.9464$ & $0.9545$ & $0.7159$ &$0.7272$ & $0.9773$ & $0.9783$ \\
	&\multicolumn{9}{l}{\textit{Goodness-of-fit}} \\
	&Rmse & - &$0.4606$ &$0.3460$ & $0.2994$  & -& $0.4292$ & $0.3085$ & $0.2980$ \\
	&Mae & - &$0.3419$ &$0.2793$ & $0.2362$  & -& $0.3172$ & $0.2434$ & $0.2359$ \\
        &	\multicolumn{9}{l}{\textit{ {Computational cost (min)}}} \\
	&	& $7.4$e-6 & $0.1041$ & $0.1423$ & $0.2080$ & $8.4$e-6 & $0.2822$ &$0.3665$ & $0.4723$ \\
    \bottomrule
	\end{tabular}
\end{table}

 {We also compare the performance of the proposed method with a baseline change point detection approach based on Binary Segmentation (BinSeg) \citep{scott1974cluster, fryzlewicz2014wild}. The same data is analyzed and change points in mean and variance are identified using BinSeg with a normal test statistic and manual penalty, allowing up to 10 segments as implemented in the R package \texttt{changepoint} (version 2.2.4). From the analysis, the two primary change points are extracted for temporal regime analysis. The comparative results indicate that the proposed method achieves substantially improved hidden-state recovery than the BinSeg in these settings, highlighting its effectiveness in capturing the underlying structural dynamics of the data.}

Additionally, we consider a more complex setting for hidden state sequences with four transitions, where $z(t)=1$ for $t \in [0,0.3) \cup [0.4,0.7)\cup[0.8,1)$, and $z(t)=2$ for $t \in [0.3,0.4)\cup[0.7,0.8)$. The comparative results between the CBSS method and the NGS approach exhibit similar patterns to those observed with only two transitions. The detailed results are provided in Appendix \ref{SM-sec:Scenario1-J4}. 

\subsection{Scenario 2}
In the second scenario, we evaluate the performance when the underlying functional data $\{f_m(t),m=1,2\}$ differs only in their internal series correlation while maintaining the same variance amplitude. The hyper-parameters $\vesub{\theta}{f}$ are set as $\{ \upsilon_{10}=0.5,\upsilon_{11}=0.5, \vesub{A}{10}=1, \vesub{A}{11}=0.1\}$ and $\{\upsilon_{20}=0.5,\upsilon_{21}=0.5, \vesub{A}{20}=1, \vesub{A}{21}=1\}$, respectively. The other settings are identical to those in Scenario 1. One realization of this process is depicted in Figure \ref{fig:scenario2-obs2}. Comparing to Scenario 1, this scenario appears more challenging due to the lack of obvious difference in fluctuation amplitude, making the hidden state sequence more difficult to determine. Despite these complexities, as shown below, our FRVS model and CBSS method successfully discerned the hidden state sequences, achieving admirable classification metrics. Comparatively, the NGS method struggled to achieve the same level of accuracy.

Table \ref{table:Scenario12-sim} presents the classification performance metrics for estimating the state sequence and goodness-of-fit measures based on fifty replications. In this data generation case, CBSS method successfully identifies the hidden states. With a sample size of $n=60$, our method achieves an accuracy of $0.9743$ ($\mathbb{S}=20$), which is much better than the NGS approach having an accuracy of $0.7070$ only. Other metrics such as kappa, precision, specificity, and f1 score consistently show superior performance of the CBSS method for both sample sizes. The fitting performance of the CBSS method consistently surpass those of the NGS approach and demonstrate improvement as the sample size increases. With $n=60$, our method ($\mathbb{S}=20$) achieves much smaller rmse and mae of compared to the NGS approach's results. When the sample size is increased to $150$ ({as shown in Table \ref{table:Scenario2-Appen-sim} of the Appendix \ref{SM-sec:Scenario2-n150}}), the hidden state classification accuracy for the CBSS method reaches approximately $0.99$, with fitting measures rmse and mae decreased to $0.0805$ and $0.0564$, respectively. 

\subsection{Scenario 3}
In the third scenario, we examine the case where the sample size $n$ is extremely large. As previously discussed in Section \ref{sect:bayesian_inf}, such large sample sizes significantly increase the computational demands when estimating $\gall$ in each iteration. To address this challenge, we use the method discussed of the end of Section \ref{sec:update thetag and g} and select a smaller subset of $\{t_i\}_{i=1}^n$ with $\underline{n}$ time points (set to $100$ in this study) and estimate the functions $g_m(t)$, $m=1,\cdots, M-1$ for $t\in \bve{t}$. By leveraging the power of Gaussian process modeling, we can easily predict the functions for the remaining time points.

Moreover, large sample sizes pose significant challenges on sampling for the underlying functions $\fall$ as well. To tackle this problem, we employ the NNGP method with a neighbor size $\mathbb{N}=10$ and compare it with our standard procedure. The simulation setup is the same as in Scenario 1, featuring two states ($M=2$). We vary the sample size $n$ from $500$ to $10,000$ and randomly partition the samples into a training set ($75\%$) and a testing set ($25\%$). The training set is used for model fitting, while the testing set evaluates the accuracy of predictions for the responses.

Figure \ref{fig:scenario3} illustrates a comparison of the proposed FRVS model and CBSS method with and without using the NNGP approach and varying proposal set sizes $(\mathbb{S}=10,20)$. All results are based on five replications. The model training results demonstrate that while using the NNGP method yields slightly lower estimation and prediction accuracy compared to the approach without using NNGP, it significantly reduces computation time. Notably, the performance gap between two approaches narrows as the sample size increases. We therefore emphasize the advantages of the NNGP approach in large sample scenarios. As $n$ increases, the accuracy of hidden state classification approaches $1$, accompanied by a substantial decrease in both fitting and prediction errors, as evidenced by the smaller rmse and mae values.
\begin{figure}
	\centerline{
	\includegraphics[width=1\linewidth]{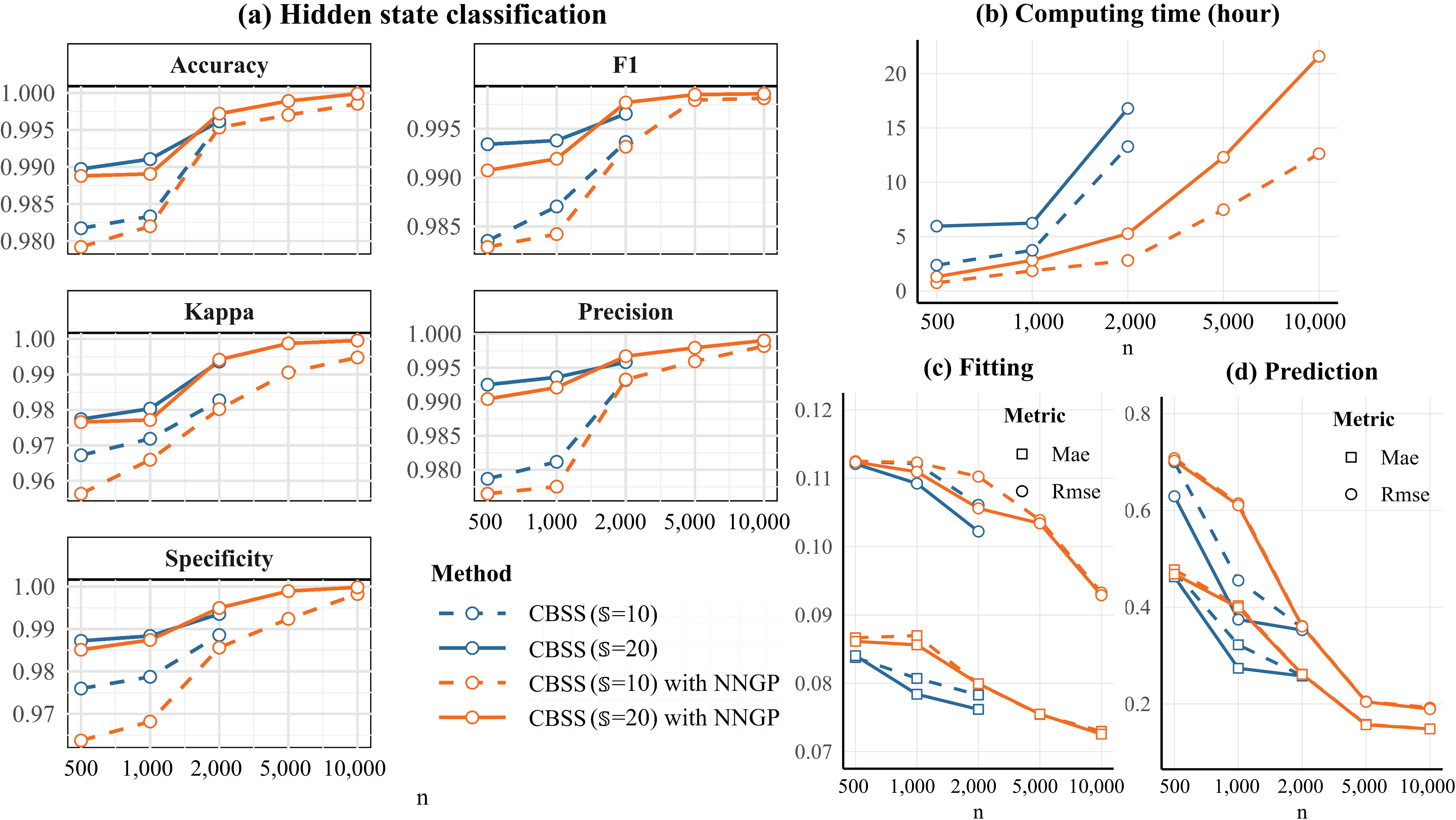}}
	\caption{Comparative analysis of fitting and prediction performance for the unknown variables across varying sample size $n$. Results are averaged over five replications. (a): Average classification performance metrics for the hidden state sequence estimation using different methods. (b): Average computing time using different methods. (c): Average rmse and mae between the estimated $\hve{y}$ and the true values for the training set. (d): Average rmse and mae between the estimated $\hve{y}$ and the true values for the testing set. When $n$ is larger than $2,000$, it exceeds our server's capacity when NNGP is not used, resulting in the omission of the corresponding results.}
	\label{fig:scenario3}
\end{figure}

\section{Applications} \label{sec:App}
In this section, we analyze two distinct signal datasets, with a particular emphasis on the precise segmentation of anomalous periods. These high-frequency real-world datasets encompass thousands of data points, prompting us to employ the NNGP approach for sampling the underlying functions, thereby effectively addressing computational demands. Throughout the data collection process, timestamps were recorded. To align with our FRVS model structure, we standardized the timestamps by setting the initial time point to zero and normalizing the entire range to span from $0$ to $1$. To evaluate the model's performance, we randomly selected $75\%$ of the data points for training and fitting, while reserving the remaining $25\%$ for testing and assessing prediction accuracy.

In this section, we focused on the one-dimensional response variable. However, our model can easily be extended to a multi-output scenario. The detailed multi-output model structure and the covariance and cross-covariance of these functions can be defined using convolution GP, which is outlined in Appendix \ref{SM-sec:CP-MO}. All inference and implementation procedures remain similar to those in the single-output scenario. The applications to complete tri-axial acceleration data of these two datasets are presented and compared against their single-output modeling in Appendix \ref{SM-sec:CP-MO}.

\subsection{Jumping Exercise Segmentation}
The first dataset comprises vertical acceleration signals collected using an Axivity 6 device, with programmed to $\pm8$ g, $250$ dps, and $100$ Hz \citep{powell2021investigating}. The device was worn by the participant on the fifth lumbar vertebra, approximating the center of mass \citep{van2015assessing,del2019gait}, in a free-living environment. This $110$-second recording corresponds to $11,000$ data points, during which the participant engaged in a continuous jumping exercise for an unspecified duration, preceded and followed by phases of normal walking. The data is illustrated in the top panel of Figure \ref{fig:MO-real}, with our objective being to accurately identify the unknown timestamps of transition into and out of the jumping exercise and capture the curve pattern effectively. This application is particularly relevant to gait analysis, where the segmentation of walking intervals poses a challenge \citep{hickey2016detecting}. Existing methodologies primarily utilize data-driven techniques, such as wavelet transformations, to detect frequency domain alterations \citep{LYONS2005497,maclean2023walking}. Those methods are quite efficient on identifying the phase of normal walking, but not so good on identifying other movements.

We fit the training data with the proposed model with two hidden states.
After model training, the estimated parameters are: $\sigma^2=0.0161$, $\vesub{\theta}{z}=\{q_{12}=4.4495, q_{21}=1.4112\}$. The hyper-parameters $\vesub{\theta}{f}$ are estimated as $\vesub{\theta}{f}=\big\{ \upsilon_{10}=0.0488,\upsilon_{11}=0.0289, \vesub{A}{10}=8.0777, \vesub{A}{11}=0.1044, \upsilon_{20}=0.0410,\upsilon_{21}=0.1945, \vesub{A}{20}=5.7948, \vesub{A}{21}=79.8046 \big\}$. The jumping exercise condition exhibits larger $\upsilon_{21}$ and $\vesub{A}{21}$ indicating higher fluctuation and lower inner correlation compared to those of normal walking, aligning well with the data pattern. The fitting results yield an rmse of $0.1198$ and an mae of $0.0503$. The top panel in Figure \ref{fig:Jump-est} presents an estimation of the underlying function for the training set, effectively capturing the trends and patterns within the data. The jumping period is segmented from $t=0.1504$ to $0.9011$. The bottom panel in Figure \ref{fig:Jump-est} illustrates the predictions for the testing set, yielding an rmse of $0.1386$ and an mae of $0.0714$, which, while slightly higher than the fitting results, remains satisfactory.
\begin{figure}
	\centering
	\subfloat[Jumping exercise segmentation.]{	
		\includegraphics[width=1\linewidth]{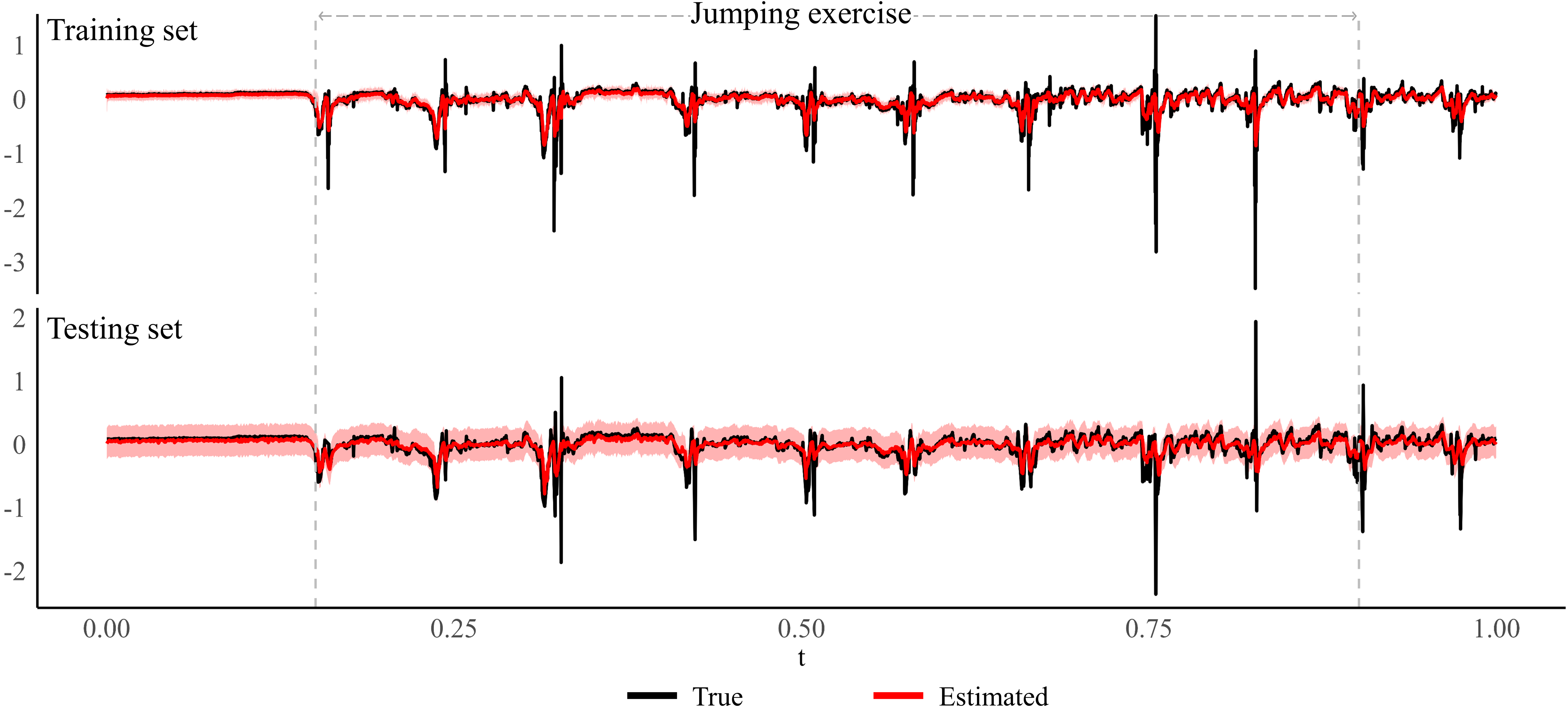}
		\label{fig:Jump-est}
	}\\
	\subfloat[FoG detection.]{
		\includegraphics[width=1\linewidth]{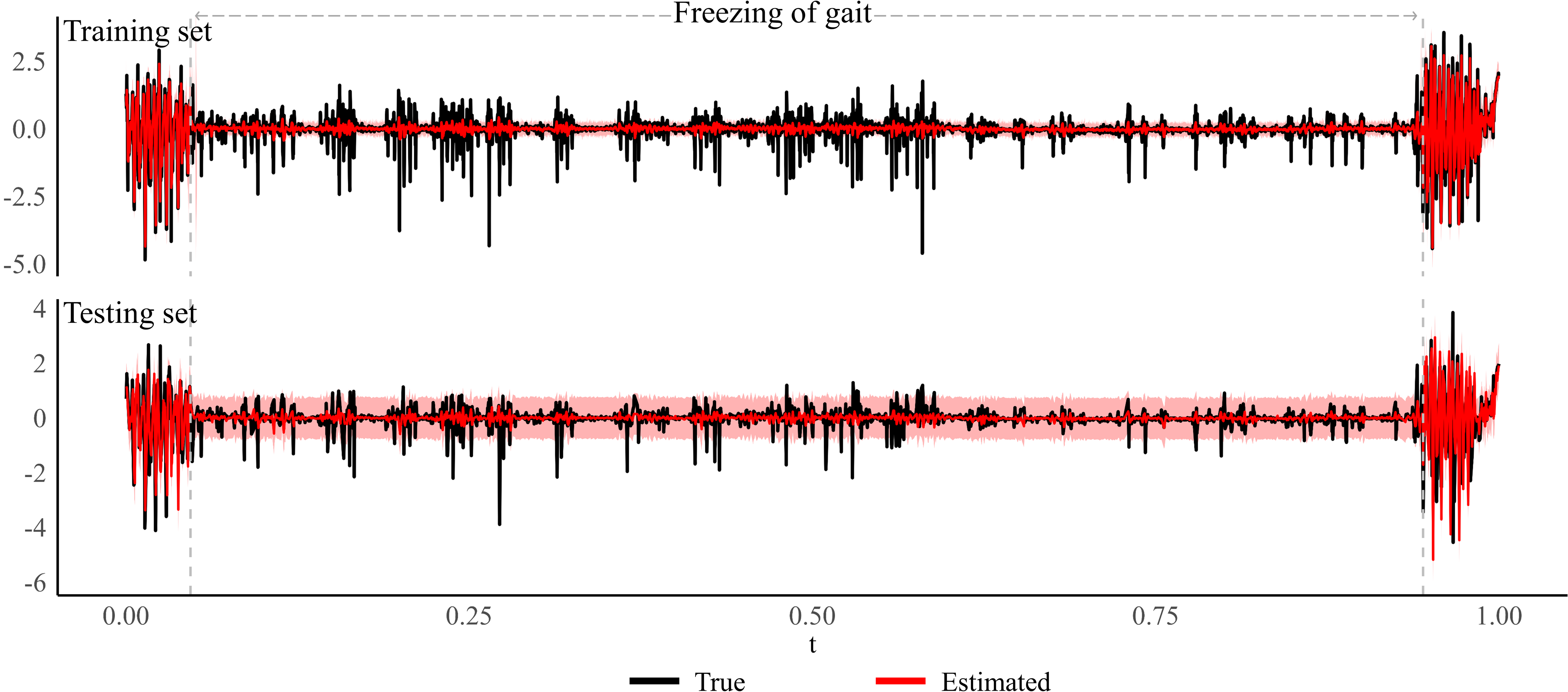}
			\label{fig:FoG-est}}
	\caption{Estimation results for the training and testing sets, with indications of the state transition period. The shaded region indicating the sampling range and confidence interval for the training and testing sets, respectively. }
\end{figure}


\subsection{Freezing of Gait Detection}
The second dataset consists of vertical acceleration data collected in a controlled laboratory environment from a patient diagnosed with Parkinson's disease during a timed-up-and-go test \citep{nonnekes2020freezing}. This test requires the patient to walk in a straight line from a starting position, perform a turn, and return to the starting point. During this $105$-second recording, the patient experienced an episode of freezing of gait (FoG). This symptom is characterized by a sudden and temporary inability to initiate or maintain walking. As depicted in the bottom panel of Figure \ref{fig:MO-real}, the fluctuation during the FoG episode is significantly reduced compared to those at the beginning and end of the recording. Detecting FoG poses challenges for gait analysis. Recent advancements in technology, particularly wearable sensors and deep brain stimulation, offer avenues for managing and mitigating this complex symptom \citep{gilat2018freezing,mancini2019clinical}. In a controlled laboratory setting, where video recording was utilized, this dataset was labeled by training professionals with onset and termination times for the FoG episode. However, accurately identifying these occurrences in free-living settings remain difficult.

With our model training, the estimated parameters are: $\sigma^2=0.1506$, $\vesub{\theta}{z}=\{q_{12}=8.3697, q_{21}=1.7491\}$. The hyper-parameters $\vesub{\theta}{f}$ are estimated as $\big\{ \upsilon_{10}=0.0787,\upsilon_{11}=4.7367, \vesub{A}{10}=83.7847, \vesub{A}{11}=112.1777, \upsilon_{20}=0.3017,\upsilon_{21}=0.0974, \vesub{A}{20}=63.1094, \vesub{A}{21}=3.2331 \big\}$. The FoG episode exhibits extremely smaller $\upsilon_{21}$ and $\vesub{A}{21}$ for lower fluctuation and larger inner correlation compared to those for normal walking, which still aligns well with the data pattern. The fitting results yield an rmse of $0.3637$ and an mae of $0.2005$. The top panel in Figure \ref{fig:FoG-est} presents the estimation of the underlying function for the training set, capturing the trends and shapes of the data. Since this dataset is labeled, we compare the estimated hidden state sequence with the actual one, resulting in classification metrics of accuracy = $0.9985$, kappa = $0.9923$, precision = $1$, specificity = $1$, and f1 = $0.9931$. The bottom panel in Figure \ref{fig:FoG-est} illustrates the prediction results for the testing set, yielding an rmse of $0.4980$ and an mae of $0.2556$. 

\section{Discussion} \label{sec:Discussion}

This study proposes a novel methodological approach for analyzing functional data characterized by varying covariance structures, with a particular emphasis on the curve-based sampling scheme implemented within a continuous-time HMM framework. In contract to most existing works in the HMM literature, which assume independence of response variables given the hidden states, our FRVS model recognizes and incorporates the dependencies among response variables conditioned on the hidden states. 

The primary contributions of this research are multifaceted, with the curve-based sampling scheme being central to addressing the sampling complexities associated with conditionally dependent functional data. By modeling the underlying probability curves of the hidden state process, our approach circumvents the computational challenges related to enumerating various state sequence combinations. This issue is particularly critical in functional data analysis, such as the multiple testing. The work by \citet{xu2019automatic} requires the choice of representative points; in contrast, our curve-based sampling scheme may enhance efficiency by preserving the integrity of the functional data. Using a curve-based sampling scheme may lead to a breakthrough in addressing the intractable two-sided test problems in multiple testing for functional data.

Furthermore, our emphasis on identifying variations in covariance structure is crucial for many problems in functional data analysis. In the context of our accelerometer signal data, our method excels in detecting changes in covariance structure that distinguish different walking patterns. The utilization of the NNGP approach facilitates the efficient handling of large-scale functional data while maintaining model accuracy. The detailed Bayesian inference procedure presented herein provides a robust framework for estimating parameters and hidden states, enhancing the model's applicability and precision across diverse fields. The examination of asymptotic properties ensures both identifiability and consistency in model structure, providing theoretical guarantees that support the reliability of our approach.

 {In this work, the observed data form a single densely sampled trajectory of a continuous-time process. While conventional functional data analysis typically focuses on the analysis of multiple functional observations (i.e., a sample of curves), similar modeling tools are frequently used when analyzing a single functional trajectory observed over a continuous domain. Our analysis is based on a single functional trajectory as the aim is to characterize the structural variation of the observed response curve, rather than to make population-level inferences (such as to estimate the mean function).} In this context, the substantial within-curve variation across the domain provides adequate information for identifying the functional relationship, allowing consistent estimation without the need for repeated response curves; see \citet{wang2014generalized} for discussion on the consistency of a single functional trajectory. However, the Bayesian inference procedure proposed in this paper can be applied to repeated curves without technical difficulties but notational sophistication.

Despite the promising results of our method, several areas worth a further investigation. One potential direction is to address the complexities of selecting the number of hidden states.  {In this study, the number of hidden states $M$ is assumed to be known and such number is selected based on domain knowledge in the applications. In practical applications where $M$ is unknown, several strategies can be considered. A straightforward approach is to fit the proposed FRVS model with different candidate values of $M$ and select the optimal model using model selection criteria such as BIC, DIC, or marginal likelihood. Alternatively, Bayesian methods that allow the number of states to vary, such as reversible jump MCMC or Bayesian nonparametric approaches including hierarchical Dirichlet process hidden Markov models, could be adopted to infer the number of states directly from the data. Once $M$ is specified at a given iteration, the proposed curve-based sampling scheme for updating the hidden state sequence can be applied without modification. Investigating these extensions would be an interesting direction for future research.} Moreover, further exploration is needed to evaluate the model's effectiveness in handling complex covariance structures and multi-output scenarios. While the analysis of data heterogeneity is generally applicable, careful consideration is necessary when determining whether to incorporate distinct features into the covariate vector $\ve{x}$ or within the CTMM model itself. For multi-output modeling, the convolution process we utilized represents a viable methodology; yet further research is required to optimize its performance across various contexts.

\backmatter
 {The code for simulation studies will be made available at \url{https://github.com/yuyian/FRVS} (R Statistical Software, v4.5.1). The data used in the real-data applications are available from the authors upon reasonable request.}

\bmhead{Acknowledgements}

Shi's work is supported by funds from the National Key R\&D Program of China (2023YFA1011400), the National Natural Science Foundation of China (No. 12271239), and Shenzhen Fundamental Research Program JCYJ20220818100602005 (No. 20220111). Wang thanks the University of Leicester for the support of academic study leave during which part of this work was carried out.

\begin{appendices}

\section{Convolution Process for GP}\label{SM-sec:CPGP}
	For illustration purposes, we take $M=2$ as an example to describe the convolution process. We begin by defining three independent Gaussian white noise processes: $\varepsilon_0(\cdot)$, $\varepsilon_1(\cdot)$, and $\varepsilon_2(\cdot)$. We denote the convolution product by the symbol $``\star"$. Let ${\kappa}_{m 0}(\cdot)$ and ${\kappa}_{m 1}(\cdot)$ be any smoothing kernel functions for $m=1,2$. By convolving these white noise processes with the smoothing kernel functions, we construct the following GPs: \begin{equation*}
		\begin{aligned}
			\xi_m(\cdot)={\kappa}_{m0}(\cdot)\star \varepsilon_0(\cdot),\ 
			\eta_m(\cdot)={\kappa}_{m1}(\cdot) \star \varepsilon_m(\cdot),\ \text{for } m=1, 2. 
		\end{aligned} \label{SP-eq:conv-2parts}
	\end{equation*}Here, $\xi_1(\cdot)$ and $\xi_2(\cdot)$ are dependent since both processes are constructed from the same Gaussian white noise $\varepsilon_0(\cdot)$, while they are independent of $\eta_1(\cdot)$ and $\eta_2(\cdot)$. So are the cases for $\eta_1(\cdot)$ and $\eta_2(\cdot)$.
	
	Next, we define $f_m(\ve{x}(t))=\xi_m(\ve{x}(t))+\eta_m(\ve{x}(t))$ and specify the smoothing kernel functions with positive parameters $\vesub{\theta}{f}=\{ \upsilon_{m0},\upsilon_{m1}, \vesub{A}{m0}, \vesub{A}{m1}\}_{m=1}^2$ as follows:
	\begin{equation*}
		\begin{aligned}
			{\kappa}_{m 0}(\ve{x}(t))&=\upsilon_{m0}\exp\left\{ -\frac{1}{2}\ve{x}(t)^{T} \vesub{A}{m0}\, \ve{x}(t)\right\},\\
			{\kappa}_{m 1}(\ve{x}(t))&=\upsilon_{m1}\exp\left\{ -\frac{1}{2} \ve{x}(t)^{T} \vesub{A}{m1}\, \ve{x}(t) \right\},\ m=1,2.
		\end{aligned}
	\end{equation*}
	Then $\left(f_1(\ve{x}(t)), f_2(\ve{x}(t)) \right)^T$ defines a dependent bivariate GP with zero means and the following kernel covariance functions:
	\begin{equation}
		\begin{aligned}
			&k_{mm}(\vesub{x}{i},\vesub{x}{j})\\
			=&\pi^{d/2}\upsilon_{m0}^2\left|\vesub{A}{m0} \right|^{-1/2} \exp\left\{ -\frac{1}{4}(\vesub{x}{i}-\vesub{x}{j})^T\vesub{A}{m0}(\vesub{x}{i}-\vesub{x}{j}) \right\} \\
			& \qquad + 
			\pi^{d/2}\upsilon_{m1}^2\left|\vesub{A}{m1} \right|^{-1/2} \exp\left\{ -\frac{1}{4}(\vesub{x}{i}-\vesub{x}{j})^T\vesub{A}{m1}(\vesub{x}{i}-\vesub{x}{j}) \right\}, \ m=1,2, \\
			&k_{12}(\vesub{x}{i},\vesub{x}{j})=k_{21}(\vesub{x}{j},\vesub{x}{i})\\
			=&(2\pi)^{d/2}
			\upsilon_{10}\upsilon_{20}\left|\vesub{A}{10}+\vesub{A}{20} \right|^{-1/2} \exp\left\{ -\frac{1}{2}(\vesub{x}{i}-\vesub{x}{j})^T\vesub{A}{10}\left(\vesub{A}{10}+\vesub{A}{20}\right)^{-1} \vesub{A}{20}(\vesub{x}{i}-\vesub{x}{j}) \right\}.
		\end{aligned} \label{SM-eq:cov-cross}
	\end{equation}

\section{NNGP Method for Sampling $\fall$ }\label{SM-sec:NNGP}
	As discussed in Remark 1, the standard sampling procedure for $\fall$ is infeasible when the sample size $n$ is large. To avoid the computational costs for solving the inverse of high-dimensional matrices, we employ the NNGP method. This approach utilizes sparse Gaussian distributions on directed acyclic graphs, providing a well-defined, sparsity-inducing prior for modeling large spatial and spatio-temporal datasets, yielding legitimate finite-dimensional Gaussian densities with sparse precision matrices, thereby efficiently facilitating computational algorithms without requiring the storage or decomposition of large matrices \citep{datta2016hierarchical,datta2016nearest}. 
	
	In practice, the convolution GP prior assumption for the unknown functions $\{f_m(\cdot)\}_{m=1}^M$ implies that $\fall\sim N\left(\vesub{0}{nM} \, ,\, \ve{K}\right)$. Expressing the joint density of $\fall$ as the product of conditional densities, we can write \begin{align*}
		p\left(\fall\right)=p\left(\ve{f}(t_1)\right)\prod_{i=2}^{n}p\left(\ve{f}(t_{i}) \mid \ve{f}(t_{i-1}),\cdots, \ve{f}(t_1)\right).
	\end{align*}
	The NNGP method builds upon the idea of replacing the larger conditioning sets on the right-hand side with smaller, carefully selected sets of size at most $\mathbb{N}$, where $\mathbb{N}\ll n$. Specifically, by selecting the smaller conditioning sets $\mathcal{N}_1=\emptyset$ and $\{\mathcal{N}_i \subset \{t_1,\cdots, t_{i-1}\}\}_{i=2}^n$ with $\max_{i}|\mathcal{N}_i|=\mathbb{N}$, it ensures a direct acyclic graph. A proper approximate density is derived from the density ${p}\left(\fall\right)$ and constructed as follows: \begin{align*}
		&\tilde{p}\left(\fall\right)=\prod_{i=1}^{n} p\left(\ve{f}(t_{i}) \mid \fall_{\mathcal{N}_i}\right)\\
		=&\prod_{i=1}^{n}N\left(\ve{f}(t_i) \mid  \ve{\mathrm{K}}(t_i,\mathcal{N}_i)\ve{\mathrm{K}}(\mathcal{N}_i,\mathcal{N}_i)^{-1}\fall_{\mathcal{N}_i}\ ,\ \ve{\mathrm{K}}(t_i,t_i)-\ve{\mathrm{K}}(t_i,\mathcal{N}_i)\ve{\mathrm{K}}(\mathcal{N}_i,\mathcal{N}_i)^{-1} \ve{\mathrm{K}}(\mathcal{N}_i,t_i) \right),
	\end{align*} where $\fall_{\mathcal{N}_i} =\left\{\ve{f}{(t_j)},t_j \in\mathcal{N}_i \right\}$ is the vector formed by stacking the realizations of $\ve{f}(t)$ at $\mathcal{N}_i$. Such approximation $\tilde{p}\left(\fall\right)$ is a multivariate Gaussian density with its covariance matrix $\tve{K}$, differing from $\ve{K}$. Since we assume $\mathcal{N}_i$ has at most $\mathbb{N}$ members for each $t_i$, the inverse $\tvesup{K}{-1}$ is sparse with at most $n\mathbb{N}(\mathbb{N}+1)M^2/2$ nonzero entries, as discussed and proved in \citet{datta2016hierarchical}. 
	
	Turning to the conditioning sets, known as neighbor sets, several methods are considered for choosing $\mathcal{N}_i$ \citep{vecchia1988estimation,datta2016nonseparable}. All the choices depend on the ordering of the points. In our case, since the time points are naturally ordered, we designate $\mathcal{N}_i$ as the $\mathbb{N}$ nearest neighbors of $t_i$ among $t_1,\cdots, t_{i-1}$. Since $\tilde{p}\left(\fall\right)$ ultimately relies on the information borrowed from the neighbors, its effectiveness is often determined by the number of neighbors specified.

	Given a hidden state sequence $\ve{z}$, the approximate distribution for $\vesub{f}{\bs{z}}$ is multivariate Gaussian with covariance matrix $\tvesub{K}{\bs z }$. The matrix $\tvesub{K}{\bs z }$ is a submatrix of $\tve{K}$, and the $(i,j)$-th entry in $\tvesub{K}{\bs z }$ corresponds to the $\left((i-1)M +z_i\, ,\, (j-1)M +z_j\right)$-th entry in $\tve{K}$. The inverse matrix $( \tvesub{K}{\bs z } + \sigma^2 \vesub{I}{n})^{-1}$ retains the same sparsity as $\tvess{K}{\bs z}{-1}$, for $( \tvesub{K}{\bs z } + \sigma^2 \vesub{I}{n})^{-1}=\sigma^{-2}\tvess{K}{\bs z}{-1}(\tvess{K}{\bs z}{-1} + \sigma^{-2}\vesub{I}{n})^{-1}$. A sparse Cholesky factorization of $( \tvess{K}{\bs z }{-1} + \sigma^{-2} \vesub{I}{n})$ will efficiently yield the Cholesky factors of $( \tvess{K}{\bs z }{-1} + \sigma^{-2} \vesub{I}{n})^{-1}$ and facilitate the updates of $\vesub{f}{z}$. However, the eigenvalues of $(\tvesub{K}{\bs{z}}+\sigma^2\vesub{I}{n})$ may be less than $\sigma^2$. Since $$\left(\vesub{f}{\bs{z}}\mid \ve{z},\Theta,\ve{y}\right) \sim N\left( \vesub{K}{\bs{z}}\left(\vesub{K}{\bs{z}}+\sigma^2\vesub{I}{n}\right)^{-1}\ve{y} \, ,\, \sigma^2 \vesub{K}{\bs{z}} \left(\vesub{K}{\bs{z}}+\sigma^2\vesub{I}{n}\right)^{-1} \right),$$ the covariance matrix $\Cov\left(\vesub{f}{\bs{z}}\mid \ve{z},\Theta,\ve{y}\right)=\sigma^2 \tvesub{K}{\bs{z}} \left(\tvesub{K}{\bs{z}}+\sigma^2\vesub{I}{n}\right)^{-1}=\sigma^2\vesub{I}{n}-\sigma^4(\tvesub{K}{\bs{z}}+\sigma^2\vesub{I}{n})^{-1} $ may not be positive definite, resulting in the probability $p\left(\vesub{f}{\bs{z}}  \mid  \ve{z},\Theta,\ve{y} \right)$ no longer being well-defined with the NNGP assumption. Using such a sparse matrix to update $\fall$ requires an efficient sparse Cholesky solver for $( \tvess{K}{\bs z }{-1} + \sigma^{-2} \vesub{I}{n})$. However, the computational time for Gibbs sampling in this marginalized model depends on the sparse structure of $\tvesup{K}{-1}$ and may sometimes heavily exceed the linear usage achieved by the unmarginalized model, which updates $\ve{f}(t_i)$ individually, conditional on $y_i$ and the neighbors $\vesub{y}{\mathcal{N}_i}=\left\{y_j,t_j \in\mathcal{N}_i \right\}$. 
	
	In our sampling procedure, since $\ve{f}(t_i)=\left(f_{z_i}(t_i),\vesub{f}{\text{-}z_i}(t_i)^T\right)^T $, we first update $f_{z_i}(t_i)$ from $p\big( f_{z_i}(t_i) \mid \ve{z}, \sigma^2, \vesub{\theta}{f},y_i, \vesub{y}{\mathcal{N}_i}\big)$ and subsequently update $\vesub{f}{\text{-}z_i}(t_i)$ from $p\big(\vesub{f}{\text{-}z_i}(t_i)  \mid f_{z_i}(t_i), \ve{z}, \sigma^2, \vesub{\theta}{f}, \vesub{y}{\mathcal{N}_i}\big)$. Both updates stem from Gaussian distributions, represented as follows:
	\begin{align*}
		\left( {f}_{z_i}(t_i) \mid \ve{z}, \sigma^2, \vesub{\theta}{f},{y}_i, \vesub{y}{\mathcal{N}_i}\right) & \sim N\left(\vesub{\mu}{z_i}\, , \, \vesub{\Sigma}{z_i} \right), \\
		\left(\vesub{f}{\text{-}z_i}(t_i)  \mid {f}_{z_i}(t_i), \ve{z}, \sigma^2, \vesub{\theta}{f}, \vesub{y}{\mathcal{N}_i}\right) & \sim N\left(\vesub{\mu}{\text{-}z_i} \, , \, \vesub{\Sigma}{\text{-}z_i} \right), 
	\end{align*}
	where \begin{align*}
		\vesub{\mu}{z_i}&=\ve{A}\begin{pmatrix}{y}_i \\ \vesub{y}{\mathcal{N}_i} \end{pmatrix},\ \vesub{\Sigma}{z_i}= {k}_{z_i z_i}(t_i,t_i)-\ve{A}\begin{pmatrix}{k}_{z_i z_i}(t_i,t_i) \\ {k}_{ \bs{z}_{\mathcal{N}_i}z_i}(\mathcal{N}_i,t_i) \end{pmatrix}, \\
		\ve{A}&=\begin{pmatrix}{k}_{z_i z_i}(t_i,t_i) & {k}_{z_i \bs{z}_{\mathcal{N}_i}}(\mathcal{N}_i,t_i) \end{pmatrix} \begin{pmatrix}{k}_{z_i z_i}(t_i,t_i)+\sigma^2 & {k}_{z_i \bs{z}_{\mathcal{N}_i}}(t_i,\mathcal{N}_i) \\
			{k}_{ \bs{z}_{\mathcal{N}_i}z_i}(\mathcal{N}_i,t_i)& {k}_{\bs{z}_{\mathcal{N}_i} \bs{z}_{\mathcal{N}_i}}(\mathcal{N}_i,\mathcal{N}_i)+\sigma^2 \vesub{I}{|\mathcal{N}_i|} \end{pmatrix}^{-1},
	\end{align*} and \begin{align*}
		\vesub{\mu}{\text{-}z_i}&=\ve{B}\begin{pmatrix}{f}_{z_i}(t_i) \\ \vesub{y}{\mathcal{N}_i} \end{pmatrix},\ \vesub{\Sigma}{\text{-}z_i}= {k}_{\text{-}z_i \text{-}z_i}(t_i,t_i)-\ve{B}\begin{pmatrix}{k}_{z_i \text{-}z_i}(t_i,t_i) \\ {k}_{ \bs{z}_{\mathcal{N}_i}\text{-}z_i}(\mathcal{N}_i,t_i) \end{pmatrix}, \\
		\ve{B}&=\begin{pmatrix}{k}_{\text{-}z_i z_i}(t_i,t_i) & {k}_{\text{-}z_i \bs{z}_{\mathcal{N}_i}}(\mathcal{N}_i,t_i) \end{pmatrix} \begin{pmatrix}{k}_{z_i z_i}(t_i,t_i) & {k}_{z_i \bs{z}_{\mathcal{N}_i}}(t_i,\mathcal{N}_i) \\
			{k}_{ \bs{z}_{\mathcal{N}_i}z_i}(\mathcal{N}_i,t_i)& {k}_{\bs{z}_{\mathcal{N}_i} \bs{z}_{\mathcal{N}_i}}(\mathcal{N}_i,\mathcal{N}_i)+\sigma^2 \vesub{I}{|\mathcal{N}_i|} \end{pmatrix}^{-1}. 
	\end{align*}

\section{Prediction}
	After the model has been trained and all unknown variables ($\fall$ and $\ve{z}$) and parameters ($\Theta$) have been estimated, we can calculate the prediction of $y^*$ at a new time point $t^*$. For the sake of notations, we still use $\fall$, $\ve{z}$, and $\Theta$ to denote their estimates. We first estimate the hidden state by:
	\begin{align*}
		z^\ast=&z(t^\ast)=\argmax_{m}p\left(z(t^\ast)=m \mid  \ve{z},\Theta\right)\\
		=&\begin{cases}
			\argmax_{z \in \{1,\cdots,M\}}\vesub{P}{z_{i}\, z}\left(t^\ast-t_{i}\right)\vesub{P}{ z\, z_{\text{i+1}} }\left(t_{i+1}-t^\ast\right), \, &\text{if }\ t_{i}< t^\ast < t_{i+1}, \\
			\argmax_{z\in\{1,\cdots, M\}}\vesub{P}{z_{n}\, z}\left(t^\ast-t_{n}\right),\, &\text{if }\ t^\ast > t_n.
		\end{cases}
	\end{align*} 
	The posterior distribution $p\left(y^\ast \mid  \fall,\ve{z},z^\ast,\Theta,\ve{y}\right)=p\left(y^\ast \mid  \fall,z^\ast,\Theta\right)$ is used to predict $y^\ast$. Denoting the covariance between $f_{z^\ast}(t^\ast)$ and $\fall$ as $\vess{k}{z^\ast}{\ast}$, then the posterior mean and variance are derived as follows:\begin{align*}
		\E\left( y^\ast \mid \fall,z^\ast,\Theta \right)=\vess{k}{z^\ast}{\ast }\vesup{K}{-1}\fall,\
		\Var\left( y^\ast \mid \fall,z^\ast,\Theta \right)=k_{z^\ast z^\ast}(t^\ast,t^\ast)+\sigma^2-\vess{k}{z^\ast}{\ast }\vesup{K}{-1}\vess{k}{z^\ast}{\ast T}.
	\end{align*} With the NNGP assumption for $\fall$, we can calculate the mean and variance from the posterior distribution $p\left(y^\ast \mid  \fall_{\mathcal{N}^\ast},z^\ast,\Theta\right)$ with $\mathcal{N}^\ast$ being the conditioning set for the time point $t^\ast$. Here $\mathcal{N}^\ast$ is much smaller than the sample size $n$, resulting in an efficient algorithm.

\section{Update $\gall$ and $\vesub{\theta}{g}$}\label{SM-sec:predict_new_z_g}
In our curve-based sampling scheme, both $\gall$ and $\ve{z}$ are updated in Step 3. After the burn-in period, the sample of $\gall$ produced in the previous iteration closely resembles its posterior distribution (in the left panel of Figure \ref{fig:curve-based}). Similarly, the state sequence $\ve{z}$ generated from this sample of $\ve{g}(t)$ is also very close to its posterior distribution (in the middle panel of Figure \ref{fig:curve-based}), making it easier to accept in the MH algorithm. Here, we will give a detail discuss about how to update $\gall$ and the related hyper-parameters.
\begin{figure}
	\centering
	\includegraphics[width=1\linewidth]{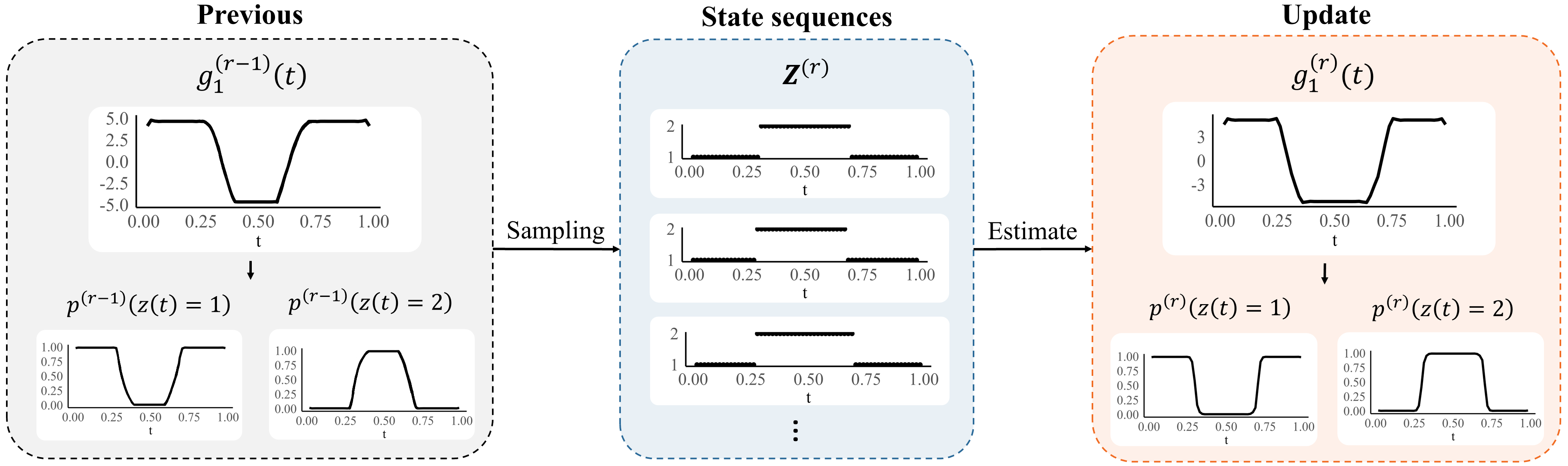}
	\caption{ Curve-based sampling scheme for updating the hidden state sequence for the case of $M=2$. (Left panel): A set of possible state sequences is sampled based on the estimated $\ve{g}(t)$ from the previous iteration; (Middle panel): Samples of the state sequences are generated by MH algorithm; (Right panel): $\ve{g}(t)$ is updated.}
	\label{fig:curve-based}
\end{figure}

After the $S$ samples of state sequences $\ve{Z}=\{\vesub{z}{(s)}\}_{s=1}^S$ are obtained. The hyper-parameters $\vesub{\theta}{g}$ are estimated by maximizing the marginal density function $	p\left(\ve{Z}  \mid \vesub{\theta}{g} \right)$. However, as introduced in Section 3.2, the integral involved in the marginal density is analytically intractable and we approximate $p\left(\ve{Z}  \mid \vesub{\theta}{g} \right)$ by $
\tilde{p}\left(\ve{Z}  \mid \vesub{\theta}{g} \right)\triangleq \left.\frac{p\left(\ve{Z},\gall  \mid \vesub{\theta}{g} \right)}{	\tilde{p}_G\left(\gall  \mid \ve{Z},\vesub{\theta}{g} \right)}\right|_{\gall=\tilde{\gall}},
$ where $\tilde{p}_G\left(\gall  \mid \ve{Z},\vesub{\theta}{g} \right)$ is the Gaussian approximation to the conditional density $p\left(\gall  \mid \ve{Z},\vesub{\theta}{g}\right)$, and $\tilde{\gall}$ is the mode of the conditional density of $\gall$, also being the estimation for $\gall$. Here, $$ p\left(\ve{Z},\gall \mid  \vesub{\theta}{g}\right)=p\left(\ve{Z}  \mid \gall \right)p\left(\gall |\vesub{\theta}{g} \right)
=\exp\left\{ \sum_{s\in\mathcal{S}} \sum_{i=1}^n \log p\left(z_{(v)_i}  \mid \ve{g}(t_i) \right) + \log p\left(\gall |\vesub{\theta}{g} \right) \right\}, $$ and $\ell_{z_{(s)_i}}\left(\ve{g}(t_i)\right)\triangleq \log p\left(z_{(s)_i}  \mid \ve{g}(t_i) \right)$ is approximated by a Taylor expansion to the second order around $\tve{g}(t_i)=\big(\tilde{g}_{1}(t_i),\cdots, \tilde{g}_{M-1}(t_i)
\big)^T$, such that \begin{align*}
	\ell_{z_{(s)_i}}\left(\ve{g}(t_i)\right)\simeq& \ell_{z_{(s)_i}}\left(\tve{g}(t_i)\right) +\left(\ve{g}(t_i)-\tve{g}(t_i) \right)^T \left. \frac{\partial \ell_{z_{(s)_i}}\left(\ve{g}(t_i)\right)}{\partial \ve{g}(t_i)}\right|_{\ve{g}(t_i)=\tve{g}(t_i) }\\
	&-\frac{1}{2}\left(\ve{g}(t_i)-\tve{g}(t_i) \right)^T\left(- \left. \frac{\partial^2 \ell_{z_{(s)_i}}\left(\ve{g}(t_i)\right)}{\partial \ve{g}(t_i)\partial \ve{g}(t_i)^T}\right|_{\ve{g}(t_i)=\tve{g}(t_i) }\right) \left(\ve{g}(t_i)-\tve{g}(t_i) \right).\end{align*}
Denote \begin{equation*}\small
	\begin{aligned}
		& \ve{D}(t_i,t_i) =\left\{\left. -\sum_{s\in\mathcal{S}}\frac{ \partial^2 \ell_{z_{(s)_i}} \left(\ve{g}(t_i)\right) }{\partial g_m(t_i)\partial g_h(t_i)}\right|_{\ve{g}(t_i)=\tve{g}(t_i)} \right\}_{m,h =1}^{M-1},\, \ve{D} =\diag\big( \ve{D}(t_1,t_1),\cdots,\ve{D}(t_n,t_n) \big), \\
		& \vesub{d}{i} = \left(\left. \sum_{s\in\mathcal{S}}\frac{ \partial \ell_{z_{(s)_i}} \left(\ve{g}(t_i)\right) }{\partial g_1(t_i)}\right|_{\ve{g}(t_i)=\tve{g}(t_i)},\cdots, \left. \sum_{s\in\mathcal{S}}\frac{\partial \ell_{z_{(s)_i}} \left(\ve{g}(t_i)\right) }{\partial g_{M-1}(t_i)}\right|_{\ve{g}(t_i)=\tve{g}(t_i)} \right),\,
		\ve{d} =\left( \vesub{d}{1} ,\cdots, \vesub{d}{n} \right),
	\end{aligned}
\end{equation*}
where
\begin{align*}
	\left. \frac{ \partial \ell_{z_{(s)_i}} \left(\ve{g}(t_i)\right) }{\partial g_m(t_i)}\right|_{\ve{g}(t_i)=\tve{g}(t_i)}&=\mathbb{I}\left(z_{(s)_i} =m\right) - \frac{\exp\left( \tilde{g}_{m}(t_i)\right)}{\sum_{{m}^{\ast} =1}^{M-1}\exp\left( \tilde{g}_{{m}^{\ast} }(t_i)\right)+1},\\
	\left. \frac{ \partial^2 \ell_{z_{(s)_i}} \left(\ve{g}(t_i)\right) }{\partial g_m(t_i)\partial g_h(t_i) }\right|_{\ve{g}(t_i)=\tve{g}(t_i)}&=\left\{
	\begin{aligned}
		&-\frac{\exp\left( \tilde{g}_{m}(t_i)\right)\left[\sum_{{{m}^{\ast} }=1, {{m}^{\ast} }\neq m}^{M-1}\exp\left( \tilde{g}_{{m}^{\ast} }(t_i)\right)+1 \right]}{\left(\sum_{{{m}^{\ast} }=1}^{M-1}\exp\left( \tilde{g}_{{m}^{\ast} }(t_i)\right)+1\right)^2},&\ \text{if}\ h=m,\\
		&\frac{\exp\left( \tilde{g}_{m}(t_i)\right)\exp\left(\tilde{g}_{h}(t_i)\right)}{\left(\sum_{{m}^{\ast} =1}^{M-1}\exp\left( \tilde{g}_{{m}^{\ast} }(t_i)\right)+1\right)^2},&\ \text{if}\ h \neq m.
	\end{aligned} \right.
\end{align*}

Then the probability density function $p\left( \gall \mid  \ve{Z}, \vesub{\theta}{g}\right)$ is proportional to $p\left(\ve{Z},\gall \mid  \vesub{\theta}{g}\right)$, given by \begin{align*}
	p\left( \gall \mid  \ve{Z}, \vesub{\theta}{g}\right)	\propto p\left(\ve{Z},\gall \mid  \vesub{\theta}{g}\right)&\propto \exp\left\{\ve{d}\left(\gall-\tilde{\gall}\right) -\frac{1}{2}\left(\gall-\tilde{\gall}\right)^T \ve{D}\left(\gall-\tilde{\gall}\right) -\frac{1}{2}\gall^{T} \vesup{C}{-1} \gall\right\}
	\\ &\propto \exp\left\{ -\frac{1}{2}\gall^{T} \left(\vesup{C}{-1}+ \ve{D} \right)\gall+\left(\ve{d}+\tilde{\gall}^{T}\ve{D} \right) \gall \right\}, 
\end{align*}

The unknown collections of the augmented curves $\gall$ and the hyper-parameters $\vesub{\theta}{g}$ are obtained using the Fisher scoring algorithm, detailed as follows.

\begin{algorithm}
	\caption{Estimate $\vesub{\theta}{g}$ and $\gall$ based on $\ve{Z}$ with the Fisher scoring algorithm.}\label{alg::thetag_g-estimation}
	\begin{algorithmic}[1]
	   \State Initialize $\vesub{\theta}{g}$. Set $w=0$ and initialize $\tilde{\gall}_{(0)}$. Calculate the first and second order derivatives $\vesub{d}{(0)} $, $\vesub{D}{(0)}$ with respect to $\ve{Z}$ and $\tilde{\gall}_{(0)}$.
		\State {\bf For} $w=1,2,\cdots,$  \begin{itemize}
			\item[] Obtain $\tilde{\gall}_{(w)}$ by finding the solution from $\left(\vesup{C}{-1}+ \vesub{D}{(w-1)} \right)\tilde{\gall}_{(w)}= \left(\vesub{d}{(w-1)}+\tilde{\gall}_{(w\text{-}1)}^{T}\vesub{D}{(w-1)}\right)^T$.
			\item[] Calculate the first and second order derivatives $\vesub{d}{(w)} $ and $\vesub{D}{(w)}$ with respect to $\ve{Z}$ and $\tilde{\gall}_{(w)}$.
 \end{itemize}
		  \State {\bf If} $\tilde{\gall}_{(w)}$ is converged, say at $\tilde{\gall}$ {\bf Break} 
		\State Update $\vesub{\theta}{g}=\argmax_{\vesub{\theta}{g}}\tilde{p}\left(\ve{Z}  \mid \vesub{\theta}{g} \right).$
		\State Update $\gall=\tilde{\gall}$.
	\end{algorithmic}
\end{algorithm}

In practice, estimating $\gall$ becomes computationally challenging when the sample size $n$ is large. To address the computational demand associated with this large dataset, we propose selecting a considerably smaller subset of time points $\bve{t}$ from the full set $\{t_i\}_{i=1}^n$ with $\underline{n}$ ($\underline{n}\ll n$) time points, and utilizing the sampled set of state sequences $\ve{Z}$ at these tunable time points ($\bve{Z}$) to estimate hyper-parameters $\vesub{\theta}{g}$ and the corresponding $\underline{\gall}$ at the selected time points. Then predict the functions at the remaining time points. Thus, for any time point $t^\ast$ that are not included in $\bve{t}$, $p\left(z(t^\ast)  \mid \bve{Z}, \vesub{\theta}{g}\right)$ or the corresponding $\ve{g}(t^\ast)$ can be estimated.

With the GP model trained, we predict the probability $p\left(z(t^\ast)  \mid \bve{Z}, \vesub{\theta}{g}\right)$ using the expectation $ \E\big[\E\big(z(t^\ast)  \mid \ve{g}({t^\ast}), \bve{Z}, \vesub{\theta}{g} \big) \big]$ for the time points $t^\ast$ that are not included in $\bve{t}$, where $\ve{g}({t^\ast})=\left(g_1(t^\ast),\cdots, g_{M-1}(t^\ast)\right)^T$. 

It follows that 
\begin{align*}
	p\left(z(t^\ast)  \mid \bve{Z},\vesub{\theta}{g}\right) & =\E\left[\E\left(z(t^\ast)  \mid \ve{g}({t^\ast}), \bve{Z}, \vesub{\theta}{g} \right) \right] \\
	&	= \int{\frac{\exp\left( g_{z(t^\ast) }(t^\ast )\mathbb{I}\left(z(t^\ast) \neq M\right) \right)}{\sum_{h=1}^{M-1}\exp\left( g_h(t^\ast)\right)+1}p\left(\ve{g}({t^\ast})  \mid \bve{Z},\vesub{\theta}{g} \right)} d \ve{g}({t^\ast}).
\end{align*}

Approximating $p\left(\ve{g}({t^\ast})  \mid \bve{Z},\vesub{\theta}{g} \right)$ using the Gaussian approximation $\tilde{p}_G( \ve{g}({t^\ast})  \mid \bve{Z},\vesub{\theta}{g})$ as discussed in Section 3.2 is a simple method, that is, 
\begin{align*}
	p\left(\ve{g}({t^\ast})  \mid \bve{Z},\vesub{\theta}{g} \right)&=\int{ 	p\left(\ve{g}({t^\ast})  \mid \underline{\gall},\vesub{\theta}{g} \right)	p\left(\underline{\gall}  \mid \bve{Z},\vesub{\theta}{g} \right) } d \underline{\gall} \\
	&\simeq \int{ 	p\left(\ve{g}({t^\ast})  \mid \underline{\gall},\vesub{\theta}{g} \right)	\tilde{p}_G\left(\underline{\gall}  \mid \bve{Z},\vesub{\theta}{g} \right) } d \underline{\gall}. 
\end{align*}

Since it is assumed that both $\ve{g}(t^\ast)$ and $\underline{\gall}$ come from the same GP, we have \begin{align*}
	\begin{pmatrix}
		\underline{\gall}\\ \ve{g}(t^\ast)
	\end{pmatrix}\sim N\left( \vesub{0}{(M-1)(\underline{n}+1)}\, ,\, \begin{pmatrix}
		\bve{C}& \bvesup{C}{\ast}\\
		\bvesup{C}{\ast T}& \vesup{C}{\ast}
	\end{pmatrix}\right),
\end{align*}
where $	\bve{C}$ is the covariance matrix of $\underline{\gall}$, $ \bvesup{C}{\ast}$ is an $(M-1)\underline{n} \times( M-1)$ matrix representing the covariance between $\underline{\gall}$ and $\ve{g}(t^\ast)$, and $\vesup{C}{\ast}=\diag\left(c_1(t^\ast, t^\ast),\cdots, c_{M-1}(t^\ast, t^\ast)\right)$. Thus, we have $ \left(\ve{g}(t^\ast)  \mid  \underline{\gall}\right) \sim N\left( \ve{B}\,\underline{\gall}\, ,\, \vesub{\Sigma}{g} \right),$ where $\ve{B}=\bvesup{C}{\ast T}\bvesup{C}{-1}$ and $ \vesub{\Sigma}{g}=\vesup{C}{\ast}-\bvesup{C}{\ast T}\bvesup{C}{-1}\bvesup{C}{\ast}$. From the discussion above, we have $\tilde{p}_G\left(\underline{\gall}  \mid \bve{Z},\vesub{\theta}{g} \right)=N\left(\underline{\tilde{\gall}}\, ,\, \left(\bvesup{C}{-1}+\underline{\ve{ D}} \right)^{-1}\right)$. The integrand for $	p\left(\ve{g}({t^\ast})  \mid \bve{Z},\vesub{\theta}{g} \right)$ is therefore the product of two Gaussian density functions and still gives out a Gaussian density function 
\begin{align*}
	p\left(\ve{g}({t^\ast})  \mid \bve{Z},\vesub{\theta}{g} \right)=N\left(\ve{B}\underline{\tilde{\gall}}\, ,\, \ve{B}\left(\bvesup{C}{-1}+\underline{\ve{ D}} \right)^{-1}\vesup{B}{T} +\vesub{\Sigma}{g} \right).
\end{align*}
Then $	p\left(z(t^\ast)  \mid \bve{Z},\vesub{\theta}{g}\right)$ can be evaluated through numerical integration.

\section{Proof of Identifiability}\label{SM-sec:Identifiability}

Define the following subset of $\Re$:
$$ \tilde{\mathcal{T}} =\{ t: f_m(t)=f_h(t) \text{ for some } h \neq m \}. $$
Because of the condition (8) stated in Theorem 1, any point in $\tilde{\mathcal{T}}$ is an isolated point, i.e., $\tilde{\mathcal{T}}$ has no limit point and contains at most countably many points. We hence denote the points in $\tilde{\mathcal{T}}$ as $\tilde{t}_1,\tilde{t}_2,\cdots,$ such that $\tilde{t}_i < \tilde{t}_{i+1}$. All points in $\tilde{\mathcal{T}}$ are isolated; thus, any point that lies within the interval between two neighboring points does not belong to $\tilde{\mathcal{T}}$, i.e., $(\tilde{t}_i, \tilde{t}_{i+1})\cap \tilde{\mathcal{T}} = \emptyset$, $i=1,2,\cdots$.

For any three time points $t_1, t_2, t_3$ that are not in $\tilde{\mathcal{T}}$ and such that $0<t_1<t_2<t_3$, let $y_1,y_2,y_3$ be the corresponding observations and $z_1,z_2,z_3$ be the hidden states. Then we have
$f_m(t_i)\neq f_h(t_i)$ for $m,h=1,\cdots,M$, $h\neq m$, and $i=1,2,3$. 
Given the initial state $z_0=1$ with the initial state probability $\pi_1=1$ and $t_0=0$, the distribution of $(y_1,y_2,y_3)$ is given by
\begin{align*}
	&p\left(y_1,y_2,y_3\right)\\
	=&\sum^M_{z_1,z_2,z_3=1} \left\{ p(y_1,y_2,y_3|z_1,z_2,z_3) 
	\left(\exp\big[\ve{Q}t_1 \big]_{1\,z_1}
	\exp\big[\ve{Q} (t_2-t_1)\big]_{z_1\,z_2}
	\exp\big[\ve{Q} (t_3-t_2)\big]_{z_2\, z_3} \right) \right\} \\
	=&\sum^M_{z_1,z_2,z_3=1} 
	\left\{ \phi_{z_1}(y_1)\phi_{z_2}(y_2)\phi_{z_3}(y_3) 
	\left(\exp\big[\ve{Q}t_1 \big]_{1\,z_1}
	\exp\big[\ve{Q} (t_2-t_1)\big]_{z_1\,z_2}
	\exp\big[\ve{Q} (t_3-t_2)\big]_{z_2\,z_3} \right) \right\} \\	
	=&\sum^M_{z_2=1} 
	\left(\sum^M_{z_1=1} \exp\big[\ve{Q}t_1 \big]_{1\,z_1} \exp\big[\ve{Q} (t_2-t_1)\big]_{z_1\,z_2} \phi_{z_1}(y_1) \right)
	\phi_{z_2}(y_2) \\
	& \qquad\qquad \left(\sum^M_{z_3=1} \exp\big[\ve{Q} (t_3-t_2)\big]_{z_2\,z_3} \phi_{z_3}(y_3) \right) \\
	=&\sum^M_{z_2=1} \exp\big[\ve{Q}t_1 \big]_{1\,z_2}
	\left(\sum^M_{z_1=1} \frac{\exp\big[\ve{Q}t_1 \big]_{1\,z_1}}{\exp\big[\ve{Q}t_1 \big]_{1\,z_2}} \exp\big[\ve{Q} (t_2-t_1)\big]_{z_1\,z_2} \phi_{z_1}(y_1) \right)
	\phi_{z_2}(y_2) \\
	& \qquad\qquad \left(\sum^M_{z_3=1} \exp\big[\ve{Q} (t_3-t_2)\big]_{z_2\,z_3} \phi_{z_3}(y_3) \right),
\end{align*}
where $\phi_{z_i}(y_i)$ is the probability density function of Gaussian distribution with mean $f_{z_i}(t_i)$ and variance $\sigma^2$, $i=1,2,3$.

Suppose $\tilde{\ve{Q}}$ is another transition rate matrix and $\{\tilde{f}_m(\cdot)\}^M_{m=1}$ are another set of nonlinear functions. Similarly we have
\begin{align*}
	p\left(y_1,y_2,y_3\right)
	& =\sum^M_{z_2=1} \exp\big[\tilde{\ve{Q}}t_1 \big]_{1\,z_2}
	\left(\sum^M_{z_1=1} \frac{\exp\big[\tilde{\ve{Q}}t_1 \big]_{1\,z_1}}{\exp\big[\tilde{\ve{Q}}t_1 \big]_{1\,z_2}} \exp\big[\tilde{\ve{Q}} (t_2-t_1)\big]_{z_1\,z_2} \tilde{\phi}_{z_1}(y_1) \right)
	\tilde{\phi}_{z_2}(y_2) \\
	& \qquad\qquad \left(\sum^M_{z_3=1} \exp\big[\tilde{\ve{Q}} (t_3-t_2)\big]_{z_2\,z_3} \tilde{\phi}_{z_3}(y_3) \right) . 	
\end{align*}	

Because $f_m(t_i)\neq f_h(t_i)$ for $h \neq m$, the Gaussian distributions $\{\phi_{z_i}(y_i)\}^M_{z_i=1}$ are linearly independent \citep{titterington1985}.
Since $\ve{Q}$ is irreducible and aperiodic, the transition probability matrices $\exp\big[\ve{Q}t_1 \big]$, $\exp\big[\ve{Q} (t_2-t_1)\big]$, and $\exp\big[\ve{Q} (t_3-t_2)\big]$ have full rank. As a result, both 
$\big\{\sum^M_{z_1=1} \exp\big[\ve{Q}t_1 \big]_{1\,z_1} \exp\big[\ve{Q} (t_2-t_1)\big]_{z_1\,z_2} \phi_{z_1}(y_1)/\exp\big[\ve{Q}t_1 \big]_{1\,z_2} \big\}^M_{z_2=1} $ and $\big\{\exp\big[\ve{Q} (t_3-t_2)\big]_{z_2\,z_3} \phi_{z_3}(y_3)\big\}^M_{z_2=1} $ are linearly independent. Thus, by Theorem 9 of \citet{allman2009identifiability} we have that, there exists a permutation of the set $\{1,\cdots,M\}$, say, $\tau=\{ \tau(1),\cdots,\tau(M) \} $ which depends on $t_1$, $t_2$, and $t_3$, such that, for all $m=1,\cdots,M$,
\begin{align}
	\tilde{\phi}_{m}(y_2) &= \phi_{\tau(m)}(y_2), \label{perm1} \\
	\sum^M_{h=1} \exp\big[\tilde{\ve{Q}} (t_3-t_2)\big]_{m\,h} \tilde{\phi}_{h}(y_3) 
	&= \sum^M_{h=1} \exp\big[\ve{Q} (t_3-t_2)\big]_{\tau(m)\,h} \phi_{h}(y_3) . \label{perm2}
\end{align}

As the variance $\sigma^2$ is constant, Equation \eqref{perm1} implies that $\tilde{f}_{m}(t_2) = f_{\tau(m)}(t_2)$, for all $m=1,\cdots,M$, and $t_2 \notin \tilde{\mathcal{T}}$.

By the same argument as \citet{huang2013nonparametric}, there exists a permutation $\tau$ of the set $\{1,\cdots,M\}$ that is independent of $t$, such that, for all $m=1,\cdots,M$,
$$ \tilde{f}_{m}(t) = f_{\tau(m)}(t) . $$
Hence, we have $\tilde{\phi}_{h}(y_3) = \phi_{\tau(h)}(y_3)$.
It yields from Equation \eqref{perm2} that
$$
\sum^M_{h=1} \exp\big[\tilde{\ve{Q}} (t_3-t_2)\big]_{m\,h} \phi_{\tau(h)}(y_3) 
= \sum^M_{h=1} \exp\big[\ve{Q} (t_3-t_2)\big]_{\tau(m)\,\tau(h)} \phi_{\tau(h)}(y_3) .
$$
By the linear independence of $\{\phi_{h}(y_3)\}^M_{h=1}$ we get that for all $m,h=1,\cdots,M$, 
$$\exp\big[\tilde{\ve{Q}} (t_3-t_2)\big]_{m\,h}
= \exp\big[\ve{Q} (t_3-t_2)\big]_{\tau(m)\,\tau(h)} ,
$$
which implies $\tilde{\ve{Q}}_{m\,h} = \ve{Q}_{\tau(m)\,\tau(h)} $.

\section{Proof of Consistency for $\sigma^2$}\label{SM-sec:Consistency-sigma}

Given the inverse-Gamma distribution prior assumption for $\sigma^2$, the conditional distribution for $(\sigma^2  \mid  \vesub{f}{\bs{z}},\ve{y})$ is expressed as follows:
\begin{align*}
	p\left(\sigma^2  \mid  \vesub{f}{\bs{z}},\ve{y} \right) \propto & p\left(\ve{y}  \mid  \vesub{f}{\bs{z}},\sigma^2 \right)p\left(\sigma^2 \right)\\ \propto&(\sigma^{2})^{-\frac{n}{2}}\exp \left\{-\frac{\left(\ve{y}-\vesub{f}{\bs{z}}\right)^T\left(\ve{y}-\vesub{f}{\bs{z}}\right)}{2\sigma^{2}}\right\}(\sigma^2)^{-(\alpha+1)}\exp\left\{-\frac{\beta}{\sigma^2}\right\} \\
	= & (\sigma^2)^{-(\frac{n}{2}+\alpha+1)}\exp\left\{-\frac{1}{\sigma^2}\left( \frac{\left(\ve{y}-\vesub{f}{\bs{z}}\right)^T\left(\ve{y}-\vesub{f}{\bs{z}}\right)}{2}+\beta \right)\right\}.
\end{align*} 
This results in the posterior distribution for $(\sigma^2  \mid  \vesub{f}{\bs{z}},\ve{y})$ still follows an inverse-Gamma distribution, with the expectation given by \begin{align*}
	\E(\sigma^2  \mid  \vesub{f}{\bs{z}},\ve{y})&=\left( \frac{\left(\ve{y}-\vesub{f}{\bs{z}}\right)^T\left(\ve{y}-\vesub{f}{\bs{z}}\right)}{2}+\beta \right)\Big/\left(\frac{n}{2}+\alpha-1\right)\\
	&=\frac{1}{n+2\alpha-2}\left(\ve{y}-\vesub{f}{\bs{z}}\right)^T\left(\ve{y}-\vesub{f}{\bs{z}}\right)+\frac{2\beta}{n+2\alpha-2}. 
\end{align*}

The posterior mean $\E\left(\sigma^2  \mid  \ve{y}\right)$ is defined as $\E_{( \vesub{f}{\bs{z}}|\ve{y})}\left[\E(\sigma^2  \mid  \vesub{f}{\bs{z}},\ve{y})\right]$, leading to:
\begin{align*}
	\E\left(\sigma^2  \mid  \ve{y}\right)&=\frac{1}{n+2\alpha-2}\int{\left(\ve{y}-\vesub{f}{\bs{z}}\right)^T\left(\ve{y}-\vesub{f}{\bs{z}}\right)p\left(\vesub{f}{\bs{z}} \mid  \ve{y}\right) }d\vesub{f}{\bs{z}} + \frac{2\beta}{n+2\alpha-2} \\
	&=\frac{1}{n+2\alpha-2}\E\left[ \vess{f}{\bs z}{T}\vesub{f}{\bs z}+\vesup{y}{T}\ve{y} -2 \vesup{y}{T}\vesub{f}{\bs z} \mid  \ve{y}\right] + \frac{2\beta}{n+2\alpha-2}.
\end{align*}

Since a multivariate GP prior is posited for $\fall$, conditioning on the hidden state sequence $\ve{z}$, $\vesub{f}{\bs{z}}$ follows a multivariate Gaussian distribution characterized by zero mean and the covariance matrix $\vesub{K}{\bs z}$. Then, the marginal distribution for $\vesub{f}{\bs{z}}$ aligns with a mixture of Gaussian distributions given by $\sum_{\bs {z}}p(\ve{z})N\left(\vesub{0}{n}, \vesub{K}{\bs{z}} \right)$, with $\sum_{\bs{z}}p(\ve{z})=1$. The posterior distribution remains Gaussian mixture distribution: 
$$
\left(\vesub{f}{\bs z} \mid  \ve{y}\right) \sim \sum_{\bs {z}}p(\ve{z})N\left( \vesub{K}{\bs{z}}\left( \vesub{K}{\bs{z}}+\sigma^2_0\vesub{I}{n}\right)^{-1}\ve{y}\ , \ \sigma^2_0 \vesub{K}{\bs{z}}\left( \vesub{K}{\bs{z}}+\sigma^2_0\vesub{I}{n}\right)^{-1}\right).$$ Thus, we obtain: \begin{align*}
	\E\left[ \vesub{f}{\bs z} \mid  \ve{y}\right]&= \sum_{\bs {z}}p(\ve{z}) \vesub{K}{\bs{z}}\left( \vesub{K}{\bs{z}}+\sigma^2_0\vesub{I}{n}\right)^{-1}\ve{y},\\
	\E\left[ \vesub{f}{\bs z}\vess{f}{\bs z}{T} \mid  \ve{y}\right]&=\sum_{\bs z}p(\ve{z})\left\{ \sigma^2_0 \vesub{K}{\bs{z}}\left( \vesub{K}{\bs{z}}+\sigma^2_0\vesub{I}{n}\right)^{-1}+\vesub{K}{\bs{z}}\left( \vesub{K}{\bs{z}}+\sigma^2_0\vesub{I}{n}\right)^{-1}\ve{y}\vesup{y}{T}\left( \vesub{K}{\bs{z}}+\sigma^2_0\vesub{I}{n}\right)^{-1}\vesub{K}{\bs{z}} \right\}.
\end{align*}

Since $	\E\left[ \vess{f}{\bs z}{T}\vesub{f}{\bs z}+\vesup{y}{T}\ve{y} -2 \vesup{y}{T}\vesub{f}{\bs z} \mid  \ve{y}\right]=\tr\left( \E \left[\vesub{f}{\bs z}\vess{f}{\bs z}{T}+\ve{y}\vesup{y}{T} -2 \vesub{f}{\bs z}\vesup{y}{T} \mid  \ve{y}\right] \right)$, we first derive that: \begin{align*}
	&\E\left[ \vess{f}{\bs z}{T}\vesub{f}{\bs z}+\vesup{y}{T}\ve{y} -2 \vesup{y}{T}\vesub{f}{\bs z} \mid  \ve{y}\right]\\
	=&\sum_{\bs z}p(\ve{z}) \Bigg\{ \tr\bigg(\sigma^2_0 \vesub{K}{\bs{z}}\left( \vesub{K}{\bs{z}}+\sigma^2_0\vesub{I}{n}\right)^{-1}+\vesub{K}{\bs{z}}\left( \vesub{K}{\bs{z}}+\sigma^2_0\vesub{I}{n}\right)^{-1}\ve{y}\vesup{y}{T}\left( \vesub{K}{\bs{z}}+\sigma^2_0\vesub{I}{n}\right)^{-1}\vesub{K}{\bs{z}} +\ve{y}\vesup{y}{T}\\
	&\qquad\qquad\qquad\qquad -2\vesub{K}{\bs{z}}\left( \vesub{K}{\bs{z}}+\sigma^2_0\vesub{I}{n}\right)^{-1}\ve{y}\vesup{y}{T} \bigg) \Bigg\}\\
	=& \sum_{\bs z}p(\ve{z}) \Bigg\{ \sigma_0^2 \tr\bigg(\vesub{K}{\bs{z}}\left( \vesub{K}{\bs{z}}+\sigma^2_0\vesub{I}{n}\right)^{-1} \bigg)\\ 
	&\qquad\qquad\qquad\qquad  + \tr\bigg(\left( \vess{K}{\bs{z}}{2}\left( \vesub{K}{\bs{z}}+\sigma^2_0\vesub{I}{n}\right)^{-2}-2\vesub{K}{\bs{z}}\left( \vesub{K}{\bs{z}}+\sigma^2_0\vesub{I}{n}\right)^{-1} + \vesub{I}{n}\right) \ve{y}\vesup{y}{T} \bigg) \Bigg\} \\
	=& \sigma_0^2 \sum_{\bs z}p(\ve{z}) \Bigg\{ \tr\bigg(\vesub{K}{\bs{z}}\left( \vesub{K}{\bs{z}}+\sigma^2_0\vesub{I}{n}\right)^{-1} \bigg) +\tr\bigg(\sigma_0^2\left( \vesub{K}{\bs{z}}+\sigma^2_0\vesub{I}{n}\right)^{-2} \ve{y}\vesup{y}{T} \bigg) \Bigg\}.
\end{align*} 
As the marginal distribution for the observations is a Gaussian mixture distribution, such that \begin{align*}
	\ve{y}\sim \sum_{\bs z}p(\ve{z})N(\vesub{0}{n}\, , \, \vesub{K}{z}+\sigma_0^2\vesub{I}{n}), 
\end{align*}
it holds the convergence property $ \frac{1}{n}\sum_{\bs z}p(\ve{z})\vesup{y}{T}\left( \vesub{K}{z}+\sigma_0^2\vesub{I}{n} \right)^{-1}\ve{y} \stackrel{p}{\rightarrow} 1 $ as $n\rightarrow +\infty$. 

Consequently, we find that, as $n$ is sufficiently large,
\begin{align*}
	& \frac{1}{n} \E\left[ \vess{f}{\bs z}{T}\vesub{f}{\bs z}+\vesup{y}{T}\ve{y} -2 \vesup{y}{T}\vesub{f}{\bs z} \mid  \ve{y}\right] \\
	\approx & 
	\sigma_0^2 \sum_{\bs z}p(\ve{z}) \Bigg\{ \tr\bigg(\vesub{K}{\bs{z}}\left( \vesub{K}{\bs{z}}+\sigma^2_0\vesub{I}{n}\right)^{-1} \bigg) +\tr\bigg(\sigma_0^2\vesub{I}{n}\left( \vesub{K}{z}+\sigma_0^2\vesub{I}{n} \right)^{-1} \bigg) \Bigg\}\\
	=&n\sigma_0^2
\end{align*} 
with probability 1.

Thus, $$\E\left(\sigma^2  \mid  \ve{y}\right)\stackrel{p}{\longrightarrow} \frac{n \sigma_0^2}{n+2\alpha-2} + \frac{2\beta}{n+2\alpha-2},$$ which converges to the true value $\sigma_0^2$, as $n\rightarrow +\infty$.

\section{Proof of Consistency for $\vesub{\theta}{z}$}\label{SM-sec:Consistency-theta_z}

We first prove the following lemma.

\begin{lemma} \label{Lemma:partial_q}
	For any given distribution $\ve{\pi}$ and transition probability matrix $\vesub{P}{1}$, $\vesub{P}{2}$, we have \begin{align*}
		\vesup{\pi}{T} \vesub{P}{1} \frac{\partial \exp \left[\ve{Q} t \right]}{ \partial q_{mh} } \vesub{P}{2} \vesub{1}{M}=0, \text{ for all } m,h =1,\cdots ,M,\ h \neq m. 
	\end{align*}
	
	\begin{proof}[Proof of Lemma \ref{Lemma:partial_q}]
		The first derivative of $\exp \left[\ve{Q} t \right]$ with respect to $q_{mh}$ have the following form \citep{wilcox1967exponential}: \begin{align*}
			\frac{\partial \exp \left[\ve{Q} t \right]}{ \partial q_{mh} }=t\int_{0}^1{\exp\left[ \ve{Q} ut\right]\frac{\partial \ve{Q}}{ \partial q_{mh} } \exp\left[ \ve{Q} (1-u)t\right] } d u.
		\end{align*} Recalling $q_{mm}=-\sum_{h \neq m} q_{mh}$, we have $\frac{\partial \ve{Q}}{ \partial q_{mh} }=\vesub{e}{m}\left(\vesub{e}{h}-\vesub{e}{m} \right)^T$, and then \begin{align*}
			\vesup{\pi}{T} \vesub{P}{1} \frac{\partial \exp \left[\ve{Q} t \right]}{ \partial q_{mh} } \vesub{P}{2} \vesub{1}{M}&=t\vesup{\pi}{T} \vesub{P}{1}\int_{0}^1{\exp\left[ \ve{Q} ut\right]\vesub{e}{m}\left(\vesub{e}{h}-\vesub{e}{m} \right)^T \exp\left[ \ve{Q} (1-u)t\right] } d u \vesub{P}{2} \vesub{1}{M} \\
			&=t\int_{0}^1{\vesup{\pi}{T} \vesub{P}{1}\exp\left[ \ve{Q} ut\right]\vesub{e}{m}\left(\vesub{e}{h}-\vesub{e}{m} \right)^T \exp\left[ \ve{Q} (1-u)t\right] \vesub{P}{2} \vesub{1}{M} }d u. 
		\end{align*}
		By the definition of generator matrix, $\exp\left[ \ve{Q} ut\right]$ and $\exp\left[ \ve{Q} (1-u)t\right] $ are transition probability matrices from time $0$ to $ut$ and $(1-u)t$ respectively. Hence both $\vesub{P}{1}\exp\left[ \ve{Q} ut\right]$ and $\exp\left[ \ve{Q} (1-u)t\right]\vesub{P}{2}$ are transition probability matrices and the summation of each row equals $1$. That is, $$\vess{e}{h}{T} \exp\left[ \ve{Q} (1-u)t\right] \vesub{P}{2} \vesub{1}{M}=\vess{e}{m}{T} \exp\left[ \ve{Q} (1-u)t\right]\vesub{P}{2} \vesub{1}{M}=1,$$ which implies $$\vesup{\pi}{T} \vesub{P}{1}\exp\left[ \ve{Q} ut\right]\vesub{e}{m}\left(\vesub{e}{h}-\vesub{e}{m} \right)^T \exp\left[ \ve{Q} (1-u)t\right] \vesub{P}{2} \vesub{1}{M} =0,$$ and the proof is completed. 
	\end{proof}
\end{lemma}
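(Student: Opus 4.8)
The plan is to combine the integral representation of the derivative of a matrix exponential with the row-stochasticity of the factors standing to the right of the derivative. First I would invoke the Wilcox integral formula, which expresses the derivative as
\[
\frac{\partial \exp[\ve{Q}t]}{\partial q_{mh}} = t\int_0^1 \exp[\ve{Q}ut]\,\frac{\partial \ve{Q}}{\partial q_{mh}}\,\exp[\ve{Q}(1-u)t]\,du,
\]
so that, after inserting this into the quantity of interest and exchanging the (finite-dimensional, bounded) integral with the surrounding matrix products, it suffices to show that the integrand vanishes for every $u\in[0,1]$.

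The next step is to compute $\partial \ve{Q}/\partial q_{mh}$ while respecting the generator constraint $q_{mm}=-\sum_{h\neq m}q_{mh}$. Raising the off-diagonal rate $q_{mh}$ forces the diagonal entry $q_{mm}$ to fall by the same amount, so the derivative is the rank-one matrix $\vesub{e}{m}(\vesub{e}{h}-\vesub{e}{m})^T$, carrying a $+1$ in entry $(m,h)$ and a $-1$ in entry $(m,m)$. Substituting this and pulling out the scalar $t$, the integrand takes the form $\vesup{\pi}{T}\vesub{P}{1}\exp[\ve{Q}ut]\,\vesub{e}{m}\,(\vesub{e}{h}-\vesub{e}{m})^T\exp[\ve{Q}(1-u)t]\,\vesub{P}{2}\,\vesub{1}{M}$, a product of a scalar-valued left block and a scalar-valued right block.

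The crux, and the only genuinely substantive step, is to show that the right block is identically zero. Because $\ve{Q}$ is a generator its rows sum to zero, i.e. $\ve{Q}\vesub{1}{M}=\vesub{0}{M}$, so the all-ones vector lies in the kernel of $\ve{Q}$ and is fixed by every matrix exponential: $\exp[\ve{Q}(1-u)t]\vesub{1}{M}=\vesub{1}{M}$. Since $\vesub{P}{2}$ is itself a transition probability matrix, $\vesub{P}{2}\vesub{1}{M}=\vesub{1}{M}$ as well, whence $\exp[\ve{Q}(1-u)t]\vesub{P}{2}\vesub{1}{M}=\vesub{1}{M}$. The difference structure of the derivative then forces $(\vesub{e}{h}-\vesub{e}{m})^T\vesub{1}{M}=1-1=0$, independently of $u$. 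This annihilates the integrand for all $u$, so the integral, and hence the whole expression, equals zero for every pair $m\neq h$.

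I expect the main obstacle to be bookkeeping rather than conceptual. The one place the argument can go wrong is in differentiating $\ve{Q}$: one must remember that the zero-row-sum constraint couples $q_{mh}$ to $q_{mm}$, producing the essential difference vector $\vesub{e}{h}-\vesub{e}{m}$; dropping the induced $-1$ on the diagonal would destroy the cancellation. It is also worth emphasizing that the left factor $\vesup{\pi}{T}\vesub{P}{1}\exp[\ve{Q}ut]\vesub{e}{m}$ is never needed, since neither the distribution $\ve{\pi}$ nor the matrix $\vesub{P}{1}$ enters the vanishing argument; the collapse is driven entirely by right-multiplication by $\vesub{1}{M}$ against a stochastic right factor.
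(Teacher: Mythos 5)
Your proposal is correct and follows essentially the same route as the paper: the Wilcox integral representation, the rank-one derivative $\vesub{e}{m}(\vesub{e}{h}-\vesub{e}{m})^T$ arising from the generator constraint, and the collapse of the right factor via row-stochasticity so that $(\vesub{e}{h}-\vesub{e}{m})^T$ annihilates $\vesub{1}{M}$. The only cosmetic difference is that you propagate $\vesub{1}{M}$ through $\vesub{P}{2}$ and the exponential as a fixed vector, whereas the paper notes that the two relevant row sums both equal one; these are the same observation.
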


We impose the following regularity conditions: 
\begin{itemize}
	\item[(C1)] The matrix $\frac{\partial^2 \log p(\ve{y}|\vesub{\theta}{z} )}{\partial \vesub{\theta}{z} \partial \vess{\theta}{z}{T} }$ is continuous in $\vesub{\theta}{z}$ throughout some neighborhood of the true values $\vesub{\theta}{z0}$, and the Fisher information matrix $\vesub{B}{n}=-\E\left[\left. \frac{\partial^2\log p(\ve{y} | \vesub{\theta}{z})}{\partial \vesub{\theta}{z}\partial \vess{\theta}{z}{T}} \right|_{\vesub{\theta}{z}=\vesub{\theta}{z0}}\right]$ is positive definite. 
	\item[(C2)] There exist a $\delta>0$ and some sequence $\{\lambda_n \}$, neither of them depending on $\vesub{\theta}{z}$, such that, as $n\rightarrow +\infty$ and $|\vesub{\theta}{z}-\vesub{\theta}{z0}|< \delta,$
	\begin{equation*}
		p\left(-\lambda_n^{-\frac{1}{2}}(\vesub{\theta}{z}-\vesub{\theta}{z0})^T\vess{B}{n}{-\frac{1}{2}} \left. \frac{\partial^2 \log p(\ve{y}|\vesub{\theta}{z} )}{\partial \vesub{\theta}{z} \partial \vess{\theta}{z}{T}}
		\right|_{\vesub{\theta}{z}=
			\tilde{\ve{\theta}}_{z0}}(\vesub{\theta}{z}-\vesub{\theta}{z0})\geq |\vesub{\theta}{z}-\vesub{\theta}{z0}|^2 \right)\rightarrow 1,
	\end{equation*} 
	where $\tilde{\ve{\theta}}_{z0}= \vesub{\theta}{z0}+a(\vesub{\theta}{z} - \vesub{\theta}{z_0} ) $ for some $0\leq a \leq 1$.
\end{itemize}

In addition to (C1)-(C2), we also need to show that
the probability $p\left(\ve{y} \mid  \vesub{\theta}{z}\right)$, also the likelihood function of the observations, satisfies the following regularity conditions for maximum likelihood estimator \citep{lehmann1983theory,crowder1976maximum}:
\begin{itemize}
	\item[(C3)] The first and second derivatives of the logarithm of the density $p\left(\ve{y} \mid \vesub{\theta}{z} \right)$ satisfy the equations \begin{align*}
		\E\left[\frac{\partial \log p\left(\ve{y} \mid \vesub{\theta}{z} \right)}{\partial q_{(a)}}\right]&=0, \\
		\E\left[\frac{\partial \log p\left(\ve{y} \mid \vesub{\theta}{z} \right)}{\partial q_{(a)}}\frac{\partial \log p\left(\ve{y} \mid \vesub{\theta}{z} \right)}{\partial q_{(b)}}\right]&=-\E\left[\frac{\partial^2 \log p\left(\ve{y} \mid \vesub{\theta}{z} \right)}{\partial q_{(a)} \partial q_{(b)} }\right],
	\end{align*} for $q_{(a)}$, $q_{(b)}$ $\in \vesub{\theta}{z}$.
\end{itemize}

For the proof of (C3), we first note that 
\begin{align*}
	\E\left[\frac{\partial \log p\left(\ve{y} \mid \vesub{\theta}{z} \right)}{\partial q_{(a)}}\right]=\int{ \frac{1}{p\left(\ve{y} \mid \vesub{\theta}{z} \right)}\frac{\partial p\left(\ve{y} \mid \vesub{\theta}{z} \right)}{\partial q_{(a)}} p\left(\ve{y} \mid \vesub{\theta}{z} \right) } d \ve{y}=\int{\frac{\partial p\left(\ve{y} \mid \vesub{\theta}{z} \right)}{\partial q_{(a)}} }d \ve{y},
\end{align*} 
where \begin{align*}
	\frac{\partial p\left(\ve{y} \mid \vesub{\theta}{z} \right)}{\partial q_{(a)}}=&\sum_{\ve{z}} \Bigg\{ p\left(\ve{y} \mid  \ve{z}\right)\sum_{i=1}^n\Bigg[ \Big(\prod_{j=1}^{i-1}\vess{e}{z_{j-1}}{T} \exp\big[\ve{Q} (t_j-t_{j-1}) \big]\vesub{e}{z_j} \Big)\vess{e}{z_{i-1}}{T} \frac{\partial \exp\big[\ve{Q} (t_i-t_{i-1}) \big]}{\partial q_{(a)}}\vesub{e}{z_i}\\
	& \qquad\qquad \Big(\prod_{j=i+1}^{n}\vess{e}{z_{j-1}}{T} \exp\big[\ve{Q} (t_j-t_{j-1}) \big]\vesub{e}{z_j} \Big) \vess{e}{z_n}{T} \exp\big[\ve{Q} (1-t_{n}) \big]\vesub{1}{M} \Bigg] \Bigg\}.
\end{align*} 
With $\int{ p\left(\ve{y} \mid  \ve{z} \right) }d \ve{y}=1 $, we have \begin{align*}
	&\E\left[\frac{\partial \log p\left(\ve{y} \mid \vesub{\theta}{z} \right)}{\partial q_{(a)}}\right]\\
	=&\sum_{\ve{z}} \sum_{i=1}^n \Bigg[ \Big(\prod_{j=1}^{i-1}\vess{e}{z_{j-1}}{T} \exp\big[\ve{Q} (t_j-t_{j-1}) \big]\vesub{e}{z_j} \Big)\vess{e}{z_{i-1}}{T} \frac{\partial \exp\big[\ve{Q} (t_i-t_{i-1}) \big]}{\partial q_{(a)}}\vesub{e}{z_i}\\
	& \qquad\qquad\qquad \Big(\prod_{j=i+1}^{n}\vess{e}{z_{j-1}}{T} \exp\big[\ve{Q} (t_j-t_{j-1}) \big]\vesub{e}{z_j} \Big) \vess{e}{z_n}{T} \exp\big[\ve{Q} (1-t_{n}) \big]\vesub{1}{M} \Bigg]\\
	=& \sum_{i=1}^n\Bigg[ \vess{e}{z_{0}}{T}\Big(\prod_{j=1}^{i-1} \exp\big[\ve{Q} (t_j-t_{j-1}) \big] \Big) \frac{\partial \exp\big[\ve{Q} (t_i-t_{i-1}) \big]}{\partial q_{(a)}} \\
	&\qquad\qquad \Big(\prod_{j=i+1}^{n}\exp\big[\ve{Q} (t_j-t_{j-1}) \big]\Big) \exp\big[\ve{Q} (1-t_{n}) \big]\vesub{1}{M} \Bigg] \\
	=& \sum_{i=1}^n\Bigg[ \vess{e}{z_{0}}{T} \exp\big[\ve{Q} t_{i-1} \big] \frac{\partial \exp\big[\ve{Q} (t_i-t_{i-1}) \big]}{\partial q_{(a)}} \exp\big[\ve{Q} (1-t_{i}) \big]\vesub{1}{M} \Bigg].
\end{align*} 
By Lemma \ref{Lemma:partial_q}, we have that $\E\left[\frac{\partial \log p(\ve{y}|\vesub{\theta}{z} )}{\partial q_{(a)}}\right]=0$ holds for $q_{(a)} \in \vesub{\theta}{z}$. 

Furthermore, taking derivative of its expectation with respect to $q_{(b)} \in \vesub{\theta}{z}$, we obtain that \begin{align*}
	\frac{\partial}{\partial q_{(b)}}\E\left[\frac{\partial \log p\left(\ve{y} \mid \vesub{\theta}{z} \right)}{\partial q_{(a)}}\right]
	=\frac{\partial }{\partial q_{(b)}} \int{ \frac{\partial \log p\left(\ve{y} \mid \vesub{\theta}{z} \right)}{\partial q_{(a)}} p\left(\ve{y} \mid \vesub{\theta}{z} \right) } d \ve{y}=0.
\end{align*} By the boundedness of the derivatives, we can change the order of derivation and integration, and we obtain \begin{align*}
	\E\left[\frac{\partial \log p\left(\ve{y} \mid \vesub{\theta}{z} \right)}{\partial q_{(a)}}\frac{\partial \log p\left(\ve{y} \mid \vesub{\theta}{z} \right)}{\partial q_{(b)}}\right]+\E\left[\frac{\partial^2 \log p\left(\ve{y} \mid \vesub{\theta}{z} \right)}{\partial q_{(a)} \partial q_{(b)} }\right]=0.
\end{align*} Thus the proof for (C3) is completed. 

It yields by (C3) that
$$\E\left(\left| \vess{B}{n}{-\frac{1}{2}}\left. \frac{\partial \log p(\ve{y}|\vesub{\theta}{z} )}{\partial \vesub{\theta}{z}}\right|_{\vesub{\theta}{z}=\vesub{\theta}{z0}} \right|^2 \right)=M(M-1). $$ 
Consequently, we have
\begin{equation}
	\lambda_n^{-\frac{1}{2}}\vess{B}{n}{-\frac{1}{2}}\left. \frac{\partial \log p(\ve{y}|\vesub{\theta}{z} )}{\partial \vesub{\theta}{z}}\right|_{\vesub{\theta}{z}=\vesub{\theta}{z0}} \rightarrow \ve{0},\ \text{when} \ |\vesub{\theta}{z}-\vesub{\theta}{z0}|<\delta. 
	\label{eq .Appen::thetaz}
\end{equation}

By Taylor expansion,
$$\vess{B}{n}{-\frac{1}{2}}\frac{\partial \log p(\ve{y}|\vesub{\theta}{z} )}{\partial \vesub{\theta}{z}}=\vess{B}{n}{-\frac{1}{2}}\left. \frac{\partial \log p(\ve{y}|\vesub{\theta}{z} )}{\partial \vesub{\theta}{z}}\right|_{\vesub{\theta}{z}=\vesub{\theta}{z0}}+\vess{B}{n}{-\frac{1}{2}}\left. \frac{\partial^2 \log p(\ve{y}|\vesub{\theta}{z} )}{\partial \vesub{\theta}{z}\partial \vess{\theta}{z}{T} }\right|_{\vesub{\theta}{z}=\tilde{\ve{\theta}}_{z0}}( \vesub{\theta}{z}-\vesub{\theta}{z0} ).$$ 

Now we define \begin{equation}
	\vesub{a}{n}=\lambda_n^{-\frac{1}{2}}\vess{B}{n}{-\frac{1}{2}}\left. \frac{\partial \log p(\ve{y}|\vesub{\theta}{z} )}{\partial \vesub{\theta}{z}}\right|_{\vesub{\theta}{z}=\vesub{\theta}{z0}} +\lambda_n^{-\frac{1}{2}}\vess{B}{n}{-\frac{1}{2}}\left[ \left. \frac{\partial^2 \log p(\ve{y}|\vesub{\theta}{z} )}{\partial \vesub{\theta}{z} \partial \vess{\theta}{z}{T} }\right|_{\vesub{\theta}{z}=\tilde{\ve{\theta}}_{z0}}\right](\vesub{\theta}{z}-\vesub{\theta}{z_0}) \label{eqAppendx::an}, 
\end{equation} 
Condition (C2) and Equation \eqref{eq .Appen::thetaz} imply that, given $|\vesub{\theta}{z}-\vesub{\theta}{z_0}|<\delta$, 
\begin{align*}
	p\left( (\vesub{\theta}{z}-\vesub{\theta}{z_0})^T\vesub{a}{n}<0\right)
	\rightarrow 1,\ \text{as}\ n \rightarrow +\infty.
\end{align*}
It follows, using an equivalence of Brouwer's fixed point theorem as in \cite{aitchison1958maximum}, that, with probability tending to 1 as $n \rightarrow +\infty $, $\vesub{a}{n}$ has a zero, denoted by $\vesub{\hat\theta}{z}$, with $|\vesub{\hat\theta}{z}-\vesub{\theta}{z0}|<\delta$. Thus there exists a weakly consistent solution $\vesub{\hat\theta}{z}$ of the likelihood equation 
$\partial\log p\left(\ve{y} \mid  \vesub{\theta}{z}\right)/\partial \vesub{\theta}{z} = 0 $.

\section{Proof of Information Consistency for $\hve{y}$} \label{SM-sec:Consistency-y}

In this proof, we use the covariance functions to define a set of functions over the time range $\mathcal{T}$. The space of such functions is known as a reproducing kernel Hilbert space (RKHS). Let $\mathcal{H}$ denote the RKHS associated with the covariance functions $\{k_{mh}(\cdot, \cdot)\}_{m,h=1}^M$, as defined in Equation \eqref{SM-eq:cov-cross}. Let $\mathcal{H}_n$ be the linear span of the collection $\left\{k_{mh}(\cdot,t_i),i=1,\cdots, n \right\}_{m,h=1}^M, $ such that 
\begin{align*} 
	\mathcal{H}_n=\left\{ (f_{1}(\cdot), \cdots, f_{M}(\cdot))^T: f_m(t)=\sum_{i=1}^n\sum_{h=1}^M \alpha_{ih}k_{mh}(t,t_i),\ \forall \alpha_{ih} \in \Re,\ m=1,\cdots, M\right\}. 
\end{align*}
We first assume that the true underlying function $\vesub{f}{0}(\cdot)=(f_{10}(\cdot), \cdots, f_{M0}(\cdot))^T \in \mathcal{H}_n$, then $f_{m 0}(\cdot)$ can be expressed as 
\begin{align*}
	f_{m0}(\cdot)=\sum_{i=1}^n\sum_{h=1}^M \rho_{ih}k_{mh}(\cdot,t_i)=\left[\vesub{k}{m}(\cdot,t_1),\cdots,\vesub{k}{m}(\cdot,t_n) \right]\ve{\rho},
\end{align*}
where $\ve{\rho}=\left(\vess{\rho}{1}{T},\cdots, \vess{\rho}{n}{T}\right)^T$ with $\vesub{\rho}{i}=\left(\rho_{i1},\cdots, \rho_{iM}\right)^T$, and $\vesub{k}{m}(\cdot,t_i)=( k_{m1}(\cdot,t_i),\cdots,k_{mM}(\cdot,t_i))$, for $i=1,\cdots, n$. The $nM\times 1$ vector $\fall_{0}$ denotes the realizations of the true functions $\{f_{m0}(\cdot)\}_{m=1}^M$ at all time points $\{t_i\}_{i=1}^n$, such that $\fall_{0}=\left(\vesub{f}{0}(t_1)^T,\cdots,\vesub{f}{0}(t_n)^T \right)^T$. Referring to the definition for the covariance matrix $\ve{K}$, such that $\ve{K}=\{\ve{K}(t_i,t_j)\}_{i,j=1}^n$, the $(m,h)$-th entry of the block matrix is expressed as $$\ve{K}(t_i,t_j)=\left(\begin{matrix}
	k_{11}(t_i,t_j) & \cdots & 	k_{1M}(t_i,t_j)\\
	\vdots&\ddots&\vdots\\
	k_{M1}(t_i,t_j) & \cdots & 	k_{MM}(t_i,t_j)
\end{matrix}\right)=\left(\begin{matrix}
	\vesub{k}{1}(t_i,t_j)\\
	\vdots\\
	\vesub{k}{M}(t_i,t_j)
\end{matrix}\right).$$ 
By the properties of RKHS, $\|\vesub{f}{0}\|_{\bs K}^2= \vesup{\rho}{T}\ve{K}\ve{\rho}$ and $\fall_{0}=\ve{K}\ve{\rho}$. 

Incorporating information from the hidden state process, the underlying function for the observational data is $f_{z}(\cdot)=f_{z(\cdot)}(\cdot)$, with the true function represented as $f_{z(\cdot)0}(\cdot)$. Given $z(t)=m$, it follows that $f_{z(t)0}(t)=f_{m0}(t)$. Based on the estimated hidden state sequence $\ve{z}$, which captures the realizations of $z(t)$ over the time points $\{t_i\}_{i=1}^n$, such that $\ve{z}=(z_1,\cdots,z_n)^T$ with $z_i=z(t_i)$, we denote
$\boldsymbol{f}_{\bs{z}}=\left( f_{z_1}(t_1),\cdots, f_{z_n}(t_n)\right)^T$ and
$\vesub{f}{\bs{z}0}=\left( f_{z_1 0}(t_1),\cdots, f_{z_n0}(t_n)\right)^T$ as the realizations of the true function across all time points $\{t_i\}_{i=1}^n$. Utilizing the definition for $\vesub{K}{\bs{z}}$, such that $\vesub{K}{\bs{z}}=\{k_{z_i z_j}(t_i,t_j)\}_{i,j=1}^n$, we have $\vesub{f}{\bs{z}0}=\vesub{K}{\bs{z}}\vesub{\rho}{\bs{z}}$, where $\vesub{\rho}{\bs{z}}=\left(\rho_{1z_1},\cdots, \rho_{nz_n}\right)^T$ is the subvector of $\ve{\rho}$.

Let $P$ and $\bar{P}$ be any two measures on $\mathcal{F}=\{f(\cdot): \mathcal{T}\rightarrow \Re\}$, then it yields by the Fenchel-Legendre duality relationship that, for any functional $\phi(\cdot)$ on $\mathcal{F}$, 
$$\E_{\bar P}\left[\phi(f_{z}(\cdot)) \right]\leq \log \E_P\left[ \exp\left(\phi(f_{z}(\cdot))\right)\right]+\text{KL}[\bar{P},P].$$ 

Let us consider the following:
\begin{enumerate}
	\item Set $\phi(f_z(\cdot))$ be $\log p\left( \ve{y} \mid  f_{z}(\cdot),{z}\right)$ for any specified hidden state sequence ${\ve{z}}$ and $f_z(\cdot) \in \mathcal{F} $,
	
	\item Let $P$ be the measure induced by the convolution GP, hence the distribution for $\vesub{f}{\bs z}$ is $\tilde{p}(\vesub{f}{\bs z})=N\left(\vesub{0}{n}, \hvesub{K}{\bs{z}} \right)$. We then have $\E_P\left[ \exp\left(\phi(f_z(\cdot))\right)\right]=\int{ p\left(\ve{y} \mid  \vesub{f}{\bs{z}},\bs{z}\right) }p\left(\vesub{f}{\bs{z}}|\bs{z}\right) d \vesub{f}{\bs{z}}=p_{cgp}\left(\ve{y} \mid  \ve{z}\right)$, where $\hvesub{K}{\bs{z}}$ is defined in the same way as $\vesub{K}{\bs{z}}$ but with $\vesub{\theta}{f}$ being replaced by its estimator $\hvesub{\theta}{f}$,
	
	\item Let $\bar{P}$ be the posterior distribution of $f_z(\cdot)$ on $\mathcal{F}$ which has a convolution GP prior distribution and Gaussian likelihood $\prod_{i=1}^{n}N\left(\hat{y}_i  \mid  f_{z_i}(t_i),\hat{\sigma}^2 \right)$, where \begin{align*}
		\hve{y}=\left(\vesub{K}{\bs{z}}+\hat{\sigma}^2 \vesub{I}{n}\right)\vesub{\rho}{\bs{z}}.
	\end{align*} 
	Here $\hat{\sigma}^2$ is a constant to be specified. Therefore, the posterior of $\vesub{f}{\bs{z}}$ is \begin{equation}
		\begin{aligned}
			\bar{p}\left(\vesub{f}{\bs{z}}\right)
			\stackrel{\triangle}{=} &p\left(\vesub{f}{\bs{z}} \mid  \hve{y}\right)\\
			=& N\left( \vesub{K}{\bs{z}}\left( \vesub{K}{\bs{z}}+\hat{\sigma}^2 \vesub{I}{n}\right)^{-1}\hve{y}\ , \ \vesub{K}{\bs{z}}- \vesub{K}{\bs{z}}\left( \vesub{K}{\bs{z}}+\hat{\sigma}^2 \vesub{I}{n}\right)^{-1}\vesub{K}{\bs{z}} \right)\\
			=& N\left(\vesub{K}{\bs{z}}\vesub{\rho}{\bs{z}}\ , \ \hat{\sigma}^2\vesub{K}{\bs{z}}\left(\vesub{K}{\bs{z}} +\hat{\sigma}^2 \vesub{I}{n}\right)^{-1} \right).
		\end{aligned} \label{eq::barP}
	\end{equation}
\end{enumerate} 

It follows that 
\begin{align*}
	\text{KL}\left[\bar{P}, P\right]=&\int_{\mathcal{F}} \log \frac{d \bar{P}}{d P} d\bar{P}\\
	=&\int_{\Re^n} \bar{p}\left(\vesub{f}{\bs{z}}\right)\log \frac{\bar{p}\left(\vesub{f}{\bs{z}}\right) }{ \tilde{p}\left(\vesub{f}{\bs{z}}\right)} d \vesub{f}{\bs{z}}\\
	=& \frac{1}{2}\Bigg\{ -\log\left|\hvess{K}{\bs{z}}{-1}\vesub{K}{\bs{z}}\right| -\log\left|\hat{\sigma}^2\left( \vesub{K}{\bs{z}}+\hat{\sigma}^2 \vesub{I}{n}\right)^{-1} \right|\\
	&\qquad +\tr\left[\hat{\sigma}^2\hvess{K}{\bs{z}}{-1}\vesub{K}{\bs{z}}\left( \vesub{K}{\bs{z}}+\hat{\sigma}^2 \vesub{I}{n}\right)^{-1} \right] +\left(\vesub{K}{\bs{z}}\vesub{\rho}{\bs{z}}\right)^T\hvess{K}{\bs{z}}{-1}\vesub{K}{\bs{z}}\vesub{\rho}{\bs{z}}-n \Bigg\} \\
	=& \frac{1}{2}\Bigg\{ -\log\left|\hvess{K}{\bs{z}}{-1}\vesub{K}{\bs{z}}\right| +\log\left|\frac{1}{\hat{\sigma}^2}\vesub{K}{\bs{z}}+ \vesub{I}{n} \right|+\tr\left[\hat{\sigma}^2\hvess{K}{\bs{z}}{-1}\vesub{K}{\bs{z}}\left( \vesub{K}{\bs{z}}+\hat{\sigma}^2 \vesub{I}{n}\right)^{-1} \right]\\ 
	&\qquad +\vess{\rho}{\bs{z}}{T}\vesub{K}{\bs{z}}\vesub{\rho}{\bs{z}} +\left(\vesub{K}{\bs{z}}\vesub{\rho}{\bs{z}}\right)^T\left( \hvess{K}{\bs{z}}{-1}\vesub{K}{\bs{z}}-\vesub{I}{n}\right)\vesub{\rho}{\bs{z}}-n \Bigg\}.
\end{align*}

On the other hand, 
$$\E_{\bar{P}}\left[\phi(\vesub{f}{\bs{z}}) \right]=\E_{\bar{P}}\left[\log p\left( \ve{y} \mid \vesub{f}{\bs{z}}\right) \right]=\sum_{i=1}^n\E_{\bar{P}}\left[\log p\left( {y}_i \mid  {f}_{z_i}(t_i)\right) \right] . $$ 
By Taylor's expansion, expanding $\log p\left( {y}_i \mid  {f}_{z_i}(t_i)\right)$ to the second order at the true value $f_{z_i 0}(t_i)$ yields 
\begin{align*}
	\log p\left( {y}_i \mid  {f}_{z_i}(t_i)\right)=&\log p\left( {y}_i \mid  {f}_{z_i0}(t_i)\right)+ \left. \frac{d \log p\left( {y}_i \mid  {f}_{z_i}(t_i)\right)}{ d {f}_{z_i}(t_i)} \right|_{{f}_{z_i}(t_i)={f}_{z_i0}(t_i)}\left({f}_{z_i}(t_i)-{f}_{z_i0}(t_i) \right)\\
	& \qquad + \frac{1}{2} \left. \frac{d^2 \log p\left( {y}_i \mid  {f}_{z_i}(t_i)\right)}{ \left[ d {f}_{z_i}(t_i) \right]^2} \right|_{{f}_{z_i}(t_i)=\tilde{f}_{z_i}(t_i) }\left({f}_{z_i}(t_i)-{f}_{z_i0}(t_i) \right)^2,
\end{align*} where $\tilde{f}_{z_i}(t_i)={f}_{z_i0}(t_i)+a\left[{f}_{z_i}(t_i)-{f}_{z_i0}(t_i) \right]$ for some $0\leq a\leq 1$. 

For $\log p\left( {y}_i \mid  {f}_{z_i}(t_i)\right)=-\frac{1}{2}\log(2\pi {\sigma}^2)-\frac{1}{2{\sigma}^2}\left( {y}_i-{f}_{z_i}(t_i) \right)^2$, we have 
\begin{align*}
	\log p\left( {y}_i \mid  {f}_{z_i}(t_i)\right)=&\log p\left( {y}_i \mid  {f}_{z_i0}(t_i)\right)+\frac{1}{{\sigma}^2}\left(y_i-{f}_{z_i0}(t_i) \right)\left({f}_{z_i}(t_i)-{f}_{z_i0}(t_i) \right) \\
    & - \frac{1}{2{\sigma}^2} \left({f}_{z_i}(t_i)-{f}_{z_i0}(t_i) \right)^2,
\end{align*} 
and it follows that 
\begin{align*}
	\E_{\bar{P}}\left[\log p\left( {y}_i \mid  {f}_{z_i}(t_i)\right) \right]=&\log p\left( {y}_i \mid  {f}_{z_i0}(t_i)\right)+\frac{1}{{\sigma}^2}\left(y_i-{f}_{z_i0}(t_i) \right)\E_{\bar{P}}\left[\left({f}_{z_i}(t_i)-{f}_{z_i0}(t_i) \right) \right]\\
	& \qquad - \frac{1}{2{\sigma}^2}\E_{\bar{P}}\left[\left({f}_{z_i}(t_i)-{f}_{z_i0}(t_i) \right)^2\right]. 
\end{align*}
From Equation \eqref{eq::barP}, we have that $$ \bar{p}\left({f}_{z_i}(t_i)\right)
= N\left( {f}_{z_i0}(t_i)\ , \ \hat{\sigma}^2 \left[\vesub{K}{\bs{z}}\left(\vesub{K}{\bs{z}}+\hat{\sigma}^2 \vesub{I}{n}\right)^{-1}\right]_{z_i\, z_i} \right).$$ Therefore, $\E_{\bar{P}}\left[\left({f}_{z_i}(t_i)-{f}_{z_i0}(t_i) \right) \right]=0$ and $ \E_{\bar{P}}\left[\left({f}_{z_i}(t_i)-{f}_{z_i0}(t_i) \right)^2 \right]=\hat{\sigma}^2 \left[\vesub{K}{\bs{z}}\left(\vesub{K}{\bs{z}}+\hat{\sigma}^2 \vesub{I}{n}\right)^{-1}\right]_{z_i\, z_i} $. Thus we have \begin{align*}
	\sum_{i=1}^n\E_{\bar{P}}\left[\log p\left( {y}_i \mid  {f}_{z_i}(t_i)\right) \right]=& \sum_{i=1}^n \log p\left( {y}_i \mid  {f}_{z_i 0}(t_i)\right)-\frac{\hat{\sigma}^2}{2\sigma^2}\tr\left[\vesub{K}{\bs{z}}\left(\vesub{K}{\bs{z}}+\hat{\sigma}^2 \vesub{I}{n}\right)^{-1} \right] \\
	=& \underbrace{\log p\left( \ve{y} \mid \vesub{f}{\bs{z} 0} \right) }_{\log p_0 ( \bs{y}|\bs{z} )} -\frac{\hat{\sigma}^2}{2\sigma^2}\tr\left[\vesub{K}{\bs{z}}\left(\vesub{K}{\bs{z}}+\hat{\sigma}^2 \vesub{I}{n}\right)^{-1} \right].
\end{align*}

Combining these results yields 
	\begin{align}
		&-\log p_{cgp}\left(\ve{y} \mid  \ve{z}\right)+\log p_0\left( \ve{y} \mid  \ve{z} \right)   \nonumber \\
		\leq& \frac{1}{2}\Bigg\{ -\log\left|\hvess{K}{\bs{z}}{-1}\vesub{K}{\bs{z}}\right| +\log\left|\frac{1}{\hat{\sigma}^2}\vesub{K}{\bs{z}}+ \vesub{I}{n} \right| + \tr\left[\hat{\sigma}^2 \hvess{K}{\bs{z}}{-1}\vesub{K}{\bs{z}}\left( \vesub{K}{\bs{z}}+\hat{\sigma}^2 \vesub{I}{n}\right)^{-1} \right]  \nonumber  \\ 
		&\qquad +\frac{\hat{\sigma}^2}{2 {\sigma}^2}\tr\left[\vesub{K}{\bs{z}}\left( \vesub{K}{\bs{z}}+\hat{\sigma}^2 \vesub{I}{n}\right)^{-1} \right] +\vess{\rho}{\bs{z}}{T}\vesub{K}{\bs{z}}\vesub{\rho}{\bs{z}} +\left(\vesub{K}{\bs{z}}\vesub{\rho}{\bs{z}}\right)^T\left( \hvess{K}{\bs{z}}{-1}\vesub{K}{\bs{z}}-\vesub{I}{n}\right)\vesub{\rho}{\bs{z}}-n \Bigg\} \nonumber \\
		= & \frac{1}{2}\Bigg\{ -\log\left|\hvess{K}{\bs{z}}{-1}\vesub{K}{\bs{z}}\right| +\log\left|\frac{1}{ \hat{\sigma}^2 }\vesub{K}{\bs{z}}+ \vesub{I}{n} \right|+\tr\left[ \big(\hat{\sigma}^2 \hvess{K}{\bs{z}}{-1}\vesub{K}{\bs{z}}+ \frac{\hat{\sigma}^2}{2{\sigma}^2}\vesub{K}{\bs{z}}\big) \left( \vesub{K}{\bs{z}}+\sigma^2 \vesub{I}{n}\right)^{-1} \right]  \nonumber \\ 
		&\qquad + \vess{\rho}{\bs{z}}{T}\vesub{K}{\bs{z}}\vesub{\rho}{\bs{z}} +\left(\vesub{K}{\bs{z}}\vesub{\rho}{\bs{z}}\right)^T\left( \hvess{K}{\bs{z}}{-1}\vesub{K}{\bs{z}}-\vesub{I}{n}\right)\vesub{\rho}{\bs{z}}-n \Bigg\}. \label{eq:-logp_cgp+logp_0}
	\end{align}
Since the covariance function is bounded and continuous in $\vesub{\theta}{f}$, and as the estimator $\hvesub{\theta}{f} \rightarrow \vesub{\theta}{f}$ as $n\rightarrow +\infty$, we have that $\hvess{K}{\bs{z}}{-1}\vesub{K}{\bs{z}} \rightarrow \vesub{I}{n}$ as $n\rightarrow +\infty$. Therefore, there exist some positive constants $\lambda$ and ${\delta}$ such that: \begin{align*}
	&-\log\left|\hvess{K}{\bs{z}}{-1}\vesub{K}{\bs{z}}\right|<\lambda,\qquad \left(\vesub{K}{\bs{z}}\vesub{\rho}{\bs{z}}\right)^T\left( \hvess{K}{\bs{z}}{-1}\vesub{K}{\bs{z}}-\vesub{I}{n}\right)\vesub{\rho}{\bs{z}}<\lambda,\\
	&\tr\left[ \big(\hat{\sigma}^2\hvess{K}{\bs{z}}{-1}\vesub{K}{\bs{z}}+ \frac{\hat{\sigma}^2}{2 {\sigma}^2}\vesub{K}{\bs{z}}\big) \left( \vesub{K}{\bs{z}}+\hat{\sigma}^2 \vesub{I}{n}\right)^{-1} \right]
	< \tr\left[ \big(\hat{\sigma}^2\vesub{I}{n} +\hat{\sigma}^2(\delta+ \frac{1}{2 {\sigma}^2})\vesub{K}{\bs{z}} \big) \left( \vesub{K}{\bs{z}}+\hat{\sigma}^2 \vesub{I}{n}\right)^{-1} \right].
\end{align*} 
Next, consider the right-hand side of Equation \eqref{eq:-logp_cgp+logp_0}: 
\begin{align*}
	\mathrm{RHS} <& \frac{1}{2}\vess{\rho}{\bs{z}}{T}\vesub{K}{\bs{z}}\vesub{\rho}{\bs{z}} + \frac{1}{2}\log\left|\frac{1}{\hat{\sigma}^2}\vesub{K}{\bs{z}}+ \vesub{I}{n} \right|+\lambda  \\
    & + \frac{1}{2}\left\{\tr\left[ \big(\hat{\sigma}^2\vesub{I}{n} +\hat{\sigma}^2(\delta+ \frac{1}{2 {\sigma}^2})\vesub{K}{\bs{z}} \big) \left( \vesub{K}{\bs{z}}+\hat{\sigma}^2 \vesub{I}{n}\right)^{-1} \right] -n \right\} \\
	<& \frac{c}{2}\|\vesub{f}{0} \|_{\bs K}^2 + \frac{1}{2}\log\left|\frac{1}{\hat{\sigma}^2}\vesub{K}{\bs{z}}+ \vesub{I}{n} \right|+\lambda \\
    & + \frac{1}{2}\left\{\tr\left[ \big(\hat{\sigma}^2\vesub{I}{n} +\hat{\sigma}^2(\delta+ \frac{1}{2 {\sigma}^2})\vesub{K}{\bs{z}} \big) \left( \vesub{K}{\bs{z}}+\hat{\sigma}^2 \vesub{I}{n}\right)^{-1} \right] -n \right\},
\end{align*}
by Lemma \ref{Lemma:rhokrho} below.

Note that the above inequality holds for all $\hat{\sigma}^2>0$. By letting $\hat{\sigma}^2=\frac{2\sigma^2}{2\sigma^2\delta+1}$, the right-hand side of Equation \eqref{eq:-logp_cgp+logp_0} simplified to $ \frac{c}{2}\|\vesub{f}{0} \|_{\bs K}^2 + \frac{1}{2}\log\left|(\delta+ \frac{1}{2 {\sigma}^2}) \vesub{K}{\bs{z}}+ \vesub{I}{n} \right|+\lambda$. Taking infimum over $\vesub{f}{0}$ and applying the representer theorem, we obtain 
$$-\log p_{cgp}\left(\ve{y} \mid  \ve{z}\right)+\log p_0\left( \ve{y} \mid  \ve{z} \right) \leq \frac{c}{2}\|\vesub{f}{0} \|_{\bs K}^2 + \frac{1}{2}\log\left|(\delta+ \frac{1}{2 {\sigma}^2}) \vesub{K}{\bs{z}}+ \vesub{I}{n} \right|+\lambda, $$ and \begin{equation}
	\frac{1}{n}\E_{\ve{t}}\left(\text{KL}\left[p_0\left(\ve{y} \mid  \ve{z}\right), p_{cgp}\left(\ve{y} \mid  \ve{z}\right)\right]\right) \leq \frac{c}{2n}\|\vesub{f}{0} \|_{\bs K}^2+ \frac{1}{2n}\E_{\ve{t}}\left( \log\left| (\delta+ \frac{1}{2 {\sigma}^2}) \vesub{K}{\bs{z}}+ \vesub{I}{n} \right| \right)+\frac{\lambda}{n}, \label{eq:bound}
\end{equation} where $c$, $\delta$, and $\lambda$ are positive constants. The proof is complete.

\begin{lemma} \label{Lemma:rhokrho}
	We have $\vess{\rho}{\bs{z}}{T}\vesub{K}{\bs{z}}\vesub{\rho}{\bs{z}}\leq c \|\vesub{f}{0}\|_{\bs K}^2 $ for some positive constant $c$.
	
	\begin{proof}
		As $t \in \mathcal{T}$ and $\mathcal{T}$ is a compact set in $\Re$, given the covariance functions in Equation \eqref{SM-eq:cov-cross}, we have \begin{align*}
			0<c_1 \leq k_{mh}(\cdot,\cdot) \leq c_2,\ m,h=1\cdots, M,
		\end{align*}where $c_1$ and $c_2$ are two constants. The covariance matrix $\ve{K}$ can be eigendecomposited as $\ve{K}=\vesup{\Psi}{T}\ve{\Lambda}\ve{\Psi}$, where $\vesup{\Psi}{T} \ve{\Psi}=\vesub{I}{nM}$ and $\ve{\Lambda}=\diag\left(\lambda_1,\cdots, \lambda_{nM} \right)$ with $\lambda_1\geq \lambda_2\geq \cdots, \geq \lambda_{nM} > 0$. Thus \begin{align*}
			\|\vesub{f}{0}\|_{\bs K}^2=\vesup{\rho}{T}\ve{K}\ve{\rho}=\vesup{\rho}{T}\vesup{\Psi}{T}\ve{\Lambda}\ve{\Psi}\ve{\rho}= \sum_{i=1}^{nM}\lambda_{i} \left(\vess{\psi}{i}{T}\ve{\rho}\right)^2\geq \lambda_{nM}\sum_{i=1}^{nM}\left(\vess{\psi}{i}{T}\ve{\rho}\right)^2 = nM\lambda_{nM} \|\ve{\rho}\|_2^2,
		\end{align*} where $\vesub{\psi}{i}$ is the $i$-th column of $\ve{\Psi}$. Therefore, $\|\ve{\rho}\|_2^2 \leq \frac{1}{nM\lambda_{nM}} \|\vesub{f}{0}\|_{\bs K}^2 $. Since $\vesub{\rho}{\bs z}$ is the subvector of $\ve{\rho}$, we have $\|\vesub{\rho}{\bs z} \|_2^2 \leq \|\ve{\rho}\|_2^2$ and there exist a constant $c_3$, $0<c_3\leq \frac{1}{nM\lambda_{nM}}$, such that $\|\vesub{\rho}{\bs z} \|_2^2 \leq c_3 \|\vesub{f}{0}\|_{\bs K}^2$. We have 
		\begin{equation}
			\vess{\rho}{\bs{z}}{T}\vesub{K}{\bs{z}}\vesub{\rho}{\bs{z}}=\sum_{i,j=1}^n \rho_{i z_i}k_{z_i z_j}(t_i,t_j)\rho_{j z_j} \leq c_2 \|\vesub{\rho}{\bs z} \|_2^2 \leq c \|\vesub{f}{0}\|_{\bs K}^2, \label{eq::normfz}
		\end{equation} where $c$ is a constant and $c=c_2 c_3$.
	\end{proof}
\end{lemma}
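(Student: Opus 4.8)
The plan is to reduce the $n$-dimensional quadratic form $\vess{\rho}{\bs{z}}{T}\vesub{K}{\bs{z}}\vesub{\rho}{\bs{z}}$ to the reproducing kernel Hilbert space norm $\|\vesub{f}{0}\|_{\bs K}^2 = \vesup{\rho}{T}\ve{K}\ve{\rho}$ by passing through the Euclidean norm of the full coefficient vector $\ve{\rho}$. The two ingredients are a uniform boundedness estimate for the covariance kernels and a spectral bound relating the quadratic forms of $\vesub{K}{\bs{z}}$ and $\ve{K}$.

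First I would record that, since the support $\mathcal{T}$ is a compact subset of $\Re$ and each $k_{mh}(\cdot,\cdot)$ defined in Equation \eqref{eq:cov-cross} is a continuous function of its arguments, the kernels are uniformly bounded: there exist constants $0 < c_1 \le c_2$ with $c_1 \le k_{mh}(s,t) \le c_2$ for all $m,h$ and all $s,t \in \mathcal{T}$. This gives entrywise control over both $\ve{K}$ and its submatrix $\vesub{K}{\bs{z}}$. Next I would diagonalize the full Gram matrix as $\ve{K} = \vesup{\Psi}{T}\ve{\Lambda}\ve{\Psi}$ with orthonormal $\ve{\Psi}$ and $\ve{\Lambda} = \diag(\lambda_1, \ldots, \lambda_{nM})$, $\lambda_1 \ge \cdots \ge \lambda_{nM} > 0$, and bound the RKHS norm from below by the smallest eigenvalue, $\|\vesub{f}{0}\|_{\bs K}^2 = \sum_{i} \lambda_i (\vess{\psi}{i}{T}\ve{\rho})^2 \ge \lambda_{nM}\|\ve{\rho}\|_2^2$, so that $\|\ve{\rho}\|_2^2 \le \lambda_{nM}^{-1}\|\vesub{f}{0}\|_{\bs K}^2$. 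Because $\vesub{\rho}{\bs{z}}$ is obtained from $\ve{\rho}$ by selecting one coordinate per time block, it is a subvector and hence $\|\vesub{\rho}{\bs{z}}\|_2 \le \|\ve{\rho}\|_2$.

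Finally I would bound the target quadratic form itself. The cleanest route is to observe that $\vesub{K}{\bs{z}} = \vesup{S}{T}\ve{K}\ve{S}$ for a column-orthonormal selection matrix $\ve{S}$ (each column picks out the coordinate $(i-1)M + z_i$), so by the min-max characterization $\lambda_{\max}(\vesub{K}{\bs{z}}) = \max_{\|u\|=1}(\ve{S}u)^T\ve{K}(\ve{S}u) \le \lambda_{\max}(\ve{K}) = \lambda_1$, whence $\vess{\rho}{\bs{z}}{T}\vesub{K}{\bs{z}}\vesub{\rho}{\bs{z}} \le \lambda_1\|\vesub{\rho}{\bs{z}}\|_2^2$. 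Chaining the three estimates yields $\vess{\rho}{\bs{z}}{T}\vesub{K}{\bs{z}}\vesub{\rho}{\bs{z}} \le \lambda_1\lambda_{nM}^{-1}\|\vesub{f}{0}\|_{\bs K}^2$, i.e. the claim with $c = \lambda_1/\lambda_{nM}$.

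The main obstacle is keeping the constant $c$ harmless and uniform rather than accidentally dimension-dependent. A crude entrywise argument bounds $\vess{\rho}{\bs{z}}{T}\vesub{K}{\bs{z}}\vesub{\rho}{\bs{z}} \le c_2\|\vesub{\rho}{\bs{z}}\|_1^2$, but $\|\cdot\|_1^2 \le n\|\cdot\|_2^2$ injects a factor of $n$ that then has to be reconciled against the dimension-dependent lower bound on $\|\vesub{f}{0}\|_{\bs K}^2$; this bookkeeping is where sign-of-$n$ errors creep in. Routing instead through the operator-norm/compression estimate avoids the $\ell_1$ detour, makes transparent that $c$ equals the condition number of $\ve{K}$, and crucially shows that $c$ does not depend on the particular state sequence $\ve{z}$ used to form $\vesub{K}{\bs{z}}$ and $\vesub{\rho}{\bs{z}}$, which is exactly what the downstream information-consistency bound in Equation \eqref{eq:bound} requires.
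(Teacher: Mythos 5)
Your proof is correct, and while it shares the paper's overall skeleton --- kernel boundedness on the compact support $\mathcal{T}$, the eigendecomposition lower bound $\|\vesub{f}{0}\|_{\bs K}^2=\vesup{\rho}{T}\ve{K}\ve{\rho}\geq \lambda_{nM}\|\ve{\rho}\|_2^2$, and the subvector inequality $\|\vesub{\rho}{\bs z}\|_2\leq\|\ve{\rho}\|_2$ --- your final step is genuinely different and in fact cleaner than the paper's. The paper closes the argument entrywise, writing $\vess{\rho}{\bs{z}}{T}\vesub{K}{\bs{z}}\vesub{\rho}{\bs{z}}=\sum_{i,j=1}^n \rho_{i z_i}k_{z_i z_j}(t_i,t_j)\rho_{j z_j}\leq c_2\|\vesub{\rho}{\bs z}\|_2^2$; as you anticipate in your closing paragraph, an entrywise bound really delivers only $c_2\|\vesub{\rho}{\bs z}\|_1^2\leq c_2\, n\,\|\vesub{\rho}{\bs z}\|_2^2$, so a factor of $n$ is silently dropped there --- in the paper this happens to be offset by a compensating slip in the earlier display, where $\sum_{i=1}^{nM}\left(\vess{\psi}{i}{T}\ve{\rho}\right)^2$ is evaluated as $nM\|\ve{\rho}\|_2^2$ instead of $\|\ve{\rho}\|_2^2$ (by orthonormality of $\ve{\Psi}$), and the two spurious factors essentially cancel so the intended conclusion survives. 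Your route via the column-orthonormal selection matrix, $\vesub{K}{\bs{z}}=\vesup{S}{T}\ve{K}\ve{S}$ and hence $\lambda_{\max}\left(\vesub{K}{\bs{z}}\right)\leq\lambda_1$ (Cauchy interlacing for a principal submatrix), sidesteps both slips, gives the transparent constant $c=\lambda_1/\lambda_{nM}$, and makes explicit that $c$ does not depend on the state sequence $\ve{z}$, which is what the downstream bound \eqref{eq:bound} needs. One caveat applies equally to your argument and to the paper's: $\lambda_1$ and $\lambda_{nM}$ are eigenvalues of the $nM\times nM$ Gram matrix and therefore depend on $n$ (for smooth kernels $\lambda_{nM}$ typically decays as the design points fill in), so neither proof establishes that $c$ is uniform in $n$; the paper simply treats $c$ as fixed when dividing by $n$ in the information-consistency bound, and your proof inherits rather than worsens that gap.
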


\section{Proof of Information Consistency for $\hve{z}$} \label{SM-sec:Consistency-z}

In this proof, we use the same notation $\mathcal{H}$ to be the RKHS, associated with the covariance function $\ve{\mathrm{C}}(\cdot,\cdot)=\diag\left(c_1(\cdot, \cdot), \cdots, c_{M-1}(\cdot,\cdot) \right)$. Let $\mathcal{H}_n$ be the linear span of $\{c_{m}(\cdot,t_i), i=1,\cdots, n, m\in\{1,\cdots, M-1\} \},$ such that 
\begin{align*} 
	\mathcal{H}_n=\left\{ (g_1(\cdot), \cdots, g_{M-1}(\cdot))^T: g_m(t)=\sum_{i=1}^n \alpha_{im}c_{m}(t,t_i),\ \forall\alpha_{im} \in \Re,\ m=1,\cdots, M-1\right\}. 
\end{align*}
We begin by assuming that the true underlying function $\vesub{g}{ 0}(\cdot)=\left( {g}_{10}(\cdot),\cdots, g_{M\text{-}1, 0}(\cdot)\right)^T \in \mathcal{H}_n $. Then it can be expressed as
\begin{align*}
	\vesub{g}{0}(\cdot)=\left[\ve{\mathrm{C}}(\cdot,t_1),\cdots,\ve{\mathrm{C}}(\cdot,t_n) \right]\ve{\rho},
\end{align*}
where $\ve{\rho}=\left(\vess{\rho}{1}{T},\cdots, \vess{\rho}{n}{T}\right)^T$, with $\vesub{\rho}{i}=\left(\rho_{i1},\cdots, \rho_{iM\text{-}1}\right)^T$ for $i=1,\cdots, n$. The matrix $\ve{\mathrm{C}}(\cdot,t_i)$ is defined as $\diag( c_{1}(\cdot,t_i),\cdots, c_{M\text{-}1}(\cdot,t_i))$. Denote the realizations of the true functions $\{g_{m0}(\cdot)\}_{m=1}^{M-1}$ at all time points $\{t_i\}_{i=1}^n$ as $\gall_{0}=\left(\vesub{g}{0}(t_1)^T,\cdots,\vesub{g}{0}(t_n)^T \right)^T$. Referring to the definition for the covariance matrix $\ve{C}$, such that $\ve{C}=\left\{\ve{\mathrm{C}}(t_i,t_j)\right\}_{i,j=1}^n$, and utilizing the properties of RKHS, the norm for $\ve{g}_0(\cdot)$ in this space satisfies $\|\vesub{g}{0} \|_{\bs C}^2 =\vesup{\rho}{T}\ve{C}\ve{\rho}$, and $\gall_{0}=\ve{C}\ve{\rho}$.

Denote $\ve{g}(\cdot)=(g_1(\cdot), \cdots, g_{M-1}(\cdot))^T $ and $\gall=\left(\ve{g}(t_1)^T,\cdots,\ve{g}(t_n)^T \right)^T$. 
Let $P$ and $\bar{P}$ be any two measures defined on $\mathcal{G}=\{\ve{g}(\cdot): \mathcal{T}\rightarrow \Re^{M-1} \}$. Then, by the Fenchel-Legendre duality relationship, we obtain: for any functional $\phi(\cdot)$ on $\mathcal{G}$, $$\E_{\bar P}\left[\phi(\ve{g} ) \right]\leq \log \E_P\left[ \exp\left(\phi(\ve{g})\right)\right]+\text{KL}[\bar{P},P].$$ 

More specifically, 
\begin{enumerate}
	\item Set $\phi(\ve{g})=\log p\left( \ve{z} \mid  \ve{g} \right)$ for $\ve{g} \in \mathcal{G} $. 
	
	\item Let $P$ be the measure induced by GP, then the distribution for $\gall$ is $p(\gall)=N\left(\vesub{0}{n(M\text{-}1)}, \hve{C} \right)$ and $\E_P\left[ \exp\left(\phi(\ve{g})\right)\right]=\int{ p\left(\ve{z} \mid  \ve{g}\right) }p\left(\ve{g}\right) d \ve{g}=p_{gp}\left(\ve{z}\right)$, with $\hve{C}$ defined in the same way as $\ve{C}$ but with $\vesub{\theta}{g}$ being replaced by its estimator $\hvesub{\theta}{g}$.
	
	\item Referring to Section \ref{SM-sec:predict_new_z_g}, $p\left(
	\gall \mid  \ve{Z},\vesub{\theta}{g}\right)\propto \exp\left\{ -\frac{1}{2}\gall^{T} \left(\vesup{C}{-1}+ \ve{D} \right)\gall+\left(\ve{d}+\tilde{\gall}^{T}\ve{D} \right) \gall \right\}$, where \begin{align*}
		\ve{D}=-\left.\frac{\partial^2 \log p( \bs{Z}| \gall) }{\partial \gall\partial \gall^T}\right|_{\gall=\tilde{\gall}}=- \sum_{s \in \mathcal{S}}\left. \frac{\partial^2 \log p(\vesub{z}{(s)} | \gall )}{\partial\gall \partial\gall^{T} } \right|_{\gall=\tilde{\gall}}=\mathbb{S}\ve{D}(\tilde{\gall})
	\end{align*} with $\ve{D}(\tilde{\gall})=-\left. \frac{\partial^2 \log p(\vesub{z}{(s)} | \gall )}{\partial\gall \partial\gall^{T} } \right|_{\gall=\tilde{\gall}}$ concerning solely on the mode $\tilde{\gall}$. Let $\bar{P}$ be the Gaussian approximate distribution for $\gall$ is given by 
	\begin{equation}
		\begin{aligned}
			\bar{p}\left(\gall\right)
			\stackrel{\triangle}{=} & N\left( \gall_0\ , \ \left(\vesup{C}{-1}+\ve{D}\right)^{-1} \right)= N\left(\ve{C}\ve{\rho} \ , \ \left(\vesup{C}{-1}+\ve{D}\right)^{-1} \right).
		\end{aligned} \label{eq::barP-z}
	\end{equation}
\end{enumerate} 

It follows that 
\begin{align*}
	\text{KL}\left[\bar{P}, P\right]=&\int_{\mathcal{G}} \log \frac{d \bar{P}}{d P} d\bar{P}\\
	=&\int_{\Re^{n(M-1)}} \bar{p}\left(\gall\right)\log 
	\frac{\bar{p}\left(\gall\right) }{p\left(\gall\right)} d \gall\\
	=& \frac{1}{2}\Bigg\{ -\log\left|\hvesup{C}{-1}\ve{C}\right | +\log\left| \vesub{I}{n(M\text{-}1)}+\ve{DC}\right|+\tr\left[\hvesup{C}{-1}\ve{C}\left( \vesub{I}{n(M\text{-}1)}+\ve{DC} \right)^{-1} \right]\\
	&\qquad +\vesup{\rho}{T}\ve{C}\hvesup{C}{-1}\ve{C}\ve{\rho}-n(M-1) \Bigg\}.
\end{align*}

On the other hand, we have $\E_{\bar{P}}\left[\phi(\ve{g}) \right]=\E_{\bar{P}}\left[\log p\left( \ve{z} \mid \ve{g}\right) \right]=\sum_{i=1}^n\E_{\bar{P}}\left[\log p\left( {z}_i \mid  \ve{g}(t_i)\right) \right] $. Using Taylor's expansion, expanding $\log p\left( {z}_i \mid  \ve{g}(t_i)\right)$ to the second order at the true value $\vesub{g}{0}(t_i)$ yields \begin{align*}
	\log p\left( {z}_i \mid \ve{g}(t_i)\right)=&\log p\left( {z}_i \mid  \vesub{g}{0}(t_i)\right)+ \left. \frac{\partial \log p\left( {z}_i \mid \ve{g}(t_i)\right) }{ \partial \ve{g}(t_i)^T } \right|_{{\bs g}(t_i)={\bs g}_{0}(t_i)}\left(\ve{g}(t_i)-\vesub{g}{0}(t_i) \right)\\
	& \qquad + \frac{1}{2}\left(\ve{g}(t_i)-\vesub{g}{0}(t_i) \right)^T \left. \frac{\partial \log p\left( {z}_i \mid \ve{g}(t_i)\right) }{ \partial \ve{g}(t_i)\partial \ve{g}(t_i)^T } \right|_{{ \bs g}(t_i)={ \bs g}^\ast(t_i) } \left(\ve{g}(t_i)-\vesub{g}{0}(t_i) \right),
\end{align*} where ${ \bs g}^\ast(t_i)={\bs g}_{0}(t_i)+a({\bs g}(t_i)-{\bs g}_{0}(t_i) )$ for some $0\leq a\leq 1$. Hence, we have \begin{align*}
	&\E_{\bar{P}}\left[\log p\left( {z}_i \mid \ve{g}(t_i)\right) \right]\\
	=&\log p\left( {z}_i \mid  \vesub{g}{0}(t_i)\right)+ \left. \frac{\partial \log p\left( {z}_i \mid \ve{g}(t_i)\right) }{ \partial \ve{g}(t_i)^T } \right|_{{\bs g}(t_i)= {\bs g}_{0}(t_i)}\E_{\bar{P}}\left[\ve{g}(t_i)-\vesub{g}{0}(t_i)\right]\\
	&\quad +\frac{1}{2} \tr \left[ \E_{\bar{P}}\left[\left. \frac{\partial^2 \log p\left( {z}_i \mid \ve{g}(t_i)\right) }{ \partial \ve{g}(t_i)\partial \ve{g}(t_i)^T } \right|_{{\bs g}(t_i)={ \bs g}^\ast(t_i) }\left(\ve{g}(t_i)-\vesub{g}{0}(t_i) \right)^T\left(\ve{g}(t_i)-\vesub{g}{0}(t_i) \right) \right]\right]. 
\end{align*}

For 
$$\log p\left( {z}_i \mid \ve{g}(t_i)\right)=g_{z_i }(t_i )\mathbb{I}\left(z_i \neq M\right) -\log\left( \sum_{m=1}^{M-1}\exp\left( g_m(t_i)\right)+1\right), $$ 
denoting $ A=\sum_{m=1}^{M-1}\exp\left( g_m(t_i)\right)+1 $, we can derive that
\begin{align*}
&	\frac{\partial \log p\left( {z}_i \mid \ve{g}(t_i)\right) }{ \partial \ve{g}(t_i)^T }=\begin{pmatrix}
		\mathbb{I}(z_i =1)-\frac{\exp\left(g_1(t_i) \right)}{A} ,\cdots, \mathbb{I}(z_i =M-1)-\frac{\exp\left(g_{M\text{-}1}(t_i) \right)}{ A}
	\end{pmatrix}, \\
&	\frac{\partial^2 \log p\left( {z}_i \mid \ve{g}(t_i)\right) }{ \partial \ve{g}(t_i)\partial \ve{g}(t_i)^T } \\
&= \frac{1}{A^2}\left(\begin{smallmatrix}
		\exp\left( 2 g_1(t_i)\right)-A\exp\left( g_1(t_i)\right) &\exp\left( g_1(t_i)+g_2(t_i)\right) &\cdots & \exp\left( g_1(t_i)+g_{M\text{-}1}(t_i)\right) \\
		\exp\left( g_2(t_i)+g_1(t_i)\right) & 	\exp\left( 2 g_2(t_i)\right)-A\exp\left( g_2(t_i)\right) & \cdots & \exp\left( g_2(t_i)+g_{M\text{-}1}(t_i)\right) \\
		\vdots&\vdots & \ddots & \vdots \\
		\exp\left( g_{M\text{-}1}(t_i)+g_{1}(t_i)\right) & 	\exp\left( g_{M\text{-}1}(t_i)+g_{2}(t_i)\right)& \cdots & 	\exp\left( 2 g_{M\text{-}1}(t_i)\right)-A\exp\left( g_{M\text{-}1}(t_i)\right) 
	\end{smallmatrix}\right)\\
&	=  \frac{1}{A^2}\exp\left( \ve{g}(t_i)\right) \exp\left(\ve{g}(t_i)\right)^T-\frac{1}{A}\exp\left(\diag\left( \ve{g}(t_i)\right) \right). 
\end{align*} 
It is easy to demonstrate that 
\begin{equation*}
	-\vess{1}{M\text{-}1}{T} \leq \frac{\partial \log p\big( {z}_i \mid \ve{g}(t_i)\big) }{ \partial \ve{g}(t_i)^T }\leq \vess{1}{M\text{-}1}{T} , \end{equation*}
and by incorporating the lower bound for the negative second-order derivative established in Lemma \ref{postive_for_partial2} we have
\begin{equation}
	\bs{0}_{M\text{-}1}< -\frac{\partial^2 \log p\big( {z}_i \mid \ve{g}(t_i)\big) }{ \partial \ve{g}(t_i)\partial \ve{g}(t_i)^T } \leq \frac{1}{A}\exp\left(\diag\left(\ve{g}(t_i)\right)\right) \leq \vesub{I}{M-1} . \label{eq:bound_second_order_deriviative}
\end{equation}

From Equation \eqref{eq::barP-z}, we have \begin{align*}
	\bar{p}\left(\ve{g}(t_i)\right)
	\stackrel{\triangle}{=} & N\left(\vesub{g}{0}(t_i)\ , \ \left(\vesup{C}{-1}+\ve{D}\right)^{-1}_{[i,i]} \right),
\end{align*} 
where $\left(\vesup{C}{-1}+\ve{D}\right)^{-1}_{[i,i]}$ denotes the submatrix of $\left(\vesup{C}{-1}+\ve{D}\right)^{-1}$ that spans the rows and columns from $\left((i-1)(M-1)+1\right)$ to $i(M-1)$.

Utilizing the properties of moment generating functions for multivariate Gaussian distribution and the positive definiteness of the covariance matrix $\ve{C}$, we can derive that
\begin{align*}
	&\E_{\bar{P}}\left[\log p\left( {z}_i \mid  \ve{g}(t_i)\right) \right]\\
	\geq & \log p\left( {z}_i \mid  \vesub{g}{0}(t_i)\right)-\vess{1}{M\text{-}1}{T} \mid \E_{\bar{P}}\left[\ve{g}(t_i)-\vesub{g}{0}(t_i) \right] \mid  -\frac{1}{2} \tr \Bigg[ \E_{\bar{P}}\left[ \left(\ve{g}(t_i)-\vesub{g}{0}(t_i) \right)^T\left(\ve{g}(t_i)-\vesub{g}{0}(t_i) \right) \right] \Bigg] \\
	=& \log p\left( {z}_i \mid  \vesub{g}{0}(t_i)\right)-\frac{1}{2}\tr\left[ \left(\vesup{C}{-1}+\ve{D}\right)^{-1}_{[i,i]} \right] .
\end{align*}

Thus, we have \begin{align*}
	\E_{\bar{P}}\left[\log p\left( \ve{z} \mid \gall\right) \right]
	\geq \log p\left( \ve{z} \mid \gall_{0}\right) -\frac{1}{2} \tr\left[\left(\vesup{C}{-1}+\ve{D}\right)^{-1} \right].
\end{align*}

Combining these results yields
\begin{align}
	&-\log p_{gp}\left(\ve{z}\right)+\log p_0\left( \ve{z} \right) \nonumber \\
	\leq& \frac{1}{2}\Bigg\{ -\log\left|\hvesup{C}{-1}\ve{C}\right | +\log\left| \vesub{I}{n(M\text{-}1)}+\ve{DC}\right| + \vesup{\rho}{T}\ve{C}\hvesup{C}{-1}\ve{C}\ve{\rho} \nonumber \\
	&\qquad +\tr\left[\hvesup{C}{-1}\ve{C}\left( \vesub{I}{n(M\text{-}1)}+\ve{DC} \right)^{-1}+\left( \vesup{C}{-1}+\ve{D} \right)^{-1} \right]-n(M-1) \Bigg\} \nonumber \\
	\leq& 
	\frac{1}{2}\Bigg\{ -\log\left|\hvesup{C}{-1}\ve{C}\right | +\log\left| \vesub{I}{n(M\text{-}1)}+\ve{DC}\right|+\vesup{\rho}{T}\ve{C}\big(\hvesup{C}{-1}\ve{C}-\vesub{I}{n(M\text{-}1)}\big )\ve{\rho}+\|\vesub{g}{0}\|_{\bs C}^2 \nonumber \\
	&\qquad +\tr\left[\hvesup{C}{-1}\ve{C}\left( \vesub{I}{n(M\text{-}1)}+\ve{DC} \right)^{-1}+\ve{C}\left( \vesub{I}{n(M\text{-}1)}+\ve{DC} \right)^{-1} \right]-n(M-1)\Bigg\} .\label{eq:-logp_cgp+logp_0-z} 
\end{align}

Since the covariance function is bounded and continuous in $\vesub{\theta}{g}$, and $\hvesub{\theta}{g} \rightarrow \vesub{\theta}{g}$ as $n\rightarrow +\infty$, we have $\hvesup{C}{-1}\ve{C} \rightarrow \vesub{I}{n(M-1)}$ as $n\rightarrow +\infty$. Therefore, there exist some positive constants $\lambda$ and $\delta$ such that 
\begin{align*}
	&-\log\left|\hvesup{C}{-1}\ve{C}\right|<\lambda,\quad \vesup{\rho}{T}\ve{C}\big(\hvesup{C}{-1}\ve{C}-\vesub{I}{n(M\text{-}1)}\big )\ve{\rho}<\lambda,\\
	&\tr\left[\big(\hvesup{C}{-1}\ve{C}+\ve{C}\big)\big( \vesub{I}{n(M\text{-}1)}+\ve{DC} \big)^{-1} \right]< \tr\left[\big( \vesub{I}{n(M\text{-}1)}+(\delta+1)\ve{C} \big)\big( \vesub{I}{n(M\text{-}1)}+\ve{DC} \big)^{-1} \right].
\end{align*}

The submatrix of $\ve{D}(\tilde{\gall})$ that spans the rows and columns from $((i-1)(M-1)+1)$ to $i(M-1)$ is given by $\ve{D}(\tilde{\gall})_{[i,i]}=-\frac{1}{\tilde{A}^2}\exp\left({\tve{g}}(t_i)\right)\exp\left({\tve{g}}(t_i)\right)^T+\frac{1}{\tilde{A}}\exp\left(\diag\left({\tve{g}}(t_i)\right) \right)$, where $\tilde{A}=\sum_{m=1}^{M-1}\exp\left( \tilde{g}_m(t_i)\right)+1$. Referring to Equation \eqref{eq:bound_second_order_deriviative}, we can derive that $\vesub{0}{n(M-1)}<\ve{D}(\tilde{\gall})\leq \vesub{I}{n(M-1)}$. Thus there exists a constant $c$ with $0 < c<1$ such that $\ve{D}(\tilde{\gall})\geq c \vesub{I}{n(M-1)}>\vesub{0}{n(M-1)}$. Taking the proposal size $\mathbb{S}\geq \frac{\delta+1}{c}$ to be sufficiently large, it can satisfy $\vesub{0}{n(M-1)}\leq \left(\delta+1 \right)\vesub{I}{n(M-1)}\leq \ve{D}$ as $n\rightarrow +\infty$. This leads to $$\tr\left[\left( \vesub{I}{n(M\text{-}1)}+(\delta+1)\ve{C} \right)\left( \vesub{I}{n(M\text{-}1)}+\ve{DC} \right)^{-1} \right] -n(M-1) \leq 0.$$

We thus conclude that 
\begin{align*}
	-\log p_{gp}\left(\ve{z}\right)+\log p_0\left( \ve{z} \right) &\leq \frac{1}{2}\|\vesub{g}{0}\|_{\bs{C}}^2 + \frac{1}{2}\log\left| \vesub{I}{n(M\text{-}1)}+\mathbb{S} \ve{D}(\tilde{\gall})\ve{C} \right|+\lambda\\
	&\leq \frac{1}{2}\|\vesub{g}{0}\|_{\bs{C}}^2 + \frac{1}{2}\log\left| \vesub{I}{n(M\text{-}1)}+\mathbb{S}\ve{C} \right|+\lambda .
\end{align*}

Taking infimum over $\vesub{g}{0}$ and applying the representer theorem, we obtain \begin{equation}
	\frac{1}{n}\E_{\ve{t}}\left(\text{KL}\left[p_0\left(\ve{z}\right), p_{gp}\left(\ve{z}\right)\right]\right) \leq \frac{1}{2n}\|\vesub{g}{0}\|_{\bs{C}}^2+ \frac{1}{2n}\E_{\ve{t}}\left( \log\left|\vesub{I}{n(M\text{-}1)}+\mathbb{S} \ve{C} \right| \right)+\frac{\lambda}{n}, \label{eq:bound-z}
\end{equation} where $\lambda$ is a positive constant and the proposal set size $\mathbb{S}$ is a preset integer. The proof is complete.

\begin{lemma}\label{postive_for_partial2} We have $-\frac{\partial^2 \log p( {z}_i|\bs{g}(t_i)) }{ \partial \bs{g}(t_i)\partial \bs{g}(t_i)^T }>\bs{0}_{M\text{-}1}$. \begin{proof}
		We need to show that for any nonzero vector $\ve{x}=\left(x_1,\cdots, x_{M-1}\right)^T$, the following condition holds: $
		-\vesup{x}{T}\frac{\partial^2 \log p( {z}_i|\bs{g}(t_i)) }{ \partial \bs{g}(t_i)\partial \bs{g}(t_i)^T }\ve{x} >0.$ This is equivalent to demonstrating that $$\vesup{x}{T}\exp\left(\diag\left( \ve{g}(t_i)\right) \right)\ve{x}A- \vesup{x}{T}\exp\left( \ve{g}(t_i)\right) \exp\left(\ve{g}(t_i)\right)^T\ve{x} >0.$$ The left-hand side can be expressed as follows: \begin{align*}
			&\vesup{x}{T}\exp\left(\diag\left( \ve{g}(t_i)\right) \right)\ve{x}A- \vesup{x}{T}\exp\left( \ve{g}(t_i)\right) \exp\left(\ve{g}(t_i)\right)^T\ve{x}\\
			=&\left(\sum_{m=1}^{M-1}x_m^2 \exp\left(g_m(t_i)\right)\right)\left(\sum_{m=1}^{M-1}\exp\left(g_m(t_i)\right) +1 \right)-\left(\sum_{m=1}^{M-1} x_m \exp\left(g_m(t_i)\right) \right)^2.\end{align*}
		
		This can be further simplified to:
		\begin{align*}
			=& \sum_{m=1}^{M-1} x_m^2 \exp\left(2g_m(t_i)\right)+\sum_{m,h=1, h>m}^{M-1} (x_m^2+x_h^2)\exp\left(g_m(t_i)+g_h(t_i)\right) + \sum_{m=1}^{M-1}x_m^2 \exp\left(g_m(t_i)\right)\\
			& \quad - \sum_{m=1}^{M-1}x_m^2 \exp\left(2 g_m(t_i)\right)- \sum_{m,h=1, h>m}^{M-1} 2 x_m x_h\exp\left(g_m(t_i)+g_h(t_i)\right) \\
			=&\sum_{m,h=1, h>m}^{M-1} (x_m-x_h)^2\exp\left(g_m(t_i)+g_h(t_i)\right)+\sum_{m=1}^{M-1}x_m^2 \exp\left(g_m(t_i)\right) \\
			>&0.
		\end{align*} 
	\end{proof}
\end{lemma}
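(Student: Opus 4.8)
The plan is to read positive definiteness directly off the closed form of the Hessian that has already been computed. Writing $p_m=\exp(g_m(t_i))$ and $A=\sum_{m=1}^{M-1}p_m+1$, the second derivative derived in Appendix~\ref{SM-sec:predict_new_z_g} is $\frac{\partial^2\log p(z_i|\ve{g}(t_i))}{\partial\ve{g}(t_i)\partial\ve{g}(t_i)^T}=\frac{1}{A^2}\exp(\ve{g}(t_i))\exp(\ve{g}(t_i))^T-\frac{1}{A}\exp(\diag(\ve{g}(t_i)))$. Hence it suffices to fix an arbitrary nonzero $\ve{x}=(x_1,\dots,x_{M-1})^T\in\Re^{M-1}$ and show that the scalar $-\ve{x}^T\big[\partial^2\log p(z_i|\ve{g}(t_i))/\partial\ve{g}(t_i)\partial\ve{g}(t_i)^T\big]\ve{x}$ is strictly positive. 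This is the whole content of the lemma, reduced to an elementary inequality.

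Next I would clear denominators: since $A^2>0$, the sign of the quadratic form is the sign of $A\big(\sum_m x_m^2 p_m\big)-\big(\sum_m x_m p_m\big)^2$. Substituting $A=\sum_m p_m+1$ splits this into $\big(\sum_m p_m\big)\big(\sum_m x_m^2 p_m\big)-\big(\sum_m x_m p_m\big)^2$, which is a weighted Cauchy–Schwarz discrepancy with nonnegative weights $p_m$, plus a leftover term $\sum_m x_m^2 p_m$. The discrepancy rearranges, by symmetrizing the double sum, into the manifestly nonnegative form $\sum_{h>m}(x_m-x_h)^2\, p_m p_h=\sum_{h>m}(x_m-x_h)^2\exp(g_m(t_i)+g_h(t_i))$, so the total equals $\sum_{h>m}(x_m-x_h)^2\exp(g_m(t_i)+g_h(t_i))+\sum_m x_m^2\exp(g_m(t_i))$. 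Each summand is nonnegative and the second sum is strictly positive because every $p_m>0$ and $\ve{x}\neq\bs{0}$, giving the claim.

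The only genuinely delicate point, and the one I would stress, is the \emph{strictness}. In its full over-parametrized form the softmax Hessian is merely negative semidefinite, with the all-ones direction in its kernel because the category probabilities sum to one; a bare Cauchy–Schwarz bound would therefore only deliver $\succeq\bs{0}$ and would fail to exclude $\ve{x}\propto\bs{1}$. What rescues strictness here is that category $M$ has been used as a reference and its coordinate dropped, so that $A$ carries the additive constant $1$; that constant is exactly what produces the extra term $\sum_m x_m^2\exp(g_m(t_i))$ and breaks the degeneracy. I would consequently keep this constant-fed term visible in the decomposition rather than lean on the Cauchy–Schwarz step alone. No further machinery is needed: the argument is a finite algebraic rearrangement, and the companion upper bound used in \eqref{eq:bound_second_order_deriviative} then follows immediately, since $-\partial^2\log p/\partial\ve{g}(t_i)\partial\ve{g}(t_i)^T\preceq\frac{1}{A}\exp(\diag(\ve{g}(t_i)))\preceq\vesub{I}{M-1}$ because each diagonal entry $p_m/A<1$.
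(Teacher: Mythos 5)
Your proposal is correct and follows essentially the same route as the paper: both reduce the claim to the quadratic form $A\sum_m x_m^2 e^{g_m}-(\sum_m x_m e^{g_m})^2$ and rearrange it into $\sum_{h>m}(x_m-x_h)^2 e^{g_m+g_h}+\sum_m x_m^2 e^{g_m}>0$. Your added observation that the strict positivity comes precisely from the $+1$ in $A$ (i.e., from dropping the reference category $M$), without which the softmax Hessian would only be semidefinite, is a correct and worthwhile clarification, but the underlying algebra is identical to the paper's.
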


\section{EnMCMC Algorithm and Simulation Comparisons}\label{SM-sec:EnMCMC}
The ensemble MCMC (EnMCMC) algorithm is a method related to the multiple try Metropolis \citep{liu2000multiple} to provide a better exploration of the sample space. The details of the algorithm are given below.

\begin{algorithm}
	\label{alg:EnMCMCg} \caption{ EnMCMC sampling. }
	\begin{algorithmic}[1]
	\State Initialize $\vesub{z}{(0)}=\vesup{z}{(r-1)}$
	\State {\bf For} $s=1,\cdots,S$ \begin{itemize}
		\item[]	Draw $V$ candidates $\left\{\vess{z}{(v)}{\ast},\ v=1,\cdots, V \right\}$ from $\mathcal{Q}(\ve{z})$. Include $\vesub{z}{(s-1)}$ as one candidate and set $\vess{z}{(V+1)}{\ast}=\vesub{z}{(s-1)}$.
		\item[] Set $\vesub{z}{(s)}=\vess{z}{(v^\ast)}{\ast}$ where $v^\ast \in \left\{ 1,\cdots, V+1\right\}$ and is drawn according to the normalized weight $\mathcal{A}\left(\vesub{z}{(s-1)},\vess{z}{(v^\ast)}{\ast}\right)=\frac{\mathcal{P}\big(\vess{z}{(v^\ast)}{\ast} \big)\big/\mathcal{Q}\big(\vess{z}{(v^\ast)}{\ast}\big)}{\sum_{v=1}^{V+1}\mathcal{P}\big(\vess{z}{(v)}{\ast} \big)\big/\mathcal{Q}\big(\vess{z}{(v)}{\ast} \big) }$. 
	\end{itemize}
	\end{algorithmic}
\end{algorithm}

\medskip
The simulation settings are consistent with those utilized in Scenario 1. We evaluate the method's performance over three sample sizes ($n=30$, $60$ and $150$) to assess the effectiveness of the EnMCMC algorithm. Our findings indicate that the EnMCMC algorithm performs comparably to the independent MH algorithm, with varying choices of the proposal set sizes $\mathbb{S}$ (either $\mathbb{S}=10$ or $20$) and ensemble sizes $V$. However, it is worth noting that the EnMCMC algorithm incurs a higher computational cost, which increases with larger values of both $V$ and $n$. When the sample size is relatively small ($n=30$), using a larger ensemble size (for instance, $V = 5$) accelerates the convergence of $\gall$, resulting in a shorter computation time, particularly compared to a smaller ensemble size $V = 3$. Nonetheless, this speedup in convergence becomes less pronounced as the sample size increases. Simulation results based on fifty replications are summarized in Tables \ref{table:Scenario1-EnMCMC-fitting} and \ref{table:Scenario1-EnMCMC-classification}.

\begin{table}[!htbp]
	\caption{Comparative analysis of computational efficiency between the proposed FRVS model and CBSS method employing the independent MH and EnMCMC algorithms for Scenario 1. Average computing time (in minutes) is reported based on fifty replications, with varying sample sizes ($n=30, 60, 150$), ensemble sizes ($V=3, 5$), and proposal set sizes ($\mathbb{S}=10, 20$).}
	\label{table:Scenario1-EnMCMC-fitting}
	\centering 
	\begin{tabular}{lcccccc}
		\toprule
		& \multicolumn{2}{c}{$n=30$} &\multicolumn{2}{c}{$n=60$} &\multicolumn{2}{c}{$n=150$}\\ \cmidrule(r){2-3}\cmidrule(rl){4-5}\cmidrule(l){6-7}
		&$\mathbb{S}=10$ &$\mathbb{S}=20$ & $\mathbb{S}=10$ & $\mathbb{S}=20$ &$\mathbb{S}=10$ & $\mathbb{S}=20$\\ \midrule
		Independent MH &$0.1406$ & $0.1992$ & $0.3602$ & $0.4690$ & $2.4981$ & $3.5992 $ \\
		EnMCMC ($V=3$) &$0.1790$ & $0.2255$ & $0.4393$ & $0.5708 $ & $3.9076$ & $5.8163$ \\
		EnMCMC ($V=5$) &$0.1761$ & $0.2389$ & $0.4480$ & $0.5747$ & $5.1992$ & $8.1965$ \\
		\bottomrule
	\end{tabular}
\end{table}

\begin{table}[!htbp]
	\caption{Performance evaluation of state sequence estimation using the proposed FVRS model and CBSS method employing the EnMCMC algorithm for Scenario 1. Average classification performance metrics are presented based on fifty replications, with varying sample sizes ($n=30, 60,150$), ensemble sizes ($V=3, 5$), and proposal set sizes ($\mathbb{S}=10, 20$). }
	\label{table:Scenario1-EnMCMC-classification}
	\centering 
	\begin{tabular}{lcccccc}
		\toprule
		& \multicolumn{2}{c}{$n=30$} &\multicolumn{2}{c}{$n=60$} &\multicolumn{2}{c}{$n=150$}\\ 
        \cmidrule(r){2-3}\cmidrule(rl){4-5}\cmidrule(l){6-7}
		$(\mathbb{S}, V)$ &$(10,3)$ & $(20,3)$ &$(10,3)$ & $(20,3)$ &$(10,3)$ & $(20,3)$ \\ 
        \hline
		Accuracy & $0.9447$ & $0.9440$ & $0.9740$ & $0.9740$ & $0.9901$ & $0.9900$ \\
		Kappa & $0.8858$ & $0.8849$ & $0.9460$ & $0.9460$ & $0.9795$ & $0.9792$ \\
		Precision & $0.9711$ & $0.9746$ & $0.9845$ & $0.9845$ & $0.9947$ & $0.9947$ \\
		Specificity & $0.9533$ & $0.9600$ & $0.9758$ & $0.9758$ & $0.9920$ & $0.9920$ \\
		F1 & $0.9528$ & $0.9519$ & $0.9781$ & $0.9782$ & $0.9917$ & $0.9916$ \\ 
        \cmidrule(r){2-3}\cmidrule(rl){4-5}\cmidrule(l){6-7}
		$(\mathbb{S}, V)$ &$(10,5)$ & $(20,5)$ &$(10,5)$ & $(20,5)$ &$(10,5)$ & $(20,5)$ \\ 
        \midrule
		Accuracy & $0.9493$ & $0.9487$ & $0.9727$ & $0.9743$ & $0.9903$ & $0.9903$ \\
		Kappa & $0.8950$ & $0.8939$ & $0.9430$ & $0.9465$ & $0.9797$ & $0.9798$ \\
		Precision & $0.9642$ & $0.9660$ & $0.9765$ & $0.9802$ & $0.9932$ & $0.9943$ \\
		Specificity & $0.9433$ & $0.9467$ & $0.9633$ & $0.9692$ & $0.9897$ & $0.9913$ \\
		F1 & $0.9572$ & $0.9564$ & $0.9772$ & $0.9786$ & $0.9919$ & $0.9919$ \\
		\bottomrule
	\end{tabular}
\end{table}

\section{Additional Simulation Results}\label{SM-sec:Additional-simus}

\subsection{{Larger Noise Variance}}
\label{SM-sec:Scenario-largersigma20.02}

In this scenario, the simulation setup is similar to Scenarios 1 and 2, but with a larger noise variance $\sigma^2=0.02$. Table \ref{Appen:tab-largersigma20.02} provides a comparative analysis of state sequence estimation performance, goodness-of-fit measures and computational cost between BinSeg, NGS and the proposed CBSS methods. The results underscore that the CBSS method still performs well under this increased noise level, although some deterioration is observed, as expected.

\begin{table}[htbp]
	\caption{Comparative analysis of state sequence estimation performance, goodness-of-fit measures and computational cost between BinSeg, CBSS and NGS methods for Scenario 1 and 2 with increased $\sigma^2=0.02$. Average metrics are reported based on fifty replications, with varying sample sizes ($n=30, 60$) and proposal set sizes ($\mathbb{S}=10, 20$).}
    \label{Appen:tab-largersigma20.02}
	\centering 
    \footnotesize
	\begin{tabular}{llcccccccc}
		\toprule
		& &\multicolumn{4}{c}{$n=30$} &\multicolumn{4}{c}{$n=60$}\\ 
        \cmidrule(r){3-6}\cmidrule(l){7-10}
		& & \multirow{2}{*}{BinSeg} &\multirow{2}{*}{NGS} &CBSS  & CBSS & \multirow{2}{*}{BinSeg}  & \multirow{2}{*}{NGS} &CBSS  & CBSS  \\    
        & & & &($\mathbb{S}=10$) &  ($\mathbb{S}=20$) & & & ($\mathbb{S}=10$) & ($\mathbb{S}=20$) \\   
        \midrule
        \multicolumn{2}{l}{Scenario 1} & \\ \cmidrule{1-2}
		&\multicolumn{9}{l}{\textit{State sequence estimation performance}} \\
		&Accuracy & $0.7707$ & $0.6873$ & $0.9393$ & $0.9393$ & $0.7733$ & $0.7558$ & $0.9720$ & $0.9723$ \\
		&Kappa & $0.4864$ & $0.3423$ & $0.8735$ & $0.8740$ &$0.5016$ & $0.4319$ & $0.9418$ & $0.9425$ \\
		&Precision & $0.8014$ & $0.8199$ & $0.9619$ & $0.9588$ & $0.8276$ & $0.7797$ & $0.9811$ & $0.9806$ \\
		&Specificity & $0.6000$ & $0.6300$ & $0.9367$ & $0.9333$ & $0.6558$ & $0.4949$ & $0.9708$ & $0.9700$ \\
	   &F1 & $0.8270$ & $0.7110$ & $0.9492$ & $0.9489$ & $0.8212$ &$0.8147$ & $0.9765$ & $0.9768$ \\ 
	&	\multicolumn{9}{l}{\textit{Goodness-of-fit}} \\
	&	Rmse & - & $0.1325$ & $0.0797$ & $0.0790$ & -& $0.1818$ &$0.0894$ & $0.0908$ \\
	&	Mae  & - & $0.0895$ & $0.0615$ & $0.0601$ & - & $0.1193$ &$0.0676$ & $0.0688$ \\ 
    &	\multicolumn{9}{l}{\textit{Computational cost (min)}} \\
	&	& $7.6$e-6 & $0.1030$ & $0.1480$ & $0.1977$ &$7.9$e-6 & $0.2706$ &$0.3549$ & $0.4555$ \\
    \cmidrule{1-10}
		\multicolumn{2}{l}{Scenario 2} & \\ \cmidrule{1-2}
	&	\multicolumn{9}{l}{\textit{State sequence estimation performance}} \\
	&Accuracy &$0.6173$ & $0.6447$ & $0.9233$ & $0.9313$ & $0.5997$ & $0.7323$ & $0.9560$ & $0.9740$ \\
	&Kappa & $0.1369$ & $0.2656$ & $0.8473$ & $0.8592$ & $0.1039$ & $0.4142$ & $0.9084$ & $0.9461$ \\
	&Precision & $0.7096$ & $0.8361$ & $0.9733$ & $0.9723$ & $0.7123$& $0.8247$ & $0.9734$ & $0.9838$ \\
	&Specificity& $0.3833$ & $0.6300$ & $0.9567$ & $0.9550$ & $0.3833$& $0.6050$ & $0.9525$ & $0.9750$ \\
	&F1 &$0.7103$ & $0.6581$ & $0.9428$ & $0.9405$ & $0.6883$ &$0.7862$ & $0.9612$ & $0.9781$ \\
	&\multicolumn{9}{l}{\textit{Goodness-of-fit}} \\
	&Rmse & - &$0.5736$ &$0.4245$ & $0.4233$  & -& $0.4008$ & $0.2790$ & $0.2728$ \\
	&Mae & - &$0.4414$ &$0.3432$ & $0.3425$  & -& $0.2818$ & $0.2224$ & $0.2208$ \\
        &	\multicolumn{9}{l}{\textit{Computational cost (min)}} \\
	&	& $7.8$e-6 & $0.1028$ & $0.1467$ & $0.2176$ & $8.3$e-6 & $0.2755$ &$0.3888$ & $0.4660$ \\
    \bottomrule
	\end{tabular}
\end{table}

\subsection{Multiple State Transitions}\label{SM-sec:Scenario1-J4}
The simulation for this scenario retains the setting of Scenario 1 but modifies the sample size and hidden state sequence configuration. Specifically, the hidden state is defined as follows: $z(t)=1$ for $t \in [0,0.3) \cup [0.4,0.7)\cup[0.8,1)$, and $z(t)=2$ for $t \in [0.3,0.4)\cup[0.7,0.8)$, resulting in four state transitions. The performance of the method is assessed across two different sample sizes: $n=60$ and $n=150$. Table \ref{table:Scenario1-J4-sim-classification} presents the classification performance metrics for state sequence estimation and goodness-of-fit, comparing the IMH algorithm implemented in our CBSS method against the NGS approach based on fifty replications. CBSS method with $\mathbb{S}=20$ achieves the highest accuracy, surpassing the accuracy of the NGS. The other metrics also reflect enhancements attributable to the independent MH algorithm, and the performance disparity further increases for $n=150$.

\begin{table}[htbp]\small
	\caption{Comparative analysis of state sequence estimation performance and goodness-of-fit between CBSS and NGS methods for Scenario 1 with four state transitions. Results show average classification performance metrics based on fifty replications for varying sample sizes ($n=60,150$) and proposal set sizes ($\mathbb{S}=10,20$).}
	\label{table:Scenario1-J4-sim-classification}
	\centering 
    \footnotesize
	\begin{tabular}{lcccccc}
		\toprule
		& \multicolumn{3}{c}{$n=60$} &\multicolumn{3}{c}{$n=150$}\\ \cmidrule(r){2-4}\cmidrule(l){5-7}
		& NGS &CBSS ($\mathbb{S}=10$) & CBSS ($\mathbb{S}=20$) & NGS &CBSS ($\mathbb{S}=10$) & CBSS ($\mathbb{S}=20$) \\ \midrule
		Accuracy & $0.8743$ & $0.9423$ & $0.9433$ & $0.9117$ & $0.9757$ & $0.9771$ \\
		Kappa & $0.5789$ & $0.8230$ & $0.8269$ & $0.7056$ & $0.9238$ & $0.9277$ \\
		Precision & $0.9168$ & $0.9719$ & $0.9721$ & $0.9350$ & $0.9852$ & $0.9851$ \\
		Specificity & $0.6433$ & $0.8850$ & $0.8867$ & $0.7220$ & $0.9400$ & $0.9393$ \\
		F1 & $0.9218$ & $0.9636$ & $0.9642$ & $0.9454$ & $0.9848$ & $0.9857$ \\
		\bottomrule
        		\multicolumn{7}{l}{\textit{ {Goodness-of-fit}}} \\
		Rmse  & $0.1259$ & $0.0831$ & $0.0786$ & $0.1674$ &$0.1461$ & $0.1475$ \\
	Mae  & $0.0844$ & $0.0634$ & $0.0559$ & $0.1130$ &$0.1019$ & $0.1005$ \\ 
	\end{tabular}
\end{table}

\subsection{{Three-State Modeling Misspecification}}
\label{SM-sec:Scenario-misspec}

In this subsection, we study the robustness of the proposed method under state-space misspecification. Data are generated from a three-state latent process, whereas estimation is carried out using a two-state FRVS model and the corresponding CBSS algorithm. The purpose is not to recover the full three-state structure exactly, but to examine whether the fitted two-state model can still detect the dominant regime changes and approximate the main transition pattern when the true latent dynamics are more complex.

Specifically, the true hidden state follows the transition pattern $
1 \rightarrow 2 \rightarrow 3 \rightarrow 2 \rightarrow 1.$ The hidden state process is defined as $$
z(t)=
\begin{cases}
1, & t \in [0,0.2)\cup[0.8,1),\\
2, & t \in [0.2,0.4)\cup[0.7,0.8),\\
3, & t \in [0.4,0.7).
\end{cases}
$$
The time points $\{t_i\}_{i=1}^n$ are equally spaced on $(0,1)$.

The covariance hyper-parameters for states 1 and 2 are taken to be the same as those in Scenario 1: $
\{\nu_{10}=0.1,\ \nu_{11}=0.1,\ \vesub{A}{10}=1,\vesub{A}{11}=0.1\}, $ and $
\{\nu_{20}=0.1,\ \nu_{21}=0.5,\ A_{20}=1,\ A_{21}=1\}. $
For the additional third state, we set
$
\{\nu_{30}=0.1,\ \nu_{31}=0.1,\ A_{30}=1,\ A_{31}=1\},$ which has the same fluctuation magnitude as state 1 but with different within-state correlation structure.  The observations are then generated in the same way as in Scenario 1, with independent noise terms $
\epsilon_i \sim N(0, 0.01). $

Although the true data-generating process contains three hidden states, estimation is still carried out using a two-state FRVS model and the corresponding
CBSS method. Since the fitted model contains only two latent states, exact recovery of the full three-state path is not a meaningful target. Instead, for state-sequence evaluation we collapse the true three-state path into two broader regimes by treating
state 3 as equivalent to state 1. That is, the true path
$ (1,2,3,2,1) $ is relabeled as $(1,2,1,2,1),$ and the classification metrics in Table \ref{Appen:table:3states} are computed
with respect to this coarsened two-state criterion. This choice is natural because the fitted model is designed to distinguish two dominant regimes, and in the present data-generating setting state 3 is closer to state 1 than to state 2 in terms of fluctuation magnitude.

\begin{table}[htbp]
	\caption{Comparative analysis of state sequence estimation performance, and goodness-of-fit measures between BinSeg, CBSS and NGS methods for Scenario 1 and 2 under model misspecification. Average metrics are reported based on fifty replications, with varying sample sizes ($n=30, 60$) and proposal set sizes ($\mathbb{S}=10, 20$).}
	\label{Appen:table:3states}
	\centering 
    \footnotesize
	\begin{tabular}{lcccccccc}
		\toprule
		& \multicolumn{4}{c}{$n=60$} &\multicolumn{4}{c}{$n=150$}\\ 
        \cmidrule(r){2-5}\cmidrule(l){6-9}
		& \multirow{2}{*}{BinSeg} &\multirow{2}{*}{NGS} &CBSS  & CBSS & \multirow{2}{*}{BinSeg}  & \multirow{2}{*}{NGS} &CBSS  & CBSS  \\    
        & & &($\mathbb{S}=10$) &  ($\mathbb{S}=20$) & & & ($\mathbb{S}=10$) & ($\mathbb{S}=20$) \\   
        \midrule
		\multicolumn{9}{l}{\textit{State sequence estimation performance}} \\
		Accuracy & $0.6633$ & $0.7880$ & $0.9430$ & $0.9473$ & $0.7116$ & $0.8315$ & $0.9581$ & $0.9747$ \\
		Kappa & $0.1763$ & $0.1285$ & $0.8214$ & $0.8336$ &$0.3291$ & $0.1930$ & $0.8630$ & $0.9193$ \\
		Precision & $0.7819$ & $0.8295$ & $0.9658$ & $0.9683$ & $0.8187$ & $0.8337$ & $0.9636$ & $0.9808$ \\
		Specificity & $0.4317$ & $0.1900$ & $0.8600$ & $0.8667$ & $0.5613$ & $0.1820$ & $0.8487$ & $0.9213$ \\
	   F1 & $0.7543$ & $0.8705$ & $0.9643$ & $0.9670$ & $0.7836$ &$0.9053$ & $0.9742$ & $0.9843$ \\ 
		\multicolumn{9}{l}{\textit{Goodness-of-fit}} \\
		Rmse & - & $0.1165$ & $0.0831$ & $0.0786$ & -& $0.1674$ &$0.1461$ & $0.1475$ \\
	Mae  & - & $0.0795$ & $0.0634$ & $0.0559$ & - & $0.1130$ &$0.1019$ & $0.1005$ \\ 
    \bottomrule
	\end{tabular}
\end{table}

Under this coarsened evaluation criterion, the proposed method exhibits robustness to moderate misspecification of the number of hidden states. The resulting classification performance is comparable to that reported in Appendix \ref{SM-sec:Scenario1-J4}, where the underlying latent process likewise involves multiple transitions but remains compatible with a fitted two-state specification. In contrast, the goodness-of-fit errors do not diminish as the sample size increases. Since the fitted model is structurally misspecified, additional observations can enhance the identification of the predominant regime changes, but they cannot remove the systematic bias induced by the incorrectly specified dimensionality of the state space.

Moreover, the estimation of a genuinely three-state latent path is substantially more challenging in practical applications. Under realistic conditions, the underlying state process may exhibit multiple short-lived transitions, heterogeneous dwell times, and recurrent visits to covariance regimes with similar characteristics, rendering the latent path considerably more complex than the stylized configuration examined here. Determining whether an intermediate segment should be identified as a distinct third state, as opposed to a manifestation of variability within an existing state, generally necessitates additional model
structure and more sophisticated inferential techniques. The development of robust and accurate methods for estimating such multi-state paths is an important direction for future research.

\subsection{Increased Sample Size}\label{SM-sec:Scenario2-n150}

In this scenario, the simulation setup is similar to Scenario 2, focusing on an increased sample size to $n = 150$. Table \ref{table:Scenario2-Appen-sim} provides a comparative analysis of the performance metrics for the proposed FRVS model and CBSS method using the independent MH algorithm against the NGS approach. The findings underscore that CBSS method, particularly with $\mathbb{S} = 20$, yields superior performance across various criteria, demonstrating enhanced accuracy in both state sequence classification and observation fitting.

\begin{table}[htbp]
	\caption{Comparative analysis of state sequence estimation performance and goodness-of-fit measures between CBSS and NGS methods for Scenario 2 with sample size $n=150$. Results show average classification performance metrics based on fifty replications, with varying proposal set sizes ($\mathbb{S}=10,20$) }
	\label{table:Scenario2-Appen-sim}
	\centering 
	\begin{tabular}{lccc}
		\toprule
		& NGS &CBSS ($\mathbb{S}=10$) & CBSS ($\mathbb{S}=20$) \\ \midrule
		\multicolumn{4}{l}{\textit{State sequence estimation}} \\
		Accuracy & $0.9232$ & $0.9895$ & $0.9892$ \\
		Kappa & $0.8431$ & $0.9781$ & $0.9776$ \\
		Precision & $0.9644$ & $0.9943$ & $0.9943$ \\
		Specificity & $0.9423$ & $0.9913$ & $0.9913$ \\
		F1 & $0.9327$ & $0.9912$ & $0.9910$ \\
		\multicolumn{4}{l}{\textit{Goodness-of-fit}} \\
		Rmse &$0.0940$ &$0.0987$ & $0.0805$ \\
		Mae &$0.0617$ &$0.0601$ & $0.0564$ \\
		\bottomrule
	\end{tabular}
\end{table}

\section{Multi-Output Modeling}\label{SM-sec:CP-MO}
Here, we detail the multi-output modeling where the response variable $\ve{y}(t)$ is $P$-dimensional function, such as $\ve{y}(t)=\left(y_1(t), \cdots, y_P(t)\right)^T$, then the model in Equation (1) defined in Section 2.1 is extended to
\begin{align*}
	y_p(t)&=\mu_{z(t) p}(\ve{x}(t))+f_{z(t)p}(\ve{x}(t))+\epsilon_p(t),\ p=1,\cdots, P,\\
	f_{z(t)p}(\ve{x}(t))&=f_{mp}(\ve{x}(t)), \text{ when } z(t)=m,\ m \in \{1,\cdots, M\},\\
	\epsilon_p(t)&\stackrel{i.i.d.}{\sim} N(0\, ,\,\sigma_p^2). 
\end{align*}

In the context of multi-output GP, the underlying functions $\vesub{f}{m}(\ve{x}(t))=\\ \left(f_{m1}(\ve{x}(t)),\cdots,f_{mP}(\ve{x}(t)) \right)^T\in \Re^P$, for $m=1,\cdots, M$. Careful selection of a multi-output GP prior assumption is necessary to accurately characterize the relationships across the outputs. As 
\begin{align*}
	\Cov\left(y_p(t_i),y_{p^\ast}(t_j)\right)=& \Cov\left(f_{z(t_i)p}(\ve{x}(t_i)),f_{z(t_j)p^\ast}(\ve{x}(t_j)) \right) \\
    & +\sigma_p^2\mathbb{I}(i=j,p=p^\ast),\ \ p,\, p^\ast \in \{1,\cdots, {P}\},
\end{align*} 
denoting the covariates $\vesub{x}{i}=\ve{x}(t_i)$ and $\vesub{x}{j}=\ve{x}(t_j)$, the covariance and cross-covariance among these functions are defined by 
$$	\Cov \left(f_{mp}(\vesub{x}{i}), f_{m^\ast p^\ast }(\vesub{x}{j} )\right)=k_{mm^\ast\, p p^\ast}(\vesub{x}{i},\vesub{x}{j}), \, \text{for}\ m,\, m^\ast \in \{1,\cdots, M\}, \ p,\, p^\ast \in \{1,\cdots, {P}\}. $$ 

The convolution process serves as an effective approach for constructing multi-output GP, as it guarantees that the resulting covariance matrix is at least non-negative definite. To implement this process, we construct three types of GPs using the convolution operator, denoted by $``\star"$, paired with specific kernel functions ${\kappa}(\cdot)$ and independent Gaussian white noise processes $\varepsilon(\cdot)$. These are defined as follows:
\begin{equation*}
	\begin{aligned}
		\xi_{mp}(\cdot)&={\kappa}_{mp0}(\cdot)\star \varepsilon_0(\cdot),\\
		\zeta_{mp}(\cdot)&={\kappa}_{mp1}(\cdot)\star \varepsilon_{m0}(\cdot),\\
		\eta_{mp}(\cdot)&={\kappa}_{mp2}(\cdot) \star \varepsilon_{mp}(\cdot),\ \text{for } m \in\{1,\cdots, M\},\, p \in \{1,\cdots, {P}\}. 
	\end{aligned} \label{eq:conv-2parts}
\end{equation*} Here, the process $\xi_{mp}(\cdot)$ captures the common components shared among all $MP$ curves, $\zeta_{mp}(\cdot)$ characterizes attributes of curves observed in different states, and $\eta_{mp}(\cdot)$ delineates the unique properties associated with each curve. 

Furthermore, we define $f_{mp}(\ve{x}(t))=\xi_{mp}(\ve{x}(t))+	\zeta_{mp}(\ve{x}(t))+\eta_{mp}(\ve{x}(t))$ and specify the smoothing kernel functions with parameters $\vesub{\theta}{f}=\{ \upsilon_{mp0},\upsilon_{mp1},\upsilon_{mp2}, \vesub{A}{mp0}, \vesub{A}{mp1},\vesub{A}{mp2},m=1,\cdots, M,\ p= 1,\cdots, {P}\}$ as follows: 
\begin{equation*}
	\begin{aligned}\begin{cases}
			{\kappa}_{mp 0}(\ve{x}(t))=\upsilon_{mp0}\exp\left\{ -\frac{1}{2}\ve{x}(t)^{T} \vesub{A}{mp0}\ve{x}(t)\right\}\\
			{\kappa}_{mp 1}(\ve{x}(t))=\upsilon_{mp1}\exp\left\{ -\frac{1}{2}\ve{x}(t)^{T} \vesub{A}{mp1}\ve{x}(t)\right\}\\
			{\kappa}_{mp 2}(\ve{x}(t))=\upsilon_{mp2}\exp\left\{ -\frac{1}{2} \ve{x}(t)^{T} \vesub{A}{mp2}\ve{x}(t) \right\}
		\end{cases} 
		,\quad m=1,\cdots, M,\ p= 1,\cdots, P.
	\end{aligned}
\end{equation*}
The combined vector for all outputs is represented as $\ve{\mathrm{f}}(\ve{x}(t))=\left(\vesub{f}{1}(\ve{x}(t))^T,\cdots,\vesub{f}{M}(\ve{x}(t))^T \right)^T$, which defines a dependent $MP$-variate GP with zero means and the kernel covariance functions given by 

$	k_{mm^\ast\, p p^\ast }(\vesub{x}{i},\vesub{x}{j})=
$
\begin{equation*}
	\left\{
	\begin{aligned} &\pi^{d/2}
		\upsilon_{mp0}^2\left|\vesub{A}{mp0} \right|^{-1/2} \exp\left\{ -\frac{1}{4}(\vesub{x}{i}-\vesub{x}{j})^T\vesub{A}{mp0}(\vesub{x}{i}-\vesub{x}{j}) \right\}\\
		& \quad + 
		\pi^{d/2}\upsilon_{mp1}^2\left|\vesub{A}{mp1} \right|^{-1/2} \exp\left\{ -\frac{1}{4}(\vesub{x}{i}-\vesub{x}{j})^T\vesub{A}{mp1}(\vesub{x}{i}-\vesub{x}{j}) \right\}\\
		& \qquad + 
		\pi^{d/2}\upsilon_{mp2}^2\left|\vesub{A}{mp2} \right|^{-1/2} \exp\left\{ -\frac{1}{4}(\vesub{x}{i}-\vesub{x}{j})^T\vesub{A}{mp2}(\vesub{x}{i}-\vesub{x}{j}) \right\}, \ \text{if}\, m^\ast=m\, \text{and}\ p^\ast=p, \\
		&(2\pi)^{d/2}
		\upsilon_{mp0}\upsilon_{m p^\ast 0}\left|\vesub{A}{mp0}+\vesub{A}{m p^\ast0} \right|^{-1/2} \\
        & \quad \exp\bigg\{ -\frac{1}{2}(\vesub{x}{i}-\vesub{x}{j})^T \vesub{A}{mp0}\left(\vesub{A}{mp0}+\vesub{A}{m p^\ast0}\right)^{-1} \vesub{A}{m p^\ast0}(\vesub{x}{i}-\vesub{x}{j}) \bigg\}\\
		&\quad+ (2\pi)^{d/2}
		\upsilon_{mp1}\upsilon_{m p^\ast 1}\left|\vesub{A}{mp1}+\vesub{A}{m p^\ast1} \right|^{-1/2} \exp\bigg\{ -\frac{1}{2}(\vesub{x}{i}-\vesub{x}{j})^T \vesub{A}{mp1}\left(\vesub{A}{mp1}+\vesub{A}{m p^\ast1}\right)^{-1}\\
		&\qquad\qquad\qquad\qquad\qquad\qquad\qquad\qquad\qquad\qquad \vesub{A}{m p^\ast1}(\vesub{x}{i}-\vesub{x}{j}) \bigg\} ,\ \text{if}\ m^\ast=m\, \text{and}\ p^\ast\neq p,\\
		&(2\pi)^{d/2}
		\upsilon_{mp0}\upsilon_{m^\ast p^\ast 0}\left|\vesub{A}{mp0}+\vesub{A}{m^\ast p^\ast0} \right|^{-1/2} \exp\bigg\{ -\frac{1}{2}(\vesub{x}{i}-\vesub{x}{j})^T\vesub{A}{mp0}\left(\vesub{A}{mp0}+\vesub{A}{m^\ast p^\ast0}\right)^{-1} \\
		&\qquad\qquad\qquad\qquad\qquad\qquad\qquad\qquad\qquad\qquad \vesub{A}{m^\ast p^\ast0}(\vesub{x}{i}-\vesub{x}{j}) \bigg\},\ \text{if}\ m^\ast\neq m.
	\end{aligned} \right.
	\label{eq:MO-cov-cross}
\end{equation*}

\subsection{Applications}\label{SM-sec:MOAP}
For application purposes, we analyze the complete tri-axial acceleration data in the jumping exercise and FoG datasets. These datasets correspond to accelerations along three axes: vertical, mediolateral, and anteroposterior directions, as visualized in Figures \ref{fig:MO-jump-real} and \ref{fig:MO-fog-real}., respectively.
\begin{figure}
	\centering
	\subfloat[Jumping exercise.]{	
		\includegraphics[width=1\linewidth]{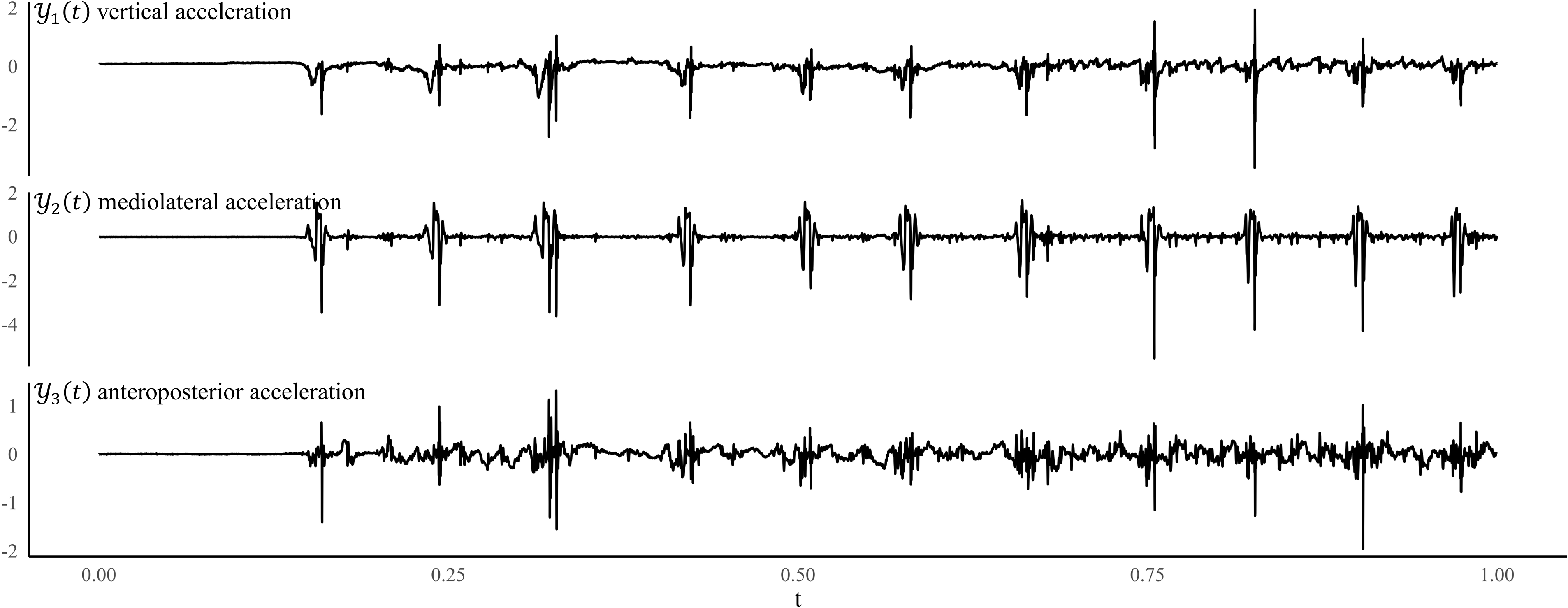}
		\label{fig:MO-jump-real}
	}\\
	\subfloat[FoG, with labeled onset and termination timestamps for the FoG episode.]{
	\includegraphics[width=1\linewidth]{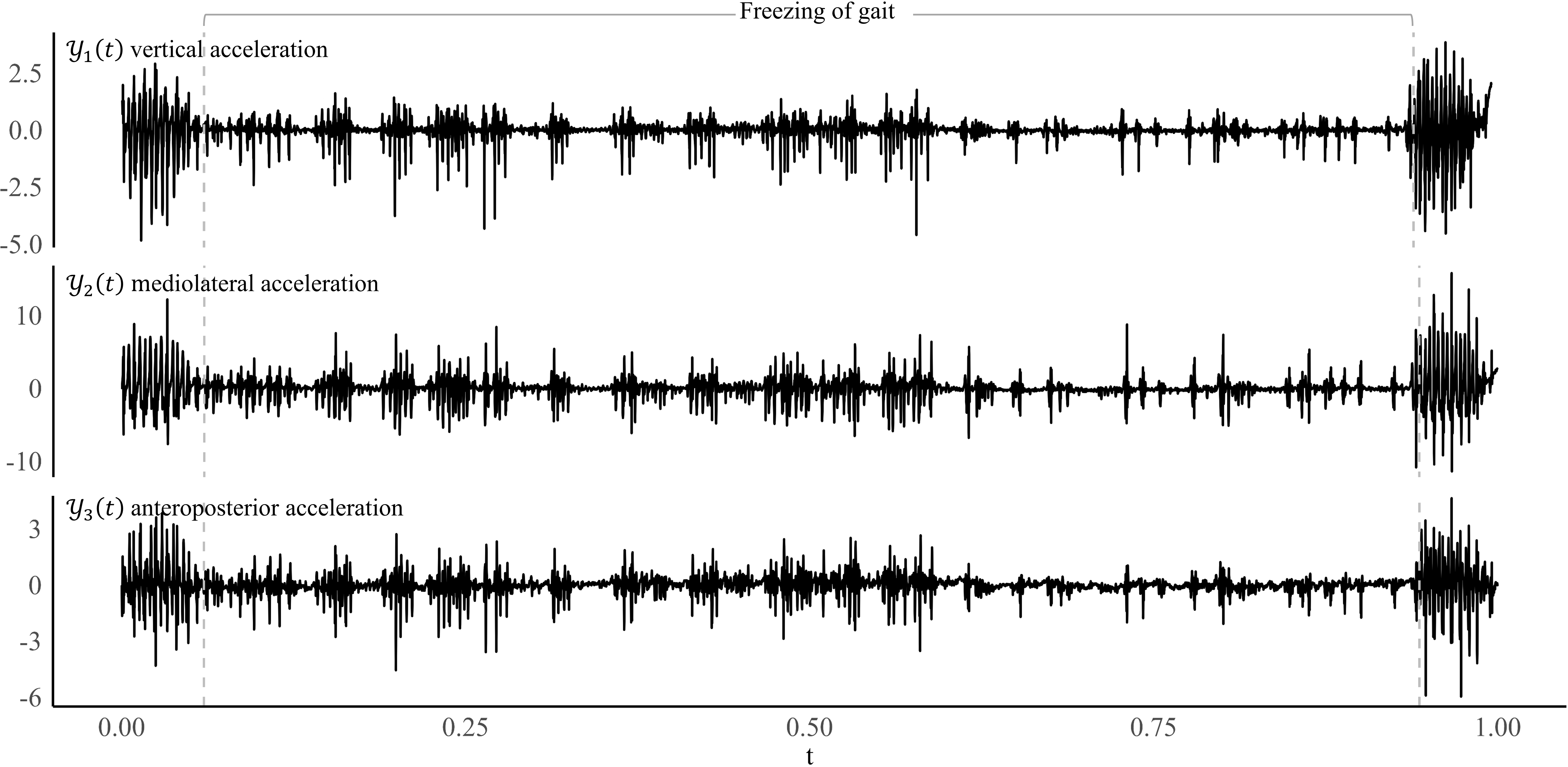}
		\label{fig:MO-fog-real}
		}
\caption{Real world tri-axial acceleration signal data.}
\label{fig:MO-real}
\end{figure}

In the multi-output analysis scenario, the estimation results closely align with those derived from a single-output approach, albeit with minor discrepancies. The fitting and prediction results for the two datasets are illustrated in Figures \ref{fig:MO-jump-ans} and \ref{fig:MO-fog-ans}, respectively. Notably. in the multi-output context, the estimated noise variance was larger than that obtained from a single-output case. This discrepancy can be attributed to the pronounced fluctuations in $y_2(t)$, representing the mediolateral acceleration, compared to the other two axes of accelerations, as evident from the raw data observations. Specifically, for the jumping exercise, the estimated variance is $\sigma^2=0.0166$, while for the FoG dataset, it is $\sigma^2=0.6229$.
\begin{figure}
	\centering
	\subfloat[Jumping exercise segmentation.]{	
		\includegraphics[width=1\linewidth]{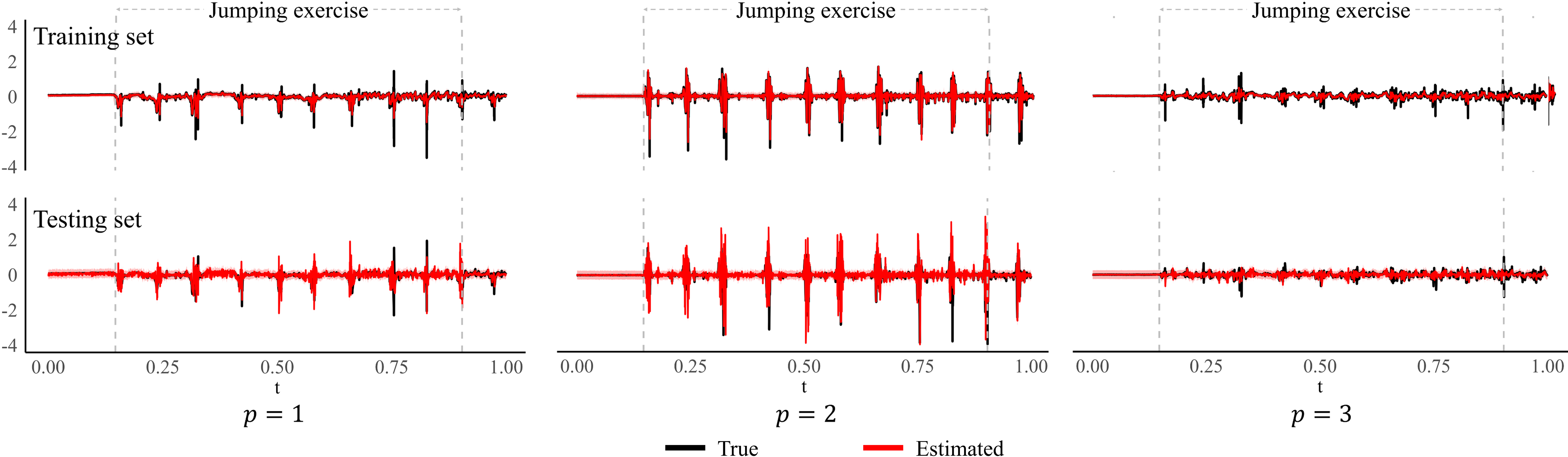}
	\label{fig:MO-jump-ans}
	}\\
	\subfloat[FoG detection.]{
		\includegraphics[width=1\linewidth]{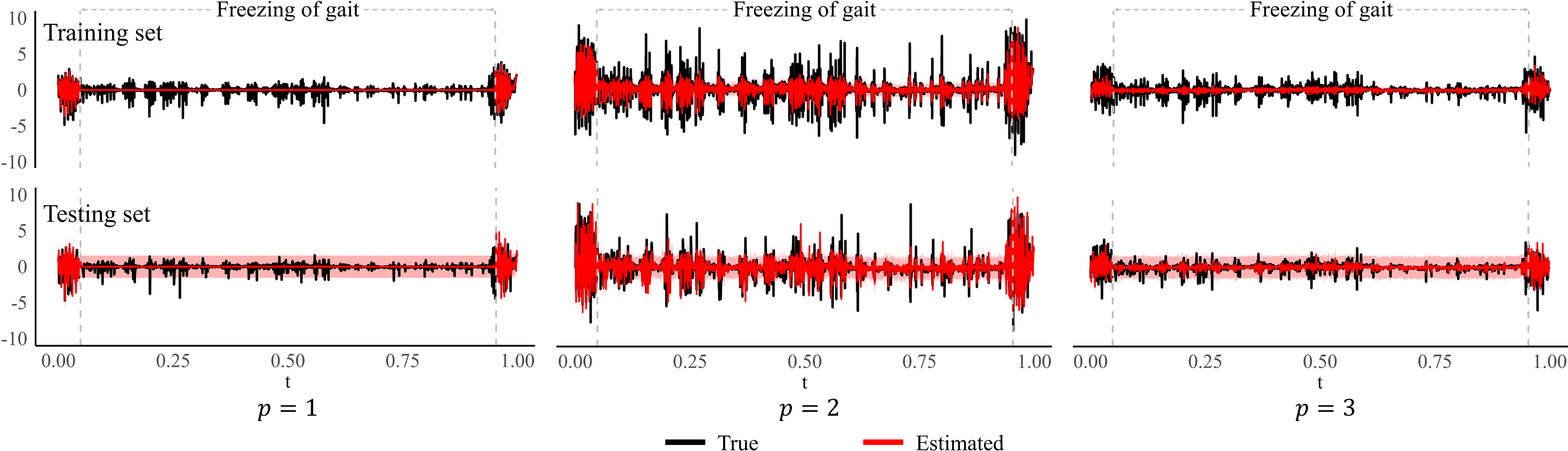} 
		\label{fig:MO-fog-ans}
	}
	\caption{Estimation results for the dataset respectively, with indications of the estimated state transition period. Each column corresponds to the estimation results for one dimension of the multi-output.}
\end{figure}
%
%

Due to the influence of fluctuations in $y_2(t)$, the fitting and prediction errors in this axis are larger than in the others. However, the inclusion of multiple outputs enhances overall information content of the analysis. When specifically evaluating the fitting and prediction of $y_1(t) $ against the results from the single-output scenario, we observe a reduction in error measures, as summarized in Table \ref{table-app}. The jumping exercise duration within the multi-output analysis is segmented from $t = 0.1465$ to $0.9036$, with parameters estimated as $\vesub{\theta}{z} = \{q_{12} = 4.0634, q_{21} = 1.3270\}$. Additionally, the hidden state parameters and classification metrics for the FoG dataset indicate nearly identical results compared to the single-output case, achieving an accuracy of $0.9986$, kappa of $0.9969$, precision of $1$, specificity of $1$, and f1 score of $0.9943$, with parameters estimated as $\vesub{\theta}{z} = \{q_{12} = 8.6381, q_{21} = 1.5997\}$.

\begin{table}[htbp]
	\caption{Fitting and prediction performance results for multi-output modeling applied to the two datasets. Rmse and mae metrics for the training set ($75\%$) and testing set ($25\%$) are reported for each output dimension, where (single) denotes results derived from single-output modeling and this corresponding to the dimension of $p=1$.}
	\label{table-app}
	\centering
	\begin{tabular}{cccccc}
		\toprule
		\multicolumn{2}{c}{Data} & \multicolumn{2}{c}{Training set} & \multicolumn{2}{c}{Testing set} \\ \midrule
		& $p$ & rmse & mae & rmse & mae \\ \cmidrule{2-2}\cmidrule(l){3-6} 
		\multirow{4}{*}{Jumping exercise} & 1 & $\bs{0.0999}$& $\bs{0.0370}$ & $\bs{0.1242}$ & $\bs{0.0656}$ \\
		& 2 & $0.1423$ & $0.0476$ & $0.4803$ & $0.1691$ \\
		& 3 & $0.0909$ & $0.0467$ & $0.1307$ & $0.0705$ \\ 
		& 1 (single) & $0.1198$ & $0.0503$ & $0.1386$ & $0.0714$ \\ \hline
		\multirow{4}{*}{FoG}& 1 & $\bs{0.3444}$ & $\bs{0.1828}$ & $\bs{0.4852}$ & $\bs{0.2298}$ \\
		& 2 & $0.9682$ & $0.5158$ & $1.4171$ & $0.7643$ \\
		& 3 & $0.4716$ & $0.2847$ & $0.6150$ & $0.3537$ \\
		& 1 (single) & $0.3637$ & $0.2005$ & $0.4980$ & $0.2556$ \\
		\bottomrule 
	\end{tabular}
\end{table}

\end{appendices}


\bibliography{bibreference}
\label{lastpage}

\end{document}